\pdfoutput=1

\documentclass[manuscript,screen]{acmart}

\AtBeginDocument{%
  \providecommand\BibTeX{{%
    \normalfont B\kern-0.5em{\scshape i\kern-0.25em b}\kern-0.8em\TeX}}}

\setcopyright{acmcopyright}
\copyrightyear{2022}
\acmYear{2022}
\acmDOI{XXXXXXX.XXXXXXX}

\acmJournal{TKDD}
\acmVolume{00}
\acmNumber{0}
\acmArticle{000}
\acmMonth{0}






\usepackage{mathtools}
\usepackage{tikz}
\usepackage{pgfplots}
\pgfplotsset{compat=newest}
\usepgfplotslibrary{groupplots}
\usetikzlibrary{arrows.meta}
\usepackage{subcaption}
\captionsetup{subrefformat=parens}
\usepackage{bm}
\usepackage{mathmacros}
\usepackage{wrapfig}
\usepackage{cancel}
\usepackage[ruled, linesnumbered, noline]{algorithm2e}
\SetStartEndCondition{ }{}{}%
\SetKwProg{Fn}{def}{\string:}{}
\SetKwFunction{Range}{range}
\SetKw{KwTo}{in}\SetKwFor{For}{for}{\string:}{}%
\SetKwIF{If}{ElseIf}{Else}{if}{:}{elif}{else:}{}%
\SetKwFor{While}{while}{:}{fintq}%

\AlgoDontDisplayBlockMarkers \SetAlgoNoEnd \SetAlgoNoLine%
\SetKwComment{Comment}{/* }{ */}

\definecolor{niceOrange}{rgb}{1, 0.6, 0}
\definecolor{niceOrange2}{rgb}{1, 0.49, 0}
\definecolor{kthGreen}{RGB}{98, 146, 46}
\definecolor{kthYellow}{RGB}{250,185, 25}
\definecolor{burgundy}{rgb}{0.5, 0.0, 0.13}

\begin{document}

\title{\textsc{Digraphwave}: Scalable Extraction of Structural Node Embeddings via Diffusion on Directed Graphs}

\author{Ciwan Ceylan}

\affiliation{%
  \institution{KTH Royal Institute of Technology}
  \city{Stockholm}
  \country{Sweden}
}
\affiliation{%
  \institution{SEB Group}
  \city{Stockholm}
  \country{Sweden}
}
\email{ciwan@kth.se}
\orcid{0000-0002-8044-4773}

\author{Kambiz Ghoorchian}
\affiliation{%
  \institution{SEB Group}
  \city{Stockholm}
  \country{Sweden}
}
\email{kambiz.ghoorchian@seb.se}
\orcid{0000-0003-1007-8533}

\author{Danica Kragic}
\affiliation{%
  \institution{KTH Royal Institute of Technology}
  \city{Stockholm}
  \country{Sweden}
}
\email{dani@kth.se}
\orcid{0000-0003-2965-2953}


\begin{abstract}

Structural node embeddings, vectors capturing local connectivity information for each node in a graph, have many applications in data mining and machine learning, e.g., network alignment and node classification, clustering and anomaly detection.
For the analysis of directed graphs, e.g., transactions graphs, communication networks and social networks, the capability to capture directional information in the structural node embeddings is highly desirable, as is scalability of the embedding extraction method.
Most existing methods are nevertheless only designed for undirected graph.
Therefore, we present \textsc{Digraphwave} -- a scalable algorithm for extracting structural node embeddings on directed graphs.
The \textsc{Digraphwave} embeddings consist of compressed diffusion pattern signatures, which are twice enhanced to increase their discriminate capacity.
By proving a lower bound on the heat contained in the local vicinity of a diffusion initialization node, theoretically justified diffusion timescale values are established, and \textsc{Digraphwave} is left with only two easy-to-interpret hyperparameters: the embedding dimension and a neighbourhood resolution specifier.
In our experiments, the two embedding enhancements, named transposition and aggregation, are shown to lead to a significant increase in macro F1 score for classifying automorphic identities, with \textsc{Digraphwave} outperforming all other structural embedding baselines.
Moreover, \textsc{Digraphwave} either outperforms or matches the performance of all baselines on real graph datasets, displaying a particularly large  performance gain in a network alignment task, while also being scalable to graphs with millions of nodes and edges, running up to $30$x faster than a previous diffusion pattern based method and with a fraction of the memory consumption. 

\end{abstract}

\begin{CCSXML}
<ccs2012>
   <concept>
       <concept_id>10002951.10003227.10003351</concept_id>
       <concept_desc>Information systems~Data mining</concept_desc>
       <concept_significance>500</concept_significance>
       </concept>
   <concept>
       <concept_id>10010147.10010257.10010293.10010319</concept_id>
       <concept_desc>Computing methodologies~Learning latent representations</concept_desc>
       <concept_significance>500</concept_significance>
       </concept>
 </ccs2012>
\end{CCSXML}

\ccsdesc[500]{Information systems~Data mining}
\ccsdesc[500]{Computing methodologies~Learning latent representations}

\keywords{stuctural node embedding, directed graphs, graph mining, network alignment, node classification, node role, Laplacian, matrix exponential, diffusion}


\maketitle

\section{Introduction}

Node embeddings, i.e., vector representations of each node in a graph, are ubiquitous for solving generic machine learning and data mining tasks on graph structured data,
e.g., node classification \cite{henderson_its_2011, LINE_15, grover_node2vec_2016, donnat_learning_2018, netsmf_19, qiu_gcc_2020}, clustering \cite{henderson_rolx_2012, score_emb_clustering, donnat_learning_2018, node2vec_community_detection_2020}, regression \cite{regression_14,ceylan21a} and anomaly detection \cite{akoglu_oddball:_2010, akoglu_graph-based_2014, anom_det_21}.
The information captured in the embeddings is determined by the extraction method used to obtain them, which exist in a large variety.
One may narrow down available methods by distinguishing between those aimed at graphs with additional node attributes, which are then often incorporated into the embeddings, versus methods aimed at plain graphs, i.e., graphs without additional node attributes.
In this work, we focus on extracting node embeddings for plain graphs, and we refer the reader to other works on extracting node embeddings on attributed graphs, e.g., \cite{kipf2016semi, hamilton2017inductive, velickovic2018graph, ahmed2018learning, HONE_18, digraph_20, zhang2021magnet}.

For plain graphs, node embeddings can generally be divided into two complementary types: positional embeddings and structural embeddings  \cite{rossi_embeddings_20, zhu_2021_proximity_is_all_you_need}.
Methods for extracting positional embeddings follow the axiom that nodes in proximity, or which are well-connected, should have similar embeddings. 
Extraction techniques involve skip-gram models using random walks \cite{perozzi_deepwalk_2014, grover_node2vec_2016}, matrix factorization of similarity matrices capturing multi-hop similarities \cite{qiu_network_2018}, explicit modelling of first and second order proximity \cite{LINE_15} and mixed approaches \cite{netsmf_19}.
These embeddings are particularly useful for machine learning tasks aimed at inferring node properties which are homophilous, i.e., when connected nodes tend to share similar properties, and for link prediction \cite{grover_node2vec_2016, zhang2018link}.

However, for inference of node properties which are not homophilous, or for applications like network alignment \cite{regal_xnetmf_18} and node similarity search \cite{qiu_gcc_2020} where nodes in disconnected graphs are to be compared, structural embeddings are more suitable.
Structural embeddings aim to capture the local connectivity patterns around each node, e.g., a centre of a star pattern, leaf node pattern or bridge node pattern.
Such patterns are generic to graphs, which is why structural embeddings can be compared between disconnected components of the same graph, or even different graph datasets, something which would be meaningless using positional embeddings without some additional processing.

The node degrees are the most basic structural features, and some structural embedding extraction methods \cite{henderson_its_2011, regal_xnetmf_18, ember_19} use these as a building blocks to construct more advanced handcrafted features, e.g., via recursive feature concatenation or by building degree histograms over k-hop neighbourhoods. 
Additionally, dimensionality reduction via matrix factorization is commonly applied to obtain the final embeddings \cite{henderson_rolx_2012, regal_xnetmf_18, ember_19}.
Other approaches combine handcrafted similarities with self-supervised learning \cite{ribeiro_struc2vec_2017}, avoid handcrafted features by pretraining graph neural networks using contrastive learning \cite{qiu_gcc_2020}, or construct embeddings using the heat kernel \cite{donnat_learning_2018}. 

Unfortunately, many existing structural embedding extraction algorithms are only defined for undirected graphs \cite{qiu_gcc_2020, ribeiro_struc2vec_2017, donnat_learning_2018, drne18, nikolentzos_19, riwalk19}.
While it is always possible to treat a directed graph as undirected to extract the embeddings, doing so is undesirable if the direction information of the edges is important, as it commonly is for distinguishing follower and followee in social networks, citer and citee in citation networks and sender and receiver in a communication or transaction network.
Furthermore, due to the large size of these types of graph datasets, scalability, both in terms of space and time complexity, is always important for embedding extraction methods.
This motivates our work in which we present \textsc{Digraphwave}, a structural node embedding extraction algorithm for directed graphs, possibly weighted, which scales to graphs with millions of nodes and edges.

\begin{figure}[ht]
    \centering
    \includegraphics[width=\textwidth]{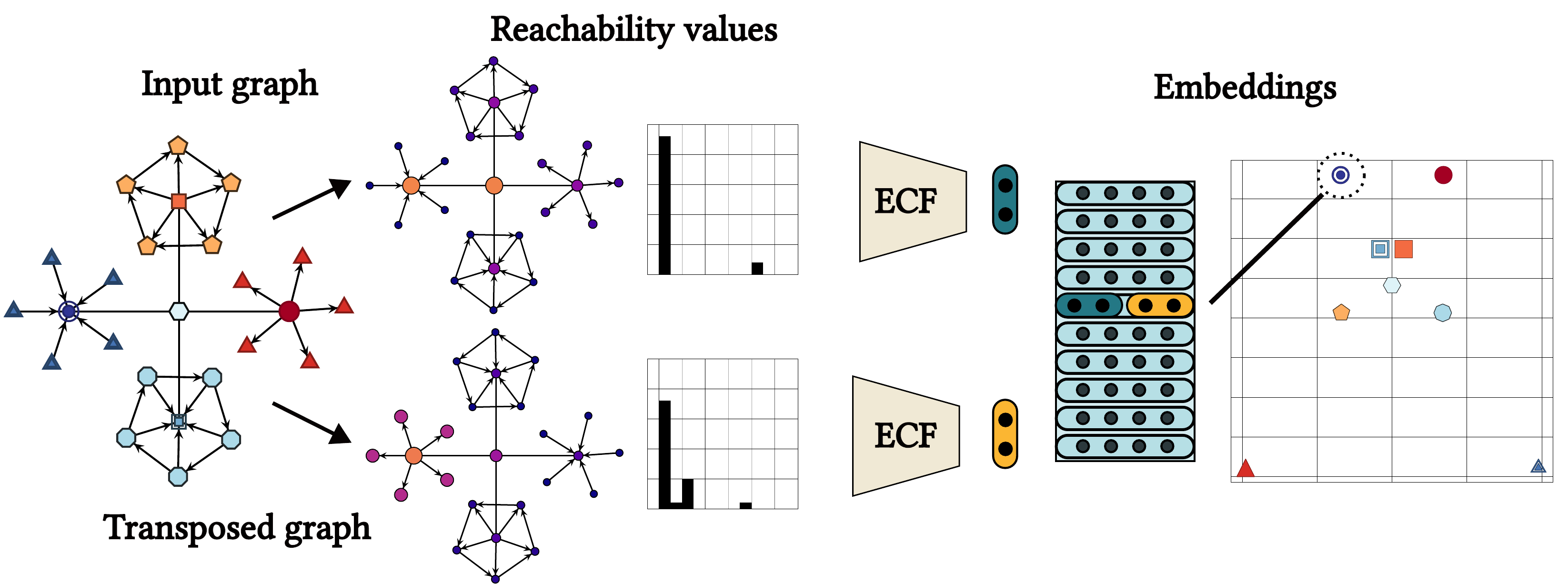}
    \caption{Overview of the \textsc{Digraphwave} structural embedding extraction algorithm. Reachability values are computed via diffusion, and these values are compressed into vectors using the empirical characteristic function. With the transposition enhancement, this process is repeated for the transposed graph and the extracted embeddings are concatenated.
    The final scatter plot shows actual \textsc{Digraphwave} embeddings extracted from the input graph projected into 2D using PCA.
    The colours and shapes of the scatter points and input graph nodes correspond to automorphic equivalence identities. Edges without arrowheads mean that both edges directions are present.
    }
    \label{fig:digraphwave}
    \Description{A set of figures aligned horizontally depicting the Digraphwave embeddding extraction process.}
\end{figure}

The \textsc{Digraphwave} algorithm consists of two parts: the extraction of the core embeddings, and two enhancements.
The extraction of the core embeddings is itself composed of two steps: (1) definition and calculation of node reachability value signatures via the heat diffusion equation, and (2), compression of these signatures into fixed-sized vector embeddings using the empirical characteristic function. 
The two enhancements, transposition and aggregation, address two observed issues with the representational capacity of the core embeddings.
Specifically, the core embeddings are insufficient to distinguish the local structure of nodes with no out-edges, and similarly, some nodes with distinct local structures share similar symmetries, also making them indistinguishable using the core embeddings.
The transposition enhancement resolves the first issue by extracting core embeddings both from the original graph and the transpose graph, i.e., the graph with all edge directions reversed, and the aggregation enhancement addresses the second issue via aggregations of each node's neighbours' embeddings.
The algorithm is visualized in Figure \ref{fig:digraphwave}, albeit without the aggregation enhancement.
As visible in the rightmost plot, \textsc{Digraphwave} produces embeddings where the different automorphic node identities are distinguishable.

The mentioned reachability values are calculated on multiple neighbourhood resolutions by evaluating the diffusion process on a range of timescales. 
As these timescales should neither be too small nor too large, we develop a theoretical analysis of the diffusion process on directed graphs to aid in the choice of timescale values, specifically proving a lower bound on the diffusion heat contained in the local vicinity of an initialization node.
Using this analysis, the choice is reduced to a single integer hyperparameter, $R$, which can be interpreted as an effective radius of the largest neighbourhood to consider for the extraction of the embeddings.
Empirically, we find $R=2$ or $R=3$ to work well for real-world graphs, which commonly have small characteristic path lengths \cite{watts_collective_1998}.

Computation of the reachability values is the main computational challenge to overcome to apply \textsc{Digraphwave} to large graphs.
This is because it involves the matrix exponential, which is known to be numerically challenging to compute \cite{moler_nineteen_2003}, and for which naive implementations will suffer from quadratic space complexity.
We address this challenge by (1) employing truncated Taylor series approximation with a numerical error bound, and (2) computing the embeddings in batches to circumvent the quadratic space complexity.
The resulting implementation is easy to parallelize over multiple CPUs and GPUs, and thereby scales to large graphs.
 
To demonstrate the ability of the \textsc{Digraphwave} embeddings to capture various local connectivity patterns, we construct a synthetic dataset of directed graph with known automorphic equivalence relationships.
Our dataset is based on a previously suggested structural embedding benchmark for undirected graphs \cite{junchen_2021}.
Furthermore, real-world graph datasets are used to show the practical applicability of \textsc{Digraphwave} for node classification and network alignment tasks.
In all the experiments, \textsc{Digraphwave} outperforms recent structural embedding methods for directed graphs, often by a wide margin. 
A new implementation \cite{refex_reimplementation22} of an older method, \textsc{ReFeX} \cite{henderson_its_2011}, does match the performance of \textsc{Digraphwave} in some experiment settings, which is surprising given its simplicity and that it has been ignored in recent benchmarks \cite{qiu_gcc_2020, donnat_learning_2018, junchen_2021, nikolentzos_19, ribeiro_struc2vec_2017, regal_xnetmf_18, riwalk19, ember_19}.

The main contributions of our paper can be summarized as (1) the \textsc{Digraphwave} algorithm for extracting structural node embeddings for directed graphs with a scalable implementation released via our project page\footnote{\url{https://ciwanceylan.github.io/digraphwave-project/}}, (2) a lower bound on the diffusion heat contained in the $R$-hop neighbourhood of initialization node, which is used to rigorously justify the hyperparameter values of \textsc{Digraphwave}, (3) a benchmark dataset of directed graphs with node labels based on automorphic identities, and (4) empirical evidence of the quality of the \textsc{Digraphwave} embeddings compared to other embedding methods.

The rest of the paper is structured as follows. 
Mathematical notation and background concepts related to diffusion on graphs are introduced in Section \ref{sec:diffusion_on_graphs}, followed by the detailed presentation of the \textsc{Digraphwave} algorithm in Section \ref{sec:digraphwave}.
Section \ref{sec:theory_contribution} is devoted to the theoretical analysis used to determine appropriate timescale hyperparameter values, and Section \ref{sec:related_works} treats the relation between \textsc{Digraphwave} and other structural embedding methods for directed graphs.
Section \ref{sec:experiments} contains all details of the performed experiments and their results.
Section \ref{sec:conclusion_and_outlook} concludes the paper with an outlook into future research directions.
Several minor and major proofs are found in the Appendix \ref{app:rw_intepretation}-\ref{app:proofs}, as well as some additional experiments results in Appendix \ref{app:additional_results}.

\section{Diffusion on Graphs} \label{sec:diffusion_on_graphs}

This section establishes background information on diffusion on graphs, starting with mathematical definitions and notation in Section \ref{sec:notation_and_definitions}, see Table \ref{tab:notation} for an overview.
In Section \ref{sec:diffusion_intro}, the diffusion equation on graphs is introduced together with important concepts and properties, and in Section \ref{sec:numerical_computation_of_mat_exp} the numerical computation of the matrix exponential, the general solution to the diffusion equation, is discussed.

\subsection{Notation and Definitions for Graph Concepts} \label{sec:notation_and_definitions}

\begin{table}[t]
    \centering
    \caption{Major symbols and their definitions.}
    \label{tab:notation}
    \begin{tabular}{p{0.17\textwidth}p{0.6\textwidth}}
    \toprule
    \textbf{Symbol} & \textbf{Definition} \\
    \midrule
    $G = (V, E)$ & A directed graph without self-loops with node set $V$ and edge set $E$, where an edge $\edgeji$ has weight $w_{\edgeji} > 0$. \\
    $n=|V|$ & Number of nodes in $G$. \\
    $m=|E|$ & Number of edges in $G$. \\
    $\weight_{\edgeji}$ & The weight of edge $\edgeji$. \\
    $\adj$ & The $n \times n$ weighted adjacency matrix of $G$, with the element at row $i$ and column $j$ corresponding to the edge $\edgeji$. That is $A_{ij} = A_{\edgeji}$. \\
    $[\adj]_{ij} = A_{ij}$ & Bracket notation for an element in a matrix or vector. \\
    $\degD$  & The $n \times n$ diagonal weighted out-degree matrix, see \eqref{eq:adj_deg_def}.\\
    $\Ncal_{\text{out}}(j)$ & The out-neighbourhood of node $j$, i.e., $\{i | \edgeji \in E\}$. \\
    $\Ncal_{\text{in}}(j)$ & The in-neighbourhood of node $j$, i.e., $\{i | \edgeij \in E\}$. \\
    $\Ncal(j)$ & The joint neighbourhood of node $j$, i.e., $\Ncal_{\text{out}}(j) \cup \Ncal_{\text{in}}(j)$.\\
    $\lap = \degD - \adj$ & The $n \times n$ directed graph Laplacian. \\
    $\eyestarb$ & The $n \times n$ identity matrix but with elements corresponding to nodes with no out-neighbours set to zero, see \eqref{eq:star_defs}. \\
    $\degDstarb^{-1}$  & The “inverse” of $\degD$ but with elements corresponding to nodes with no out-neighbours set to zero, see \eqref{eq:star_defs}.\\
    $\lapnorm = \eyestarb - \adj \degDstarb^{-1}$ & The $n \times n$ out-degree normalized Laplacian. \\
    $\weightnorm_{\edgeji}$ & The out-degree normalized weight of edge $\edgeji$. \\
    $\alpha_{\edgeji}$ & Element $ij$ in $\adj \degDstarb^{-1}$. The probability to transition from $j$ to $i$ under a random walk on the graph. See \eqref{eq:alpha_values}. \\
    $\ubold(\tau)$ & A length $n$ vector and the solution to the differential equation \eqref{eq:advection_diffusion_diff_eq} for a given initial condition. \\
    $\bm{\Psi}(\tau) = \expmlap$ & The $n \times n$ reachability value matrix. \\
    $\heatdist$ & Column $j$ of $\bm{\Psi}(\tau)$ and the solution to the diffusion equation \eqref{eq:advection_diffusion_diff_eq} at time $\tau$ with all heat initialized at node $j$. \\
    $\bm{\chi}_j$ & The \textsc{Digraphwave} embedding for node $j$. \\
    $R$ & The main hyperparameter of \textsc{Digraphwave}. Determines the largest scale used for the structural embeddings in terms of numbers of hops. See Sections \ref{sec:digw:reachability_values} and \ref{sec:theory_contribution}. \\
    $\spdist(j, i)$ & Directed shortest path distance from node $j$ to node $i$.\\
    $\Ncore(j, r)$ & The set of nodes within $r$-hops from $j$, i.e., $\{j | \spdist(j, i) \leq r\}$. \\
    $\Nperiph(j, r)$ & The set of nodes which cannot be reached with $r$-hops from $j$, i.e., $V -  \Ncore(j, r)$. \\
    $\mathcal{C}(j, r)$ & The set of nodes exactly $r$-hops from $j$, i.e., $\{i | \spdist(j, i) = r\}$. \\
    \midrule
    \end{tabular}
\end{table}

Let $G=(V, E)$ be a directed graph without self-loops, with $n=|V|$ and $m=|E|$ edges.
The graph may have positive, nonzero, edge weights, i.e., $w_{\edgeji} \in \Rreal_{>0}$ denotes the weight of the edge $\edgeji \in E$.
For an unweighted graph, $w_{\edgeji} = 1$ for all edges.
We define the weighted adjacency matrix $\adj$ and the diagonal weighted out-degree matrix $\degD$ as
\begin{align}
    A_{ij} = A_{\edgeji} =
    \begin{cases}
    w_{\edgeji} & \text{if } \edgeji \in E \\
    0      & \text{otherwise}
    \end{cases},
    & &&
     D_{jj} = \sum_{i \in V} A_{\edgeji} = \sum_{i \in \Ncal_{\text{out}}(j)} w_{\edgeji}. \label{eq:adj_deg_def}
\end{align}
Note that $A_{ij}$, the element of $\adj$ in row $i$ and column $j$, corresponds to an edge $\edgeji$.
In \eqref{eq:adj_deg_def}, we used the notation $\Ncal_{\text{out}}(j) = \{i | \edgeji \in E\}$ for the out-neighbourhood of the node $i$, and we similarly define $\Ncal_{\text{in}}(j) = \{i | \edgeij \in E\}$ as the in-neighbourhood, and $\Ncal(j) = \Ncal_{\text{out}}(j) \cup \Ncal_{\text{in}}(j)$ as the joint neighbourhood.

The weighted and directed graph Laplacian is a central object for studying diffusion on graphs, and we define it in accordance with previous work as $\lap = \degD - \adj$ \cite{ bauer_normalized_2012, veerman_primer_2020}.
This work focuses on the out-degree normalized version of the Laplacian, since it has more convenient numerical stability properties than its unnormalized counterpart.
We define the out-degree normalized graph Laplacian as
\begin{equation} \label{eq:normalized laplacian}
    \lapnorm = \lap \degDstarb^{-1} = \eyestarb - \adj \degDstarb^{-1},
\end{equation}
where $\eyestarb$ and $\degDstarb^{-1}$ are defined as
\begin{align}
    \eyestar_{ij} &= \begin{cases}
    1 & \text{if $i = j$ and $|\Ncal_{\text{out}}(j)| > 0$}\\
    0 & \text{otherwise},
    \end{cases} 
    &&
    \degDstar^{-1}_{ij} = \begin{cases}
    \frac{1}{D_{jj}} & \text{if $i = j$ and $|\Ncal_{\text{out}}(j)| > 0$}\\
    0 & \text{otherwise},
    \end{cases} 
    \label{eq:star_defs}
\end{align}
i.e., diagonal elements corresponding to nodes with no out-neighbours are set to zero.
The elements of $\lapnorm$ can thus be expressed as
\begin{equation} \label{eq:laplacian_elements}
    \mathcal{L}_{ij} = \mathcal{L}_{\edgeji} = \begin{cases}
        1 & \text{if $i = j$ and $|\Ncal_{\text{out}}(j)| > 0$} \\
        - \weightnorm_{\edgeji}  & \text{if $i \neq j$ and $\edgeji \in E$} \\
        0 & \text{otherwise},
    \end{cases}
\end{equation}
where $\weightnorm_{\edgeji}$ are out-degree normalized weights,
\begin{align*}
    \weightnorm_{\edgeji} &= \frac{\weight_{\edgeji}}{\displaystyle \sum_{k \in \Ncal_{\text{out}}(j)} \weight_{\edgeuv{j}{k}}} = \frac{\weight_{\edgeji}}{D_{jj}}.
\end{align*}
Another related and useful definition is
\begin{align}
    \Alphab &= \adj \degDstarb^{-1},
    &&
     \alpha_{\edgeji} = [ \Alphab ]_{ij} = 
    \begin{cases}
        \weightnorm_{\edgeji} & \text{if $\edgeji \in E$} \\
        0 & \text{otherwise,}
    \end{cases}\label{eq:alpha_values}
\end{align}
where each $\alpha_{\edgeji} $ is interpreted as the transition probability from node $j$ to node $i$ under a random-walk on $G$.

\subsection{The Diffusion Equation and its Properties} \label{sec:diffusion_intro}

The heat diffusion equation on $G$ with an initial heat distribution $\mathbf{b} \in \Rreal^n_{+}$ is defined as: 
\begin{align} \label{eq:advection_diffusion_diff_eq}
        \frac{\text{d}\ubold}{\text{d} \tau} (\tau) &= -\lapnorm \ubold (\tau) = (\Alphab - \eyestarb) \ubold &&
        \ubold (\tau=0) = \mathbf{b}, \\
        \label{eq:advection_diffusion_diff_eq_scalar}
        \frac{\text{d}u_i}{\text{d} \tau} (\tau) &= -\sum_{j \in V} \mathcal{L}_{\edgeji} u_j (\tau)=\sum_{j \in V} (\alpha_{\edgeji} - \eyestar_{ij})u_j (\tau) &&
        u_i (\tau=0) = b_i,
\end{align}
where $\ubold (\tau)$ is the heat distribution over all nodes in $G$ at time $\tau$.
The Laplacian $\lapnorm$ has the key property that the sum over its rows is zero,
\begin{align*}
    \sum_{i \in V} \mathcal{L}_{ij} =
    \sum_{i \in V} \mathcal{L}_{\edgeji}
    &= \sum_{i \in V} \eyestar_{ij} -  \alpha_{\edgeji} 
    = \eyestar_{jj} - \sum_{i \in V}  \alpha_{\edgeji}
    = 
    \begin{cases}
        0 & \text{if $|\Ncal_{\text{out}}(j)| = 0$}\\
        1 - \frac{\sum_{i \in \Ncal_{\text{out}}(j)} \weight_{ij}}{\sum_{k \in \Ncal_{\text{out}}(j)}\weight_{kj}}  & \text{otherwise}
    \end{cases}
    = 0.
\end{align*}
This properly directly implies a conservation property for the heat distribution, namely that the total heat, i.e., the sum of elements $u_i(\tau)$, is constant with respect to the diffusion time $\tau$:
\begin{align*}
    \frac{\mathrm{d}}{\mathrm{d}\tau}\sum_{i \in V} u_i(\tau)  
    &=  \sum_{i \in V} \frac{\mathrm{d}u_i}{\mathrm{d}\tau}(\tau) 
    = - \sum_{i \in V} \sum_{j \in V} \mathcal{L}_{ij} u_j(\tau)
    = -\sum_{j \in V} u_j(\tau) \sum_{i \in V} \mathcal{L}_{ij}
    = 0.
\end{align*}
Consequently, the sum of the elements of the initial condition is preserved throughout the diffusion process,
\begin{equation} \label{eq:heat_conservation}
    \sum_{i \in V} u_i(\tau)  = \sum_{i \in V} u_i(0) =  \sum_{i \in V} b_i .
\end{equation}

The general solution to the diffusion equation \eqref{eq:advection_diffusion_diff_eq} is given by the matrix exponential $\expmlap$:
\begin{equation} \label{eq:digraphwave:diff_solution}
    \ubold (\tau) = \expmlap \mathbf{b} = \bm{\Psi}(\tau) \mathbf{b},
\end{equation}
where we introduce $\bm{\Psi}(\tau) = \expmlap$ as a more compact notation.
Following \cite{Higham08}, the matrix exponential itself can be defined for any complex $n \times n$ matrix $\mathbf{X}$ as the Taylor polynomial 
\begin{equation} \label{eq:exp_taylor}
    \exp (\mathbf{X}) = \eye + \mathbf{X} + \frac{\mathbf{X}^2}{2!} + \frac{\mathbf{X}^3}{3!} + \cdots = \sum_{k=0}^\infty \frac{\mathbf{X}^k}{k!}.
\end{equation}
An important property of $\expmlap$ is that all its elements are nonnegative, $[\expmlap]_{ij} \geq 0$.
This follows from the fact that ${-\lapnorm}$ is an \emph{essentially nonnegative} matrix, i.e., its off-diagonal elements are nonnegative, and Theorem 10.29 in \cite{Higham08}.
Consequently, if a nonnegative initial condition is used, $b_i \geq 0$, we also have $u_i(\tau) \geq 0$.

By combining the nonnegativity property with the conservation property \eqref{eq:heat_conservation}, and by using a one-hot initial condition,
\begin{equation*}
    b_i =
    \begin{cases}
        1 & \text{if $i=j$}\\
        0 & \text{otherwise,}
    \end{cases}
\end{equation*}
two intuitive interpretations of the elements of the heat distribution emerge.
The first interpretation is that each $u_i(\tau)$ is the proportion of heat which has been transferred from the initial node $j$ to node $i$ at time $\tau$, which is why we will refer to elements $u_i(\tau)$ as \emph{heat coefficients}.
The second interpretation is that $u_i(\tau)$ is the probability of a random walk process initialized at the node $j$ to reside at the node $i$ after a number of steps sampled from a Poisson distribution with parameter $\tau$, see the Appendix \ref{app:rw_intepretation} for the mathematics of this interpretation.
This second interpretation motivates the use of the term \emph{reachability values} for the elements $u_i(\tau)$.

We will use the terms heat coefficients and reachability values interchangeably in this work, depending on which interpretation the best suites the context.
Moreover, since the column $j$ of $\bm{\Psi}(\tau)$ exactly corresponds to the heat distribution $\ubold (\tau)$ resulting from a one-hot initialization at the node $j$, the terminology of heat coefficients and reachability values will likewise be used for elements of the matrix exponential $\bm{\Psi}(\tau)$ and its columns $\heatdist = \bm{\Psi}(\tau) \mathbf{b}$.

A final property of the diffusion equation relevant for the design of \textsc{Digraphwave} is the existence of stationary solutions and the behaviour of $\ubold (\tau)$ for large $\tau$.
In the undirected case, any heat distributions proportional to the degree of each node in a weakly connected component of the graph is stationary, see Appendix \ref{app:stationary} for a short proof.
Detailed analysis of stationary solutions for directed graphs requires additional theoretical machinery, and is beyond the scope of this work.
We refer interested readers to other sources \cite{bauer_normalized_2012, veerman_primer_2020}.
Important for this work is that the heat will eventually get trapped in sinks made up from strongly connected components in the graph.


\subsection{Numerical Computation of the Matrix Exponential} \label{sec:numerical_computation_of_mat_exp}

Numerical computation of the matrix exponential is a well-studied subject and there exist several methods, both general and for matrices with specific characteristics \cite{moler_nineteen_2003}.
For nondefective matrices, the exponential can be computed accurately via the eigenvalue decomposition,
\begin{equation} \label{eq:wavelets_eigen}
    \exp(\mathbf{X}) = \mathbf{U} \exp (\Lambda) \mathbf{U}^T
\end{equation}
where $\mathbf{U} \Lambda \mathbf{U}^T = \mathbf{X}$ is the eigenvalue decomposition of $\mathbf{X}$.
The Laplacian matrix for directed graphs, however, may be defective, and thus the eigenvalue decomposition is not guaranteed to exist or be numerically unstable.
While alternative decomposition methods, e.g., Jordan canonical form or Schur, can be used \cite{moler_nineteen_2003, horn_johnson_2012}, decomposition methods generally fail to scale to large graphs since the outer matrix factors, i.e., $\mathbf{U}$, are dense $n \times n$ matrices and can thus not be stored in the computer memory.
Quadratic space complexity is not only an issue for decomposition methods.
It is a general difficulty caused by $\exp(\mathbf{X})$ typically being dense even for sparse $\mathbf{X}$, meaning that any method attempting to compute the entire matrix $\exp(\mathbf{X})$ at once will not be scalable.
In Section \ref{sec:digw:reachability_values_computation}, we address this issue via  batch computation.

Batched computation of $\exp(\mathbf{X})$ can be implemented via series approximation, for which there exist different choices.
Chebyshev series approximation have previously been used in the context of machine learning on graphs \cite{Defferrard16, donnat_learning_2018}.
This is a good choice for matrices with eigenvalues constrained to an interval on the real line, e.g., $[0,2]$ for the normalized Laplacian of an undirected graph \cite{chung1997spectral, von_luxburg_tutorial_2007}, since Chebyshev approximation can be made “near-best” on such intervals \cite{trefethen_approximation_2019}.
However, for matrices with eigenvalues containing imaginary components, the Chebyshev approximation error is not guaranteed to converge to zero \cite{moler_nineteen_2003}, which is the case for the Laplacian matrices of directed graphs.

Faber series approximation \cite{ellacott_83} is a natural replacement, since they generalize the near-minimax property of Chebyshev approximation from an interval on the real line to a general elliptical region in the complex plane \cite{geddes_78}.
Faber series approximation have also previously been used to compute the solution to the directed diffusion equation on a 2D lattice \cite{bergamaschi_efficient_2003}.
However, to use the Faber series approximation, one first needs to find an ellipse which bounds the eigenvalues of the matrix in the complex plane, e.g., by computing the eigenvalues with the largest real and imaginary parts respectively, which, in our experience, becomes a computational bottleneck for large matrices.

Explicit computation of eigenvalues can be avoided by instead using a general bound.
Specifically, the eigenvalues of $\lapnorm$ for any graph lie in a unit circle in the complex plane centred at $1$ \cite{bauer_normalized_2012, Egerstedt}.
Applying this bound, Faber series approximation is reduced to Taylor series approximation \cite{geddes_78}.
Thus, while \textsc{Digraphwave} can be implemented using Faber series in general, we only consider truncated Taylor series approximation in this work.

\begin{figure}[htp]
\centering
\begin{subfigure}{0.5\linewidth}
    \centering
    \includegraphics[width=0.88\linewidth]{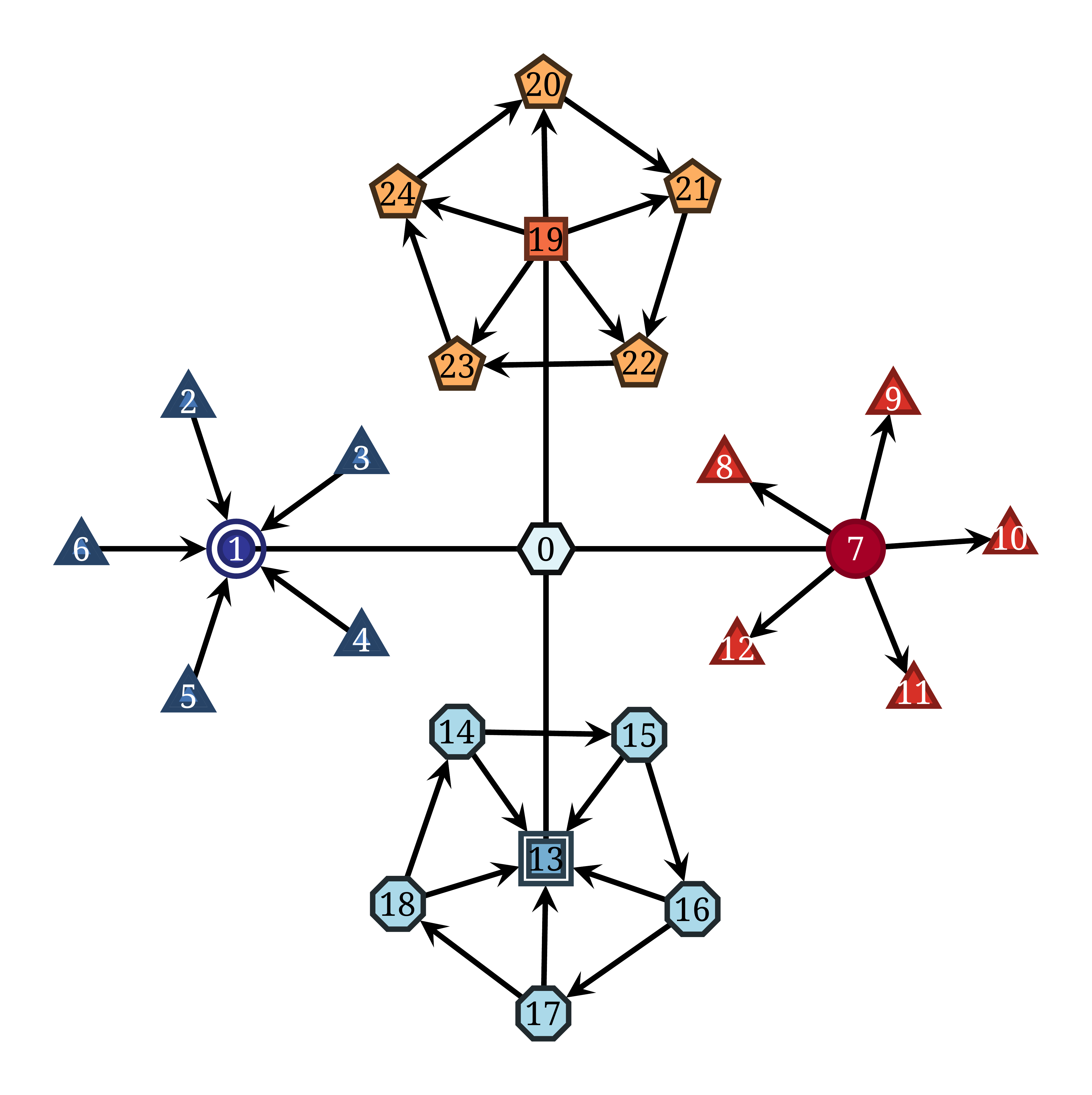}
    \caption{Example graph}
    \label{fig:example_graph}
\end{subfigure}
~
\begin{subfigure}{0.5\linewidth}
    \centering
    \includegraphics[width=0.88\linewidth]{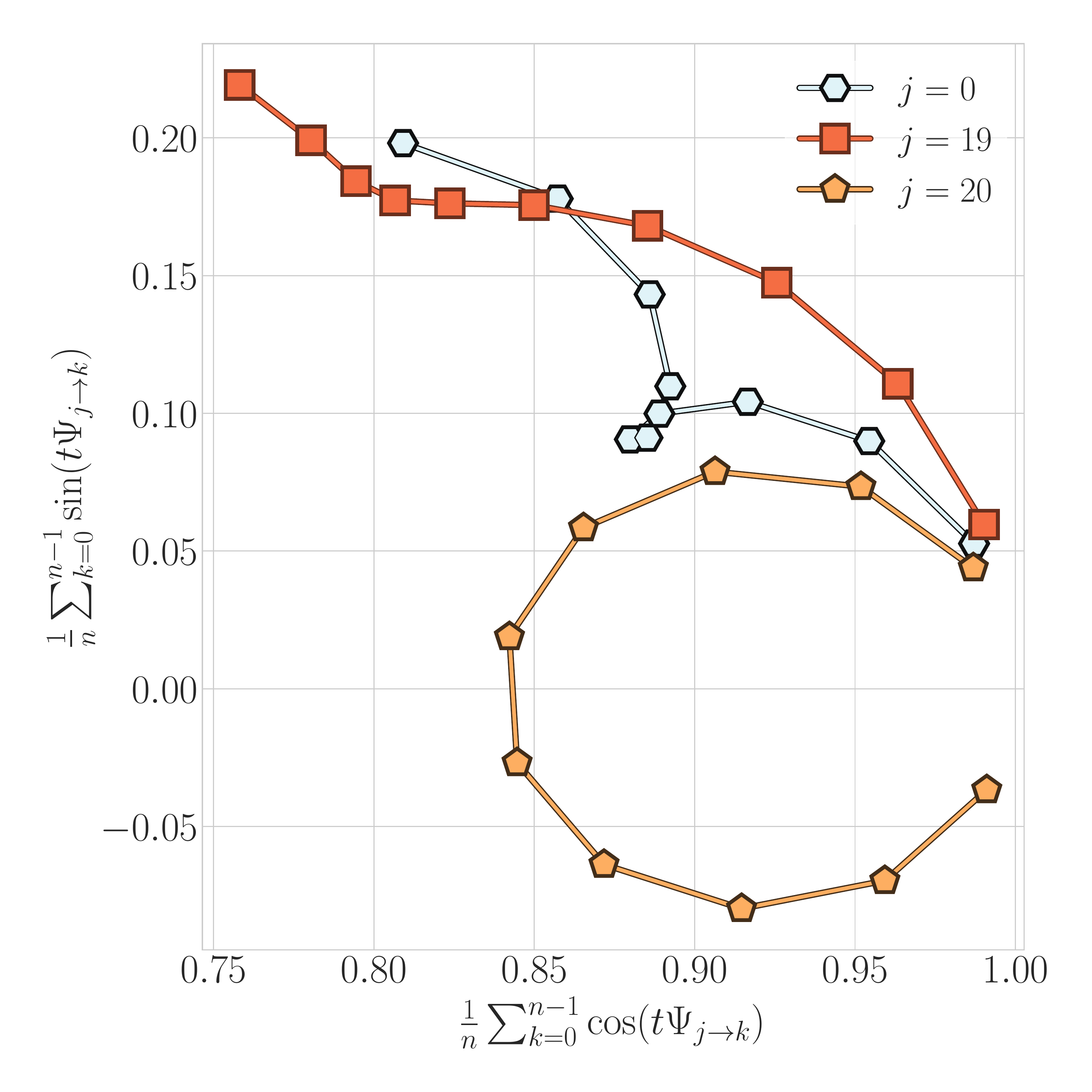}
    \caption{ECF values for node 0, 19 and 20}
    \label{fig:ecf_vis}
\end{subfigure}
\\
  \begin{subfigure}{0.32\linewidth}
    \centering
    \includegraphics[width=\linewidth]{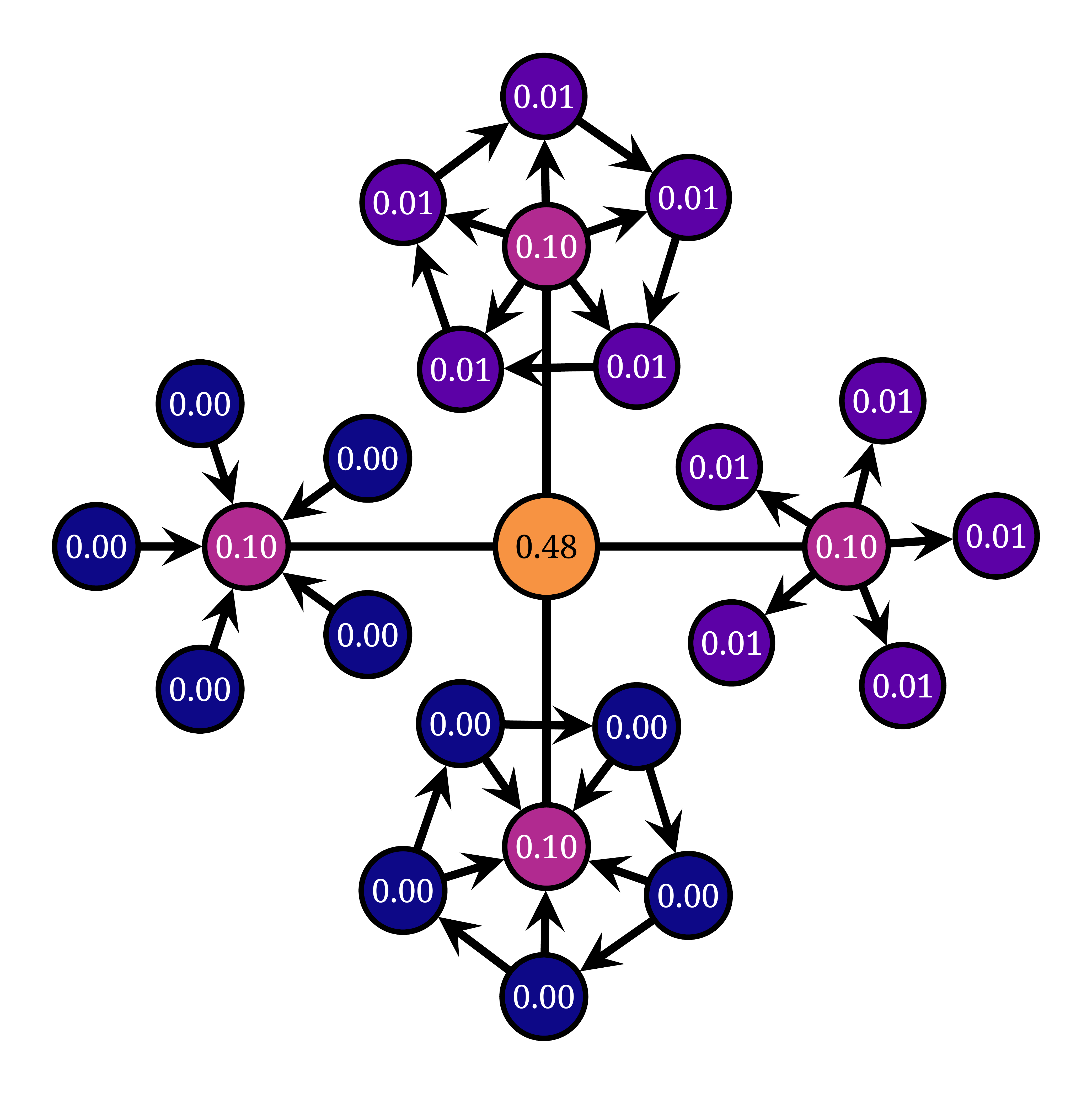}
    \caption{Node 0}
    \label{fig:illustrative_0_graph}
  \end{subfigure}
~
  \begin{subfigure}{0.32\linewidth}
    \centering
    \includegraphics[width=\linewidth]{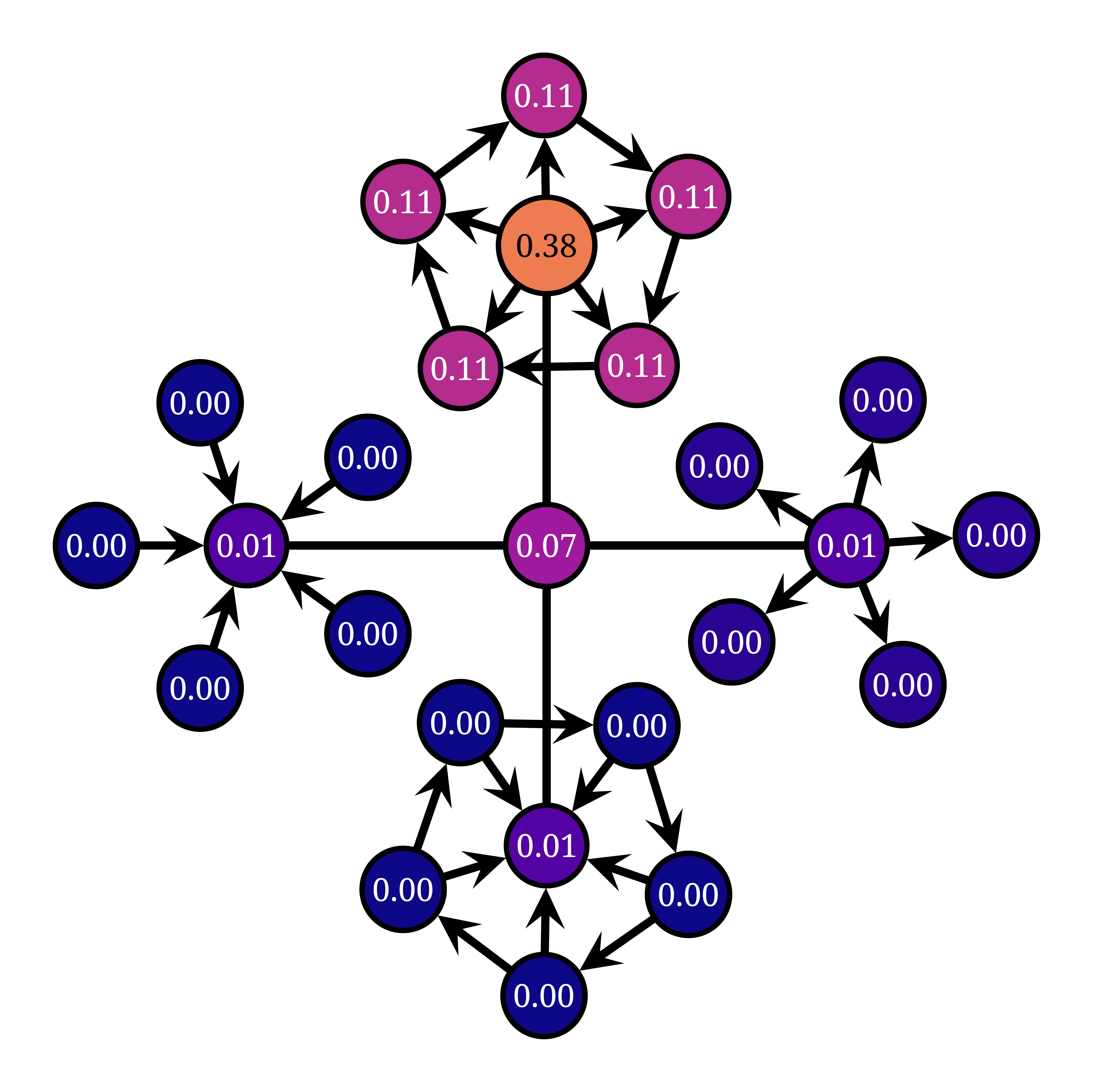}
    \caption{Node 19}
    \label{fig:illustrative_19_graph}
  \end{subfigure}
~
  \begin{subfigure}{0.32\linewidth}
    \centering
    \includegraphics[width=\linewidth]{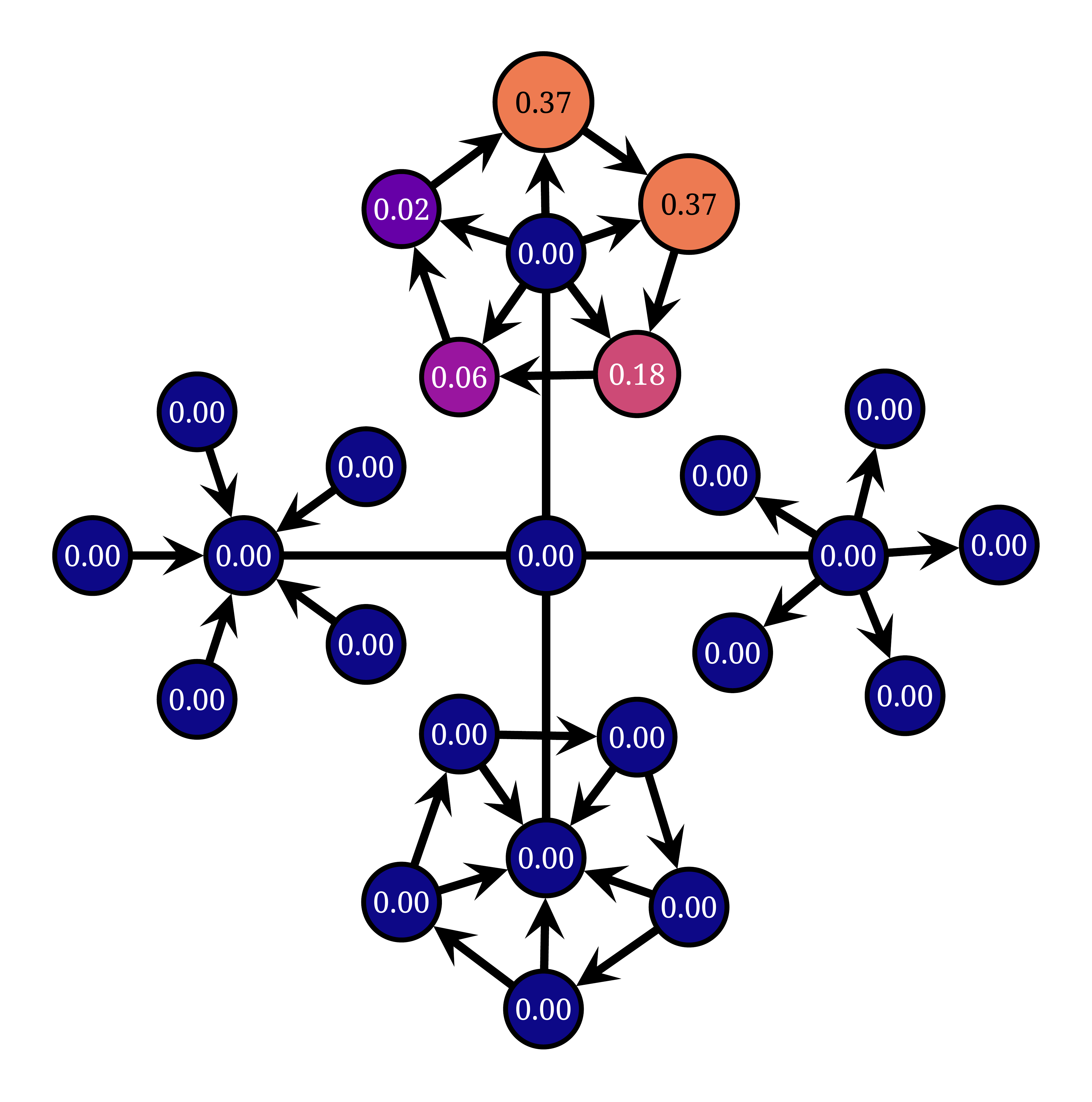}
    \caption{Node 20}
    \label{fig:illustrative_20_graph}
  \end{subfigure}
\\
  \begin{subfigure}{0.32\linewidth}
    \centering
    \includegraphics[width=\linewidth]{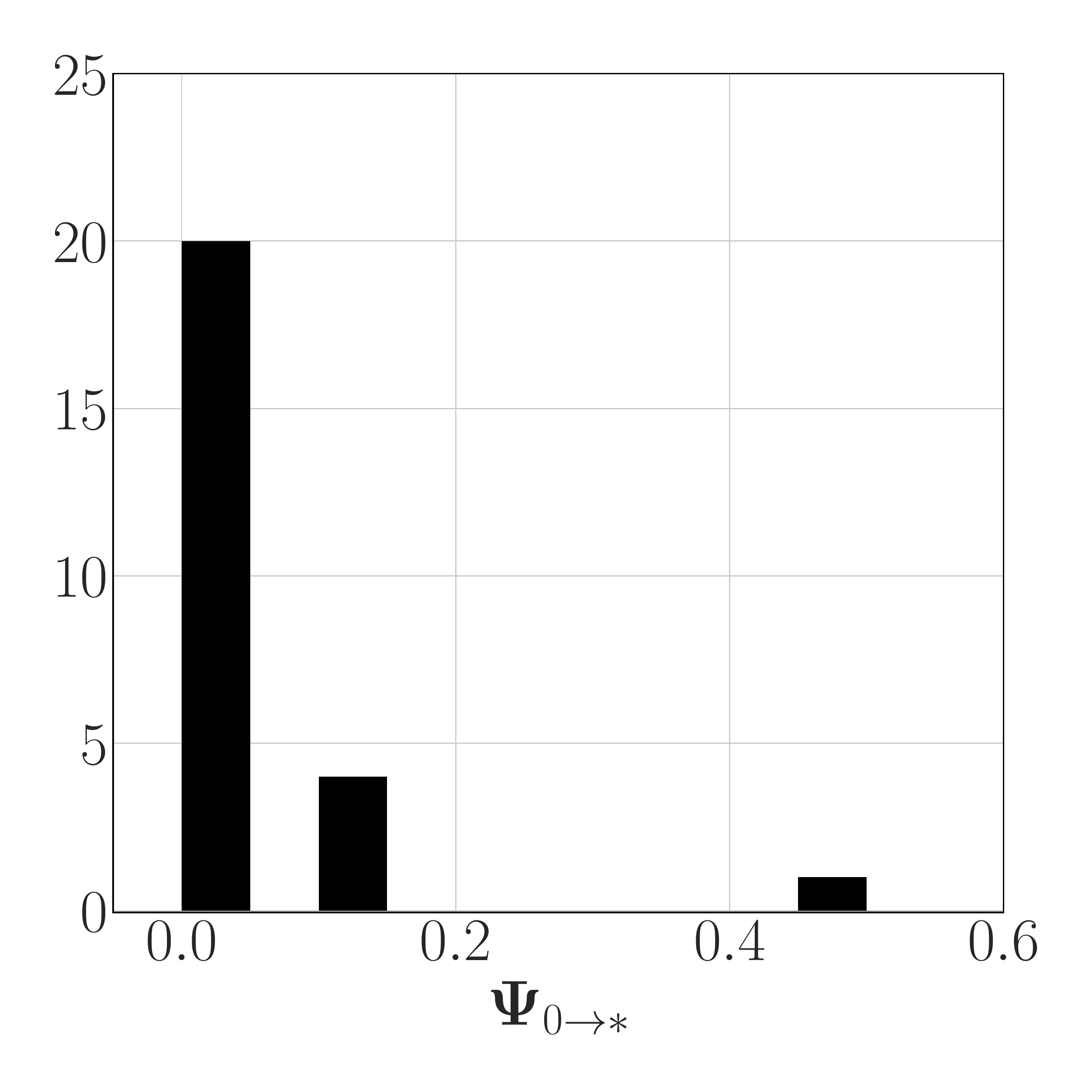}
    \caption{Node 0}
    \label{fig:illustrative_0_hist}
  \end{subfigure}
~
  \begin{subfigure}{0.32\linewidth}
    \centering
    \includegraphics[width=\linewidth]{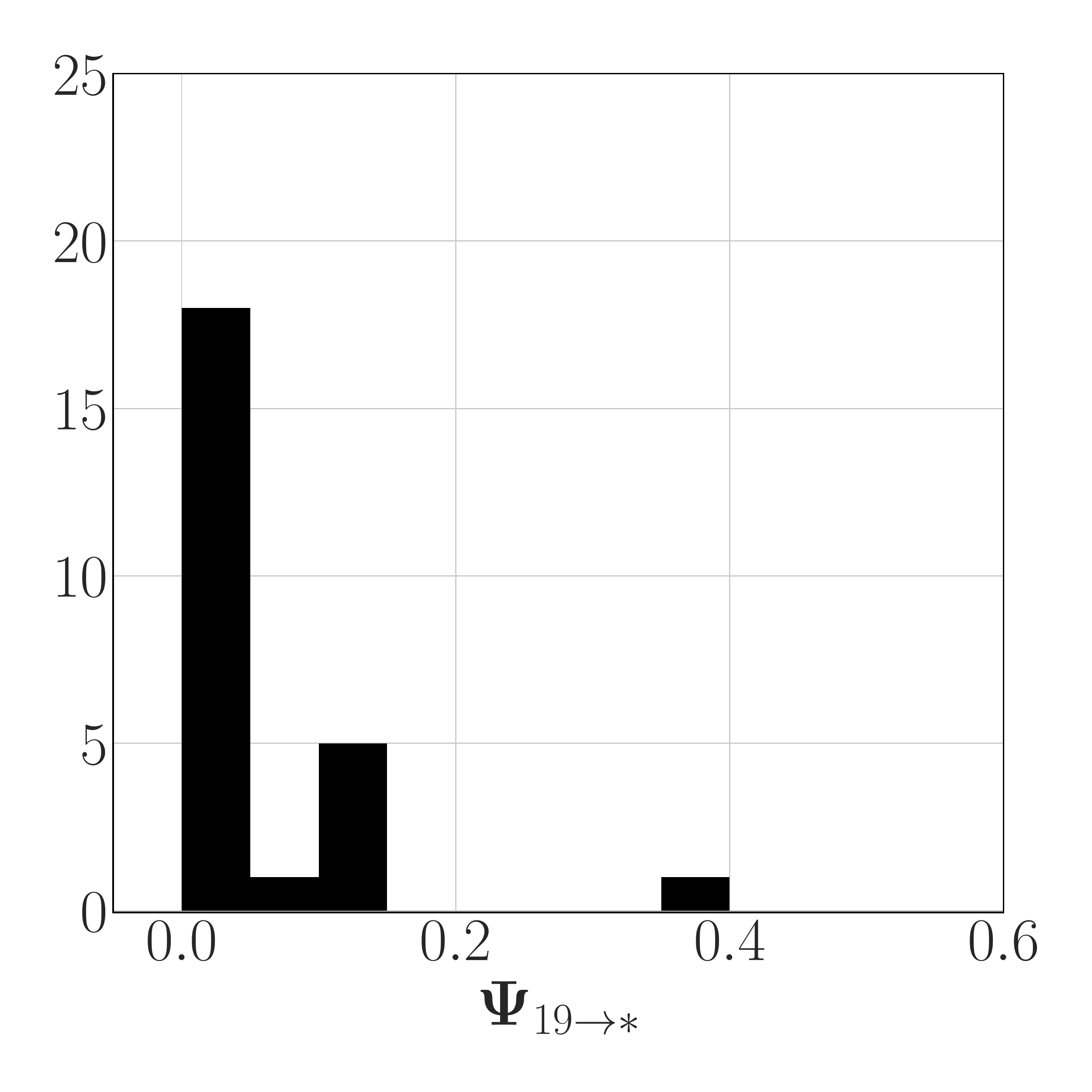}
    \caption{Node 19}
    \label{fig:illustrative_19_hist}
  \end{subfigure}
~
  \begin{subfigure}{0.32\linewidth}
    \centering
    \includegraphics[width=\linewidth]{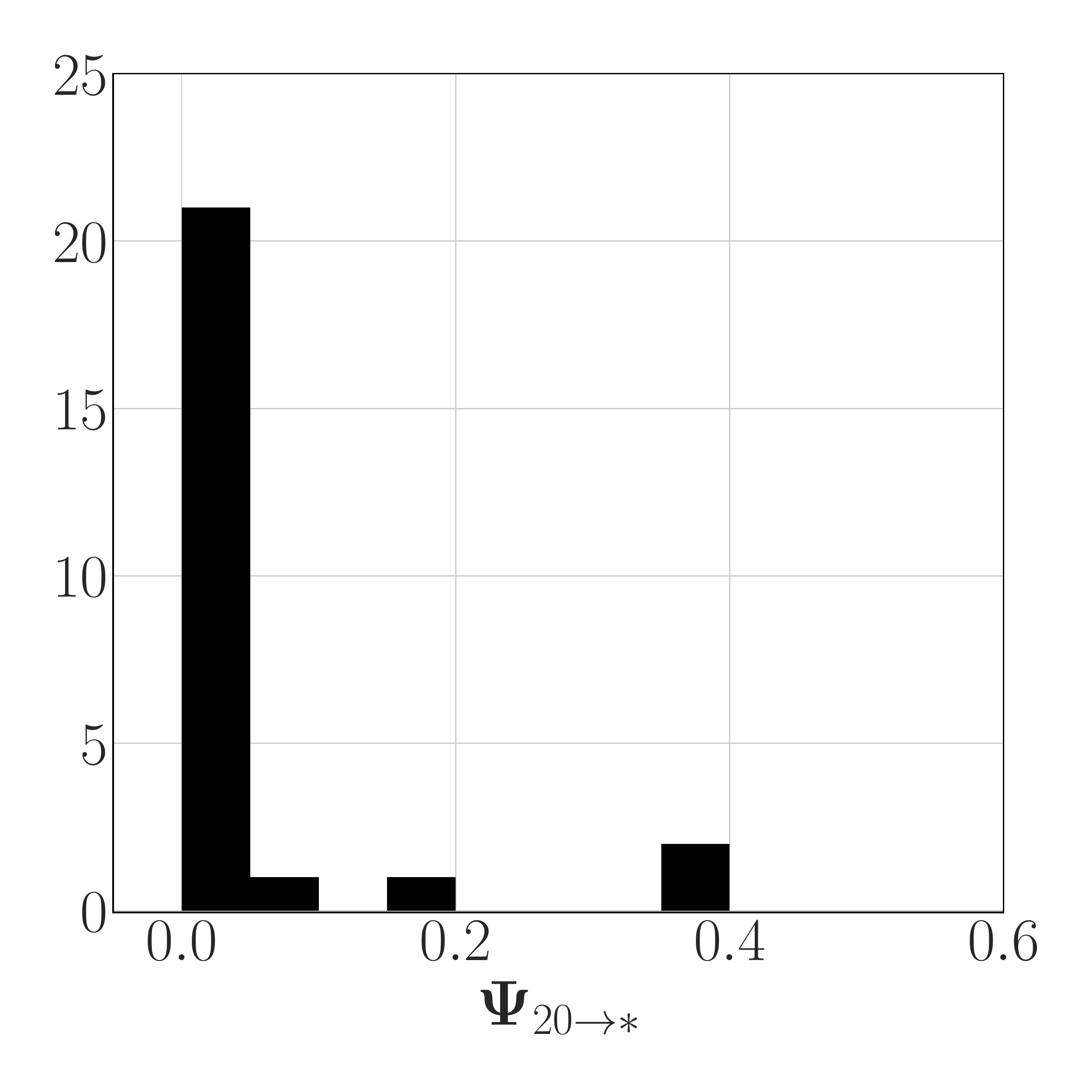}
    \caption{Node 20}
    \label{fig:illustrative_20_hist}
  \end{subfigure}
\caption{
\subref{fig:example_graph} shows an example graph with nodes labelled by their node index and with colours and shapes corresponding to their automorphic identity.
\subref{fig:illustrative_0_graph}—\subref{fig:illustrative_20_graph} show the reachability values for nodes 0, 19 and 20 for $\tau=1$ as heat distributions over the graph nodes.
\subref{fig:illustrative_0_hist}—\subref{fig:illustrative_20_hist} show the same reachability values as histograms.
\subref{fig:ecf_vis} shows a visualization of the reachability values after compression using the empirical characteristic function in the complex plane for $t \in \{0.5 \pi, \pi, \dots 5 \pi\}$. Here we use $\Delta t = 0.5 \pi $ instead of $\Delta t = \pi$ for visual clarity. See Section \ref{sec:digw:compress} for the ECF details.
}
\label{fig:illustrative_example}
\Description{Eight images composed in a grid. The first shows a graph with node labelled by their automorphic identities. The second shows a visualisation of the Digraphwave embeddings. The six lower images visualises the reachability values for three nodes in the graph, first as heat spead in the graph, then as histograms.}
\end{figure}

\section{\textsc{Digraphwave}} \label{sec:digraphwave}

In this section, the \textsc{Digraphwave} algorithm is described in detail.
The core algorithm consists of two parts: computation of reachability values for various timescales and compression of these into structural node embeddings using the empirical characteristic function.
To aid the explanation, we use an example graph shown in Figure \ref{fig:example_graph}.
Each node in the graph is labelled with its node index and the shape and colour indicates automorphic equivalence, i.e., nodes with the same shape and colour share all graph theoretic properties (e.g., in-/out-degree, centralities etc.) \cite{junchen_2021}.

In Section \ref{sec:digw:reachability_values}, we discuss the reachability values' relation to local connectivity patterns, as well as the timescale parameters and how to choose them appropriately. 
Then, in Section \ref{sec:digw:reachability_values_computation}, the scalable computation of the reachability values is described, followed by the compression into node embeddings using the empirical characteristic function, Section \ref{sec:digw:compress}.
The two enhancements to the core embeddings, transposition and aggregation, are motivated and detailed in Section \ref{sec:enhancements}.
Finally, in Section \ref{sec:digw:finalize}, the appropriate values for the remaining hyperparameters of \textsc{Digraphwave} are discussed, and the full algorithm is presented as pseudocode, see Algorithms \ref{alg:expm}-\ref{alg:hyperparameters}.

\subsection{Reachability Values and Timescales} \label{sec:digw:reachability_values}
In Section \ref{sec:diffusion_on_graphs}, we defined the matrix of reachability values $\bm{\Psi}(\tau)$ as the general solution to the directed diffusion equation \eqref{eq:advection_diffusion_diff_eq} using the matrix exponential
\begin{equation} \label{eq:psi_def}
    \bm{\Psi}(\tau) = \expmlap.
\end{equation}
In Figures \ref{fig:illustrative_0_graph}-\subref{fig:illustrative_20_graph}, we visualize the column vectors $\heatdist$ for nodes 0, 19 and 20 in the example graph, Figure \ref{fig:example_graph}.
Again, the values on each node can be interpreted either as the proportion of transferred heat, or as the probability of a random walk ending on that node.
Moreover, when treated as sets, each column vector $\heatdist$ provides a signature of the topology of the neighbourhood surrounding the node $j$, with $\tau$ determining the neighbourhood resolution.
The signatures for the nodes 0, 19 and 20 are visualized as histograms in Figures \ref{fig:illustrative_0_graph}-\subref{fig:illustrative_20_graph}.
It is clear that any nodes which are automorphically equivalent will share signatures $\heatdist$ for any timescale $\tau$.
Moreover, for a correctly chosen timescale, the signatures can also capture distinct local structures.
This raises the question of which timescale value $\tau$ to use.
Since real-world graphs often display a hierarchical structure \cite{Hierarchical_networks}, different patterns may appear at different neighbourhood resolutions. 
Therefore, to capture these different resolutions, we use $k_\tau$ evenly-spaced values for $\tau$ in an interval $[\taumin, \taumax]$, reminiscent of the heat kernel signature used for shape recognition in computer graphics \cite{heat_kernel_signature}.
The next question, then, is to find appropriate values for $\taumin$, $\taumax$ and $k_\tau$.

As can be discerned from the discussion about stationary solutions in Section \ref{sec:diffusion_intro}, the values of $\taumin$ and $\taumax$ are crucial for the reachability values to contain local structural information, as $\tau$ should be neither too small nor too large.
For large values of $\tau$, the signatures $\heatdist$ will converge to uninformative stationary distributions.
For undirected graph, this distribution will be identical for all nodes in the same weakly connected component, while the signal will eventually get trapped in sets of sink nodes for directed, see \cite{bauer_normalized_2012, veerman_primer_2020} for details.
Additionally, at $\tau=0$, $\bm{\Psi}$ will equal the $n \times n$ identity matrix, which is also uninformative.
The cases of too small and too large timescales are illustrated for a small graph on three nodes in Figure \ref{fig:triplet_heat}.
In the figure, the entire evolution of the reachability values over time $\tau$ for nodes A and B is shown.
For $\tau=0$ the signatures are identical and by $\tau=4$ the two heat distributions are close to the convergence to stationary solution and can no longer be distinguished.
Around $\tau=1$ the signatures are the most different.

From this example, we conclude that we desired guarantees on $\taumin$ and $\taumax$ such that $\tau \in [\taumin, \taumax ]$ are neither too small nor too large to produce informative signatures.
To provide such guarantees, we formulate and prove Theorem \ref{thm:main_theorem} in Section \ref{sec:theory_contribution}.
Using the heat diffusion interpretation of $\bm{\Psi}(\tau)$, the theorem says that the proportion of heat remaining at most $R$ hops away from the initialization node is bounded from below by the normalized upper incomplete gamma function, see \cite{NIST_DLMF8.2.E4} for its definition.
Mathematically, this is expressed as:
\begin{equation} \label{eq:heat_bound}
    \sum_{i \in \Ncore(j, R)} \Psi_{\edgeuv{j}{i}}(\tau) \geq  Q(R+1, \tau) = \frac{\Gamma(R+1, \tau)}{\Gamma(R+1)}.
\end{equation}
We plot $Q(R+1, \tau)$ as a function of $\tau$ for several values of $R$ in Figure \ref{fig:heat_containment}.
Since the function $Q(R+1, \tau)$ is invertible with respect to $\tau$, one may choose a desirable proportion of heat, $a$, to be contained in $R$-egonet of the initialization node, and use a $\taumax = Q^{-1}(R+1, a)$ to guarantee this.
For \textsc{Digraphwave}, we simplify the choice even further by using the bound $Q(R+1, R) > 0.5$ \cite{NIST_DLMF8.10.E13}, that is, by setting $\taumax = R$ we are always guaranteed that at least half of the total heat remains within $R$ hops from the initialization node.

For the lower bound of the interval, we use $\taumin=1$.
This has two motivations. 
First, Figure \ref{fig:triplet_heat} shows that $\tau=1$ provides a good signal to distinguish nodes $A$ and $B$, and already for $\tau=2$ the signal is much weaker.
Second, with the random-walk interpretation of $\heatdist$, see Appendix \ref{app:rw_intepretation}, $\tau$ marks the mean walk length, meaning that $\tau=1$ is the smallest timescales to use for a random walk to leave the initial node on average.

Thus, the timescale interval is $[1, R]$ and we have replaced the unintuitive timescale hyperparameter with the new hyperparameter $R$: the radius of the largest neighbourhood scale to consider.
It might be possible to choose $R$ automatically through consideration of different graph statistics, e.g., based on the characteristic path distance or diameter, bur we leave exploration of such schemes to future work.
The remaining hyperparameter, $k_\tau$, the number of timescales to use, is address in Section \ref{sec:digw:finalize}.

\begin{figure}[htp]
\centering
\begin{minipage}[t]{.49\textwidth}
    \centering
    \resizebox{0.95\linewidth}{!}{
\begin{tikzpicture}[
simplenode/.style={circle, draw, fill=black!30, thick, inner sep=1pt, minimum size=12pt},
]

\definecolor{chocolate217952}{RGB}{217,95,2}
\definecolor{darkcyan27158119}{RGB}{27,158,119}
\definecolor{darkslategray38}{RGB}{38,38,38}
\definecolor{lightgray204}{RGB}{204,204,204}

\begin{axis}[
axis line style={lightgray204},
legend cell align={left},
legend style={fill opacity=0.6, draw opacity=1, text opacity=1, draw=white},
tick align=outside,
tick pos=left,
x grid style={lightgray204},
xlabel=\textcolor{darkslategray38}{\Large \(\displaystyle \tau\)},
xmajorgrids,
xmin=-0.33, xmax=6.93,
xtick style={color=darkslategray38},
y grid style={lightgray204},
ylabel=\textcolor{darkslategray38}{\(\displaystyle \mathbf{\Psi}(\tau)\)},
ymajorgrids,
ymin=-0.05, ymax=1.05,
ytick style={color=darkslategray38}
]
\addplot [very thick, darkcyan27158119]
table {%
0 1
0.1 0.909365376538991
0.2 0.83516002301782
0.3 0.774405818047013
0.4 0.724664482058611
0.5 0.683939720585721
0.6 0.650597105956101
0.7 0.623298481970803
0.8 0.600948258997328
0.9 0.582649444110793
1 0.567667641618306
1.1 0.555401579181167
1.2 0.545358976644706
1.3 0.537136789107167
1.4 0.530405031312609
1.5 0.524893534183932
1.6 0.520381101989183
1.7 0.516686634980163
1.8 0.513661861223646
1.9 0.511185385928083
2 0.509157819444367
2.1 0.507497788410239
2.2 0.506138669951534
2.3 0.505025917872317
2.4 0.50411487352451
2.5 0.503368973499543
2.6 0.50275828221038
2.7 0.502258290471306
2.8 0.501848931858241
2.9 0.501513777372688
3 0.501239376088333
3.1 0.501014715318148
3.2 0.500830778636587
3.3 0.500680184018774
3.4 0.500556887573922
3.5 0.500455940982777
3.6 0.500373292904188
3.7 0.500305626380565
3.8 0.50025022571672
3.9 0.50020486748949
4 0.500167731313951
4.1 0.500137326784986
4.2 0.500112433662089
4.3 0.500092052896834
4.4 0.500075366537548
4.5 0.500061704902043
4.6 0.500050519700919
4.7 0.500041362032778
4.8 0.500033864368245
4.9 0.500027725799716
5 0.500022699964881
5.1 0.500018585159342
5.2 0.500015216241504
5.3 0.500012458004866
5.4 0.500010199751706
5.5 0.500008350850395
5.6 0.500006837098033
5.7 0.500005597742421
5.8 0.500004583043868
5.9 0.500003752278958
6 0.500003072106177
6.1 0.500002515227804
6.2 0.500002059294354
6.3 0.500001686007617
6.4 0.500001380386286
6.5 0.500001130164703
6.6 0.500000925300599
};
\addlegendentry{$\mathbf{\Psi_{A \rightarrow *}}(\tau)$}
\addplot [very thick, darkcyan27158119, forget plot]
table {%
0 0
0.1 0.0453173117305045
0.2 0.0824199884910902
0.3 0.112797090976493
0.4 0.137667758970695
0.5 0.158030139707139
0.6 0.17470144702195
0.7 0.188350759014598
0.8 0.199525870501336
0.9 0.208675277944603
1 0.216166179190847
1.1 0.222299210409417
1.2 0.227320511677647
1.3 0.231431605446417
1.4 0.234797484343696
1.5 0.237553232908034
1.6 0.239809449005408
1.7 0.241656682509918
1.8 0.243169069388177
1.9 0.244407307035959
2 0.245421090277816
2.1 0.246251105794881
2.2 0.246930665024233
2.3 0.247487041063842
2.4 0.247942563237745
2.5 0.248315513250229
2.6 0.24862085889481
2.7 0.248870854764347
2.8 0.249075534070879
2.9 0.249243111313656
3 0.249380311955833
3.1 0.249492642340926
3.2 0.249584610681707
3.3 0.249659907990613
3.4 0.249721556213039
3.5 0.249772029508611
3.6 0.249813353547906
3.7 0.249847186809718
3.8 0.24987488714164
3.9 0.249897566255255
4 0.249916134343024
4.1 0.249931336607507
4.2 0.249943783168955
4.3 0.249953973551583
4.4 0.249962316731226
4.5 0.249969147548978
4.6 0.249974740149541
4.7 0.249979318983611
4.8 0.249983067815877
4.9 0.249986137100142
5 0.249988650017559
5.1 0.249990707420329
5.2 0.249992391879248
5.3 0.249993770997567
5.4 0.249994900124147
5.5 0.249995824574802
5.6 0.249996581450984
5.7 0.249997201128789
5.8 0.249997708478066
5.9 0.249998123860521
6 0.249998463946912
6.1 0.249998742386098
6.2 0.249998970352823
6.3 0.249999156996191
6.4 0.249999309806857
6.5 0.249999434917648
6.6 0.249999537349701
};
\addplot [very thick, chocolate217952, dashed]
table {%
0 1
0.1 0.907101397287475
0.2 0.826945388047901
0.3 0.757612019364366
0.4 0.697492264047125
0.5 0.645235190149177
0.6 0.599704371025064
0.7 0.559941892881106
0.8 0.525138611557275
0.9 0.494609551925696
1 0.467773541394874
1.1 0.444136331439623
1.2 0.423276594278454
1.3 0.40483429107059
1.4 0.388500997627108
1.5 0.374011847166181
1.6 0.361138809991919
1.7 0.349685079516449
1.8 0.339480374722616
1.9 0.330377002575359
2 0.32224655134049
2.1 0.31497710833161
2.2 0.308470914156934
2.3 0.30264238079756
2.4 0.297416413406961
2.5 0.292726986061721
2.6 0.288515930212357
2.7 0.284731901605528
2.8 0.28132949724173
2.9 0.278268498714548
3 0.275513222228099
3.1 0.273031958855853
3.2 0.270796491307477
3.3 0.268781675710007
3.4 0.266965078767124
3.5 0.265326662202548
3.6 0.26384850767574
3.7 0.262514576425452
3.8 0.261310498786443
3.9 0.260223389467647
4 0.259241685101343
4.1 0.258355001093374
4.2 0.257554005241284
4.3 0.256830305954517
4.4 0.256176353220308
4.5 0.255585350720143
4.6 0.255051177722776
4.7 0.254568319567237
4.8 0.254131805708633
4.9 0.25373715443532
5 0.253380323481983
5.1 0.253057665862429
5.2 0.252765890331133
5.3 0.252502025955888
5.4 0.252263390347159
5.5 0.25204756114443
5.6 0.251852350407258
5.7 0.251675781599946
5.8 0.251516068894622
5.9 0.251371598548863
6 0.251240912141422
6.1 0.251122691473645
6.2 0.251015744965325
6.3 0.250918995392323
6.4 0.25083146882973
6.5 0.250752284678841
6.6 0.250680646669073
};
\addlegendentry{$\mathbf{\Psi_{B \rightarrow *}}(\tau)$}
\addplot [very thick, chocolate217952, dashed]
table {%
0 0
0.1 0.0906346234610091
0.2 0.16483997698218
0.3 0.225594181952987
0.4 0.275335517941389
0.5 0.316060279414279
0.6 0.349402894043899
0.7 0.376701518029197
0.8 0.399051741002672
0.9 0.417350555889207
1 0.432332358381694
1.1 0.444598420818833
1.2 0.454641023355294
1.3 0.462863210892833
1.4 0.469594968687391
1.5 0.475106465816068
1.6 0.479618898010817
1.7 0.483313365019837
1.8 0.486338138776354
1.9 0.488814614071917
2 0.490842180555633
2.1 0.492502211589761
2.2 0.493861330048466
2.3 0.494974082127683
2.4 0.49588512647549
2.5 0.496631026500457
2.6 0.49724171778962
2.7 0.497741709528694
2.8 0.498151068141759
2.9 0.498486222627312
3 0.498760623911667
3.1 0.498985284681852
3.2 0.499169221363413
3.3 0.499319815981226
3.4 0.499443112426078
3.5 0.499544059017223
3.6 0.499626707095812
3.7 0.499694373619435
3.8 0.49974977428328
3.9 0.49979513251051
4 0.499832268686049
4.1 0.499862673215014
4.2 0.499887566337911
4.3 0.499907947103166
4.4 0.499924633462452
4.5 0.499938295097957
4.6 0.499949480299081
4.7 0.499958637967222
4.8 0.499966135631755
4.9 0.499972274200284
5 0.499977300035119
5.1 0.499981414840658
5.2 0.499984783758496
5.3 0.499987541995134
5.4 0.499989800248294
5.5 0.499991649149605
5.6 0.499993162901967
5.7 0.499994402257579
5.8 0.499995416956132
5.9 0.499996247721042
6 0.499996927893823
6.1 0.499997484772196
6.2 0.499997940705646
6.3 0.499998313992383
6.4 0.499998619613714
6.5 0.499998869835297
6.6 0.499999074699401
};
\addplot [very thick, chocolate217952, dashed]
table {%
0 0
0.1 0.00226397925151567
0.2 0.00821463496991892
0.3 0.0167937986826477
0.4 0.0271722180114857
0.5 0.0387045304365439
0.6 0.0508927349310373
0.7 0.0633565890896969
0.8 0.0758096474400531
0.9 0.0880398921850971
1 0.099894100223432
1.1 0.111265247741544
1.2 0.122082382366252
1.3 0.132302498036577
1.4 0.141904033685501
1.5 0.150881687017751
1.6 0.159242291997264
1.7 0.167001555463714
1.8 0.17418148650103
1.9 0.180808383352724
2 0.186911268103877
2.1 0.192520680078628
2.2 0.1976677557946
2.3 0.202383537074757
2.4 0.206698460117549
2.5 0.210641987437822
2.6 0.214242351998023
2.7 0.217526388865778
2.8 0.220519434616512
2.9 0.22324527865814
3 0.225726153860235
3.1 0.227982756462295
3.2 0.23003428732911
3.3 0.231898508308767
3.4 0.233591808806798
3.5 0.235129278780229
3.6 0.236524785228448
3.7 0.237791049955113
3.8 0.238939726930277
3.9 0.239981478021843
4 0.240926046212609
4.1 0.241782325691612
4.2 0.242558428420806
4.3 0.243261746942316
4.4 0.24389901331724
4.5 0.244476354181901
4.6 0.244999341978142
4.7 0.245473042465541
4.8 0.245902058659613
4.9 0.246290571364396
5 0.246642376482898
5.1 0.246960919296913
5.2 0.247249325910372
5.3 0.247510432048978
5.4 0.247746809404546
5.5 0.247960789705965
5.6 0.248154486690775
5.7 0.248329816142475
5.8 0.248488514149246
5.9 0.248632153730095
6 0.248762159964755
6.1 0.248879823754159
6.2 0.248986314329029
6.3 0.249082690615294
6.4 0.249169911556556
6.5 0.249248845485863
6.6 0.249320278631525
};
\end{axis}

\fill[fill=white, opacity=0.8] (1.7,3.8) rectangle (3.9,5.3);
\coordinate (O) at (2.8,5);
\coordinate (A) at (2.8-0.85,5-0.85);
\coordinate (B) at (2.8+0.85,5-0.85);
\node[simplenode, draw=darkcyan27158119] (0) at (O) {\small A};
\node[simplenode, draw=chocolate217952] (1) at (A) {\small B};
\node[simplenode] (2) at (B) {};

\path [latex-latex, thick] (0) edge node {} (1);
\path [latex-latex, thick] (0) edge node {} (2);

\end{tikzpicture}
    }
    \Description{%
    A line plot and a small graph with three nodes. The center node is labelled as A and the left node is labelled as B. The plot shows the reachability values signatures for the two labelled nodes over time.%
    }
    \captionof{figure}{%
    The reachability values for nodes A and B in the three node graph. For $\tau=0$ the signatures are identical, and again for $\tau > 4$ the signatures are not distinguishable. The largest signature discrepancy is observed for $\tau=1$.%
    }
    \label{fig:triplet_heat}
\end{minipage}
\hfill
\centering
\begin{minipage}[t]{.49\textwidth}
    \centering
    \resizebox{0.95\linewidth}{!}{
    \input{figures/theory/heat_containment}
    }
    \Description{%
    The normalized upper incomplete gamma function ploted as eight decreasing and injective functions of the timescale tau for different R values. A dashed curve showing the case when tau equals R intersects the other eight functions approaches y equal to 0.5.%
    }
  \captionof{figure}{%
  Visualisation of the bound \eqref{eq:heat_bound}. Each curve shows the minimal amount of heat contained in the $R$-egonet of any directed graph under the diffusion process \eqref{eq:advection_diffusion_diff_eq}. The dashed curve shows the case when $R = \tau$ and never drops below $0.5$.%
  }
  \label{fig:heat_containment}
 \end{minipage}
\end{figure}

\begin{figure}[htp]
\centering
\begin{minipage}[t]{.49\textwidth}
    \centering
    \resizebox{0.95\linewidth}{!}{
    \input{figures/theory/taylor_error_bound}
    }
    \Description{%
    A plot depicting six error bound curves. The each curve is labelled with the order of its corresponding Taylor polynomial. The curves increase linearly with the timescale until they reach the point where the approximation error exceeds the round-off error. At these points, the error bound rapidly increases.
    }
  \captionof{figure}{%
  The error bound for the truncated Taylor series approximation of the matrix exponential $\expmlap$, with $K$ as the order of the Taylor polynomial. The bound includes both the Taylor approximation error and the rounding error due to double-precision floating point arithmetic. The point where the approximation error bound exceedes the rounding error bound is seen as a steep increase.%
  }
  \label{fig:taylor_error_bound}
 \end{minipage}
\hfill
\begin{minipage}[t]{.45\textwidth}
    \centering
    \includegraphics[width=0.9\linewidth]{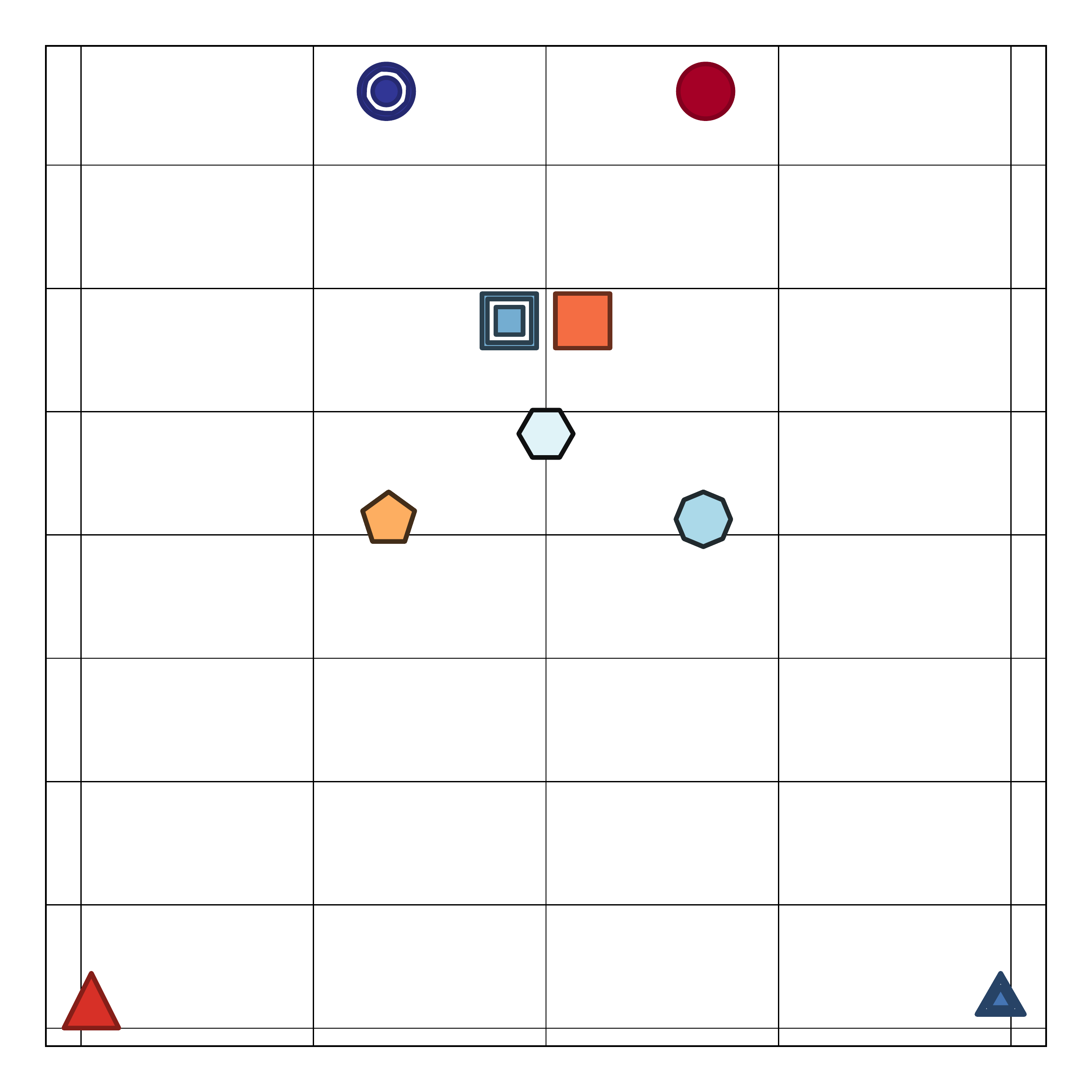}
    \Description{%
    A plot with unlabelled axis depicting nine different shapes, each correponding to a set of nodes with the same automorphic identities in a previous figure.%
    }
    \captionof{figure}{%
    Digraphwave embeddings for the example graph in Figure \ref{fig:example_graph}. The embeddings are projected into the 2D plane using PCA. The x-axis seem to capture the proportion of in- to out-degree for each node, while the y-axis correlates with the nodes' total degree. %
    }
    \label{fig:digw_embeddings_pca}
\end{minipage}
\end{figure}

\subsection{Scalable Computation of the Reachability Values} \label{sec:digw:reachability_values_computation}

Scalable computation of the reachability values requires the space complexity to be manageable.
As explained in Section \ref{sec:numerical_computation_of_mat_exp}, the matrix exponential $\expmlap$ is typically dense even if $\lapnorm$ is sparse, resulting in $\bigO(n^2)$ space complexity for naive implementations.
We address this issue by computing the columns of $\bm{\Psi}(\tau)$, i.e., the signatures $\heatdist$, in batches using truncated Taylor series approximation.
Since our implementation permits parallelism, the computation can be scaled and performed on large graphs.

The $K$th order truncated Taylor series approximation of $\bm{\Psi}(\tau)$ expanded in the identity matrix $\eye$ is 
\begin{align}
    \label{eq:taylor_approx}
    \mathbf{T}_{K}(\lapnorm, \tau) &= \sum_{k=0}^K a_k(\tau) \left(\lapnorm - \eye \right)^k, &&
    a_k(\tau) = \frac{(-\tau)^k e^{-\tau}}{k!}.
\end{align}
Given a list of $n_{\text{batch}}$ node indices $B=(v_0, \dots, v_{n_{\text{batch}} - 1})$, we can express the corresponding columns of $\bm{\Psi}(\tau)$ as
\begin{align}
    \label{eq:expm_batch}
    \heatdistbatchbx{*} = \expmlap \mathbf{B},
\end{align}
where $\mathbf{B}$ is a $n \times n_{\text{batch}}$ matrix of one-hot columns,
\begin{equation} \label{eq:b-batch}
    [\mathbf{B}]_{ij} = 
    \begin{cases}
        1 & \text{if $i = v_j \in B$ } \\
        0 & \text{otherwise}.
    \end{cases}
\end{equation}
Thus, $\heatdistbatchbx{*}$ is also a $n \times n_{\text{batch}}$ matrix, as is each term in the corresponding Taylor series:
\begin{align}
    \label{eq:taylor_approx_batch}
    \mathbf{T}_{K}(\lapnorm, \tau) \mathbf{B} &= \sum_{k=0}^K a_k(\tau) \left(\lapnorm - \eye \right)^k \mathbf{B}.
\end{align}
Computing a batch of $n_{\text{batch}}$ signatures thus only requires $\bigO(n n_{\text{batch}})$ memory.
By slightly increasing the space complexity of a batch to $\bigO ( k_\tau n n_{\text{batch}} )$, the monomials $\left(\lapnorm - \eye \right)^k \mathbf{B}$ can be reused for several values of $\tau$, since $\tau$ only affects the coefficients in \eqref{eq:taylor_approx_batch}.

The monomials $\left(\lapnorm - \eye \right)^k \mathbf{B}$ are computed using sparse-dense matrix multiplication, meaning that the time complexity for a batch of reachability values is $\bigO ( k_\tau K m n_{\text{batch}} )$.
The matrix multiplications are easily parallelized over both over $m$ and $k_\tau$ through a GPU supported implementation using \textsc{pytorch\_sparse}, part of the \textsc{PyG} framework \cite{Fey/Lenssen/2019}.
Furthermore, separate batches can be parallelized over multiple GPUs, further improving the scalability.
The computation of the Taylor series coefficients $a_k$ is implemented using the Fast Fourier transform.
Readers not familiar with this technique may consult the Appendix \ref{sec:app:fft} for a short derivation.
Algorithm \ref{alg:expm} shows pseudocode for the full computation of a batch of reachability values.

A remaining question is which order of the Taylor approximation, $K$, to use.
Generally, a low $K$ value is desirable for a lower computational cost.
However, a too low value will result in a high Taylor approximation error, and a too large error will render the embeddings senseless. 
To ensure that $K$ is large enough, we use the following the error bound, derived in Appendix \ref{sec:app:error_bound} based on Theorem 4.5 and Theorem 4.8 in \cite{Higham08},
\begin{align} \label{eq:lapexpm_error_bound}
    \| \expmlap - \mathbf{\hat{T}}_K (\lapnorm, \tau) \|_1
    &\leq
    \left(\frac{\exp(\tau)\tau^{K+1}}{(K+1)!} 
    + 
    \tilde{\gamma}_{Kn}
    \right)
    \exp(\tau), &&
    \tilde{\gamma}_{Kn} = cKnu / (1 - cKnu),
\end{align}
where $\mathbf{\hat{T}}_K (\lapnorm, \tau)$ is the approximation \eqref{eq:taylor_approx} with rounding errors from floating point arithmetic included, and $\| \cdot \|_1$ is the maximum column sum matrix norm,
\begin{align} \label{eq:matrix_norm}
    \| \mathbf{X} \|_1 = \max_{1\leq j \leq n} \sum_{i=1}^n | x_{ij} |.
\end{align}
The first term in the bound is the Taylor approximation error bound, and the second term a bound on the rounding error.
For the rounding error bound, $c$ is a small integer constant whose precise value does not matter, $u=2^{-53} \approx 1.11 \times 10^{-16}$ for IEEE double precision arithmetic and $u=2^{-24} \approx 5.96 \times 10^{-8}$ for IEEE single precision arithmetic. 

In Figure \ref{fig:taylor_error_bound}, the error bound is shown as a function of $\tau$ for several values of $K$ using $n=10^7$ and double precision.
Single precision is not shown, since $\tilde{\gamma}_{Kn} > 1$ for $n=10^7$.
The error generally grows with $\tau$, and the growth rate increases drastically once the Taylor approximation error surpasses the rounding error.
Thus, we want to choose $K$ so that the rounding error dominates for the largest timescale used, i.e., $\taumax=R$.
For this work, $K=40$ and double precision is used to guarantee small errors even for $R=8$.

It is worth noting that the error bound \eqref{eq:lapexpm_error_bound} is pessimistic in two ways:
the upper bound on the Taylor error is loose, and the rounding error term assumes dense-dense matrix multiplication.
Furthermore, we currently do not know how sensitive the quality of the final embeddings are to errors in the reachability values.
It is clear that the embeddings should become meaningless if the error is very large that, however, it is possible that errors which would be unacceptable in a numerical approximation setting will still produce useful embeddings in a machine learning or data mining context.
This means smaller values for $K$ and even single precision may be used in practice, and still produce meaningful embeddings.

\subsection{Compressing Reachability Values to Structural Node Embeddings} \label{sec:digw:compress}

Each computed batch of reachability values $\heatdistbatchbx{*}$ are compressed into node embeddings by evaluating the empirical characteristic function (ECF), 
\begin{align} \label{eq:digw:ecf}
    \phi_j (t, \tau) = \frac{1}{n} \sum_{k=0}^{n-1} e^{i t \Psi_{\edgeuv{j}{k}}(\tau)} 
\end{align}
for $k_\phi$ different $t$ values, $\mathbf{t} = (t_0, \dots, t_{k_{\phi} - 1})$.
The $k_\phi$ complex ECF values are then concatenated into $2 k_\phi$ dimensional embeddings,
\begin{equation}\label{eq:digw:core_emb_tau}
    \bm{\chi}_j (\tau) = \left[\Re ( \phi_j (t, \tau) ), \Im ( \phi_j (t, \tau)  )   \right]_{t\in \mathbf{t}}.
\end{equation}
for each timescale.
These are then further concatenated into the final core embeddings:
\begin{equation}\label{eq:digw:core_emb_new}
    \bm{\chi}_j = \left[ \bm{\chi}_j (\tau_s) \right]_{s \in \{0, \dots, k_\tau - 1 \}}.
\end{equation}

As far as we can tell, \cite{donnat_learning_2018} is the first work to use the ECF compression this way, with \cite{zhu_2021_proximity_is_all_you_need} following suit.
For the sampling points $\mathbf{t}$, both \cite{donnat_learning_2018} and \cite{zhu_2021_proximity_is_all_you_need} use $t_0=0$ and $t_{k_{\phi} - 1}=100$, and then sample equally spaced values with step size $\Delta t = 100 / (k_{\phi} - 1)$.
Since we have not been able to discern the motivation for these values, we instead propose to use principles from signal processing to choose the sampling points.
We observe that $\phi_j (t, \tau)$ is a band-limited signal in $t$, with maximum frequency $f_{\text{max}} = 1/2\pi$ since $\Psi_{\edgeuv{j}{k}}(\tau)) \in [0, 1]$.
Thus, Shannon's sampling theorem \cite{shannon} applies, and the signal is completely determined given a series of ordinates spaced $1/2f_{\text{max}}  = \pi$ apart.
Consequently, we use $\Delta t = \pi$.
Moreover, since $\phi_j (0, \tau) = 1$ irrespective of $\bm{\Psi}(\tau)$, we use $t_0=\Delta t = \pi$, and thus $t_{k_{\phi} - 1} = k_{\phi} \pi$.

In Figure \ref{fig:ecf_vis}, the core embeddings for nodes $0$, $19$ and $20$ in the example graph are visualized in the 2D plane by plotting the points $(\cos(t \Psi_{\edgeuv{j}{k}}(\tau), \sin(t \Psi_{\edgeuv{j}{k}}(\tau)))$ using $\tau=1$ and step size $\Delta t = \pi / 2$.
The three nodes produce distinct patterns reflecting their different local structure in the example graph.

Next, we look at the practical implementation of the ECF compression.
For large graphs, it is likely that many reachability values are very small and can well be approximated by zeros in \eqref{eq:digw:ecf}.
Doing so both removes small elements which may be dominated by numerical errors, see Section \ref{sec:digw:reachability_values_computation}, and reduces the time complexity of applying the ECF to a batch of reachability values.
For a dense $\heatdistbatchbx{*}$, the time complexity of the compression is $\bigO(k_\phi n n_{\text{batch}})$, which can be reduced $\bigO(k_\phi \texttt{nnz}(\heatdistbatchbx{*}))$ by using a sparse $\heatdistbatchbx{*}$. 
Thus, we apply thresholding to the reachability values as 
\begin{equation}
    \bar\Psi_{\edgeji}(\tau) = 
    \begin{cases}
       \Psi_{\edgeji}(\tau) & \text{if $\Psi_{\edgeji}(\tau) > \theta_{j}(R)$} \\
        0 & \text{otherwise,}
    \end{cases}
\end{equation}
where each $\theta_{j}(R)$ is a threshold specific per node.
By using node specific threshold values, $\heatdistbatchbx{*}$ can be made sparser compared to using a global threshold value.

To obtain threshold values, we use the theory developed in Section \ref{sec:theory_contribution}.
Specifically, we consider the extreme case defined in Lemma \ref{lemma:ss_analytical_expression} where initial heat spreads maximally over the graph, and set the threshold values to the smallest reachability value in an $R$-egonet surrounding node $j$.
That is, we set
\begin{align}
    \label{eq:threshold}
	\tilde\theta_{j}(R) = \min_{\tau \in [\taumin,  \taumax], l \in [0, R]}  [\exp (-\tau \lapnormstar(d_j, \beta_j, R+1))]_{\edgeuv{0}{l*}},
\end{align}
where $[\exp (-\tau \lapnormstar(d_j, \beta_j, R+1))]_{\edgeuv{0}{l*}}$ are the reachability values from node $0$ to nodes at distance $l$ away in a source-star graph -- see Section \ref{sec:source_star} -- constructed using the same out-degree $d_j$ as node $j$, and a corresponding branch factor 
\begin{align}
    \beta_j = \frac{1}{n-1} \sum_{k \in V, k \neq j} d_k. \label{eq:betaj}
\end{align}
Evaluation of \eqref{eq:threshold} using the $\taumin=1$ and $\taumax=R$ gives
\begin{align*}
    \tilde\theta_{j}(R) = \min \left( \exp(-R), \frac{R \exp(-R)}{d_j } , \frac{\exp(-1)}{d_j \beta_j^{R-1} R!} \right).
\end{align*}
Since $\tilde\theta_{j}(R)$ may be very small for nodes with large out-degrees, we also use a global threshold to remove elements judged to be dominated by numerical inaccuracies.
The final thresholds are thus given by $\theta_{j}(R)$
\begin{align*}
    \theta_{j}(R) = \max (\tilde\theta_{j}(R), 10^{-6}).
\end{align*}

Pseudocode for the compression of a reachability value batch into core embeddings is shown in Algorithm \ref{alg:embeddings}.
As input, the algorithm takes the reachability values batch and the number of ECF sample points $k_\phi$, which how to choose in Section \ref{sec:digw:finalize}.

\subsection{Enhancements to the Core Embeddings} \label{sec:enhancements}

The core \textsc{Digraphwave} embeddings, as described above, are able to capture several distinct local structure, as can be seen in Figure \ref{fig:illustrative_example}.
However, not all different structures can be captured.
The core embedding of a node $j$ only receives information about the nodes which can be reached from $j$, meaning that the structure of its in-neighbourhood may be lost.
A consequence is that all nodes with no out edges, e.g., nodes 8-12 in the example graph or any isolated node, will have the exact same core embeddings, regardless of the structure of their in-neighbourhood.

We call our solution to this issue the transposition enhancement.
This enhancement consists of a second set of core embeddings computed from the transposed graph, i.e., the graph with all edge directions reversed.
Mathematically, this is easily done by replacing $\lapnorm$ with is in-degree normalized counterpart in Section \ref{sec:digw:reachability_values_computation}, and programmatically, one simply inputs the transposed adjacency matrix to the core embedding extraction, Algorithm \ref{alg:digraphwave_core}.
The core embeddings for the normal and the transposed graph are then concatenated into a joint embedding space, visualized in Figure \ref{fig:digraphwave}.

Another observed issue with the core embeddings is that they may fail to capture different structures which share similar symmetries. 
This is the case for nodes 1 and 13 in the example graph, Figure \ref{fig:example_graph}, as well as nodes 7 and 19, which pairwise have the exact same reachability values, i.e., $\bm{\Psi}_{\edgeuv{1}{*}} = \bm{\Psi}_{\edgeuv{13}{*}}$ and $\bm{\Psi}_{\edgeuv{7}{*}} = \bm{\Psi}_{\edgeuv{19}{*}}$.

To resolve this issue, we take inspiration from \cite{henderson_its_2011}, namely the idea that the structural embedding of a node should be influenced by the structural embeddings of its neighbours.
That is, given the core embeddings for each node, possibly with the transposition enhancements, we calculate, for each node, the mean embedding among all the neighbours, both in and out, and concatenate this aggregation to the original embedding. 
Mathematically, if $\bm{\chi}_j$ is the core embedding vector for a node $j$, we create the aggregation embedding as
\begin{equation}
    \bm{\bar\chi}_j = \frac{1}{|\mathcal{N}(j)|}\sum_{k \in \mathcal{N}(j)} \bm{\chi}_k,
\end{equation}
and then create the enhanced embedding as
\begin{equation}
    \bm{\hat\chi}_j = [\bm{\chi}_j, \bm{\bar\chi}_j].
\end{equation}

We judge that the cases where the transposition and aggregation enhancements are insufficient to produce distinguishable embeddings for noes with distinct structural identities to be rare, since we have yet to encounter such a case empirically.
However, if one would encounter such a rare case, it is possible to repeat the aggregation enhancement iteratively until any symmetries are broken.
In practice, this would have to be combined with feature pruning, as done in \cite{henderson_its_2011}, to avoid the embedding dimension doubling for each iteration.
We leave exploration of such schemes to future work.

\subsection{Finalizing \textsc{Digraphwave} } \label{sec:digw:finalize}
\begin{wraptable}{r}{0.2\textwidth}
    \caption{Dimension values $k_{\tau}$ and $k_{\phi}$ obtained from a desired embedding dimension $k_{\text{emb}}$.}\label{tab:dim_vals}
    \vskip -0.2in
    \begin{tabular}{cccc}\\\toprule  
    $k_{\text{emb}}$ & $k_{\tau}$ & $k_{\phi}$ & $\tilde{k}_{\text{emb}}$\\\midrule
    32 & 1 & 4 & 32\\
    64 & 2 & 4 & 64\\ 
    128 & 2 & 8 & 128\\  
    256 & 3 & 10 & 240\\ 
    512 & 4 & 16 & 512\\ 
    \bottomrule
    \end{tabular}
\end{wraptable}
There are three input parameters left to address, $k_\tau$, $k_\phi$ and $n_{\text{batch}}$.
The batch size $n_{\text{batch}}$ does not affect the embeddings, only the time and space complexity, and can therefore be set based on the available computer memory.

To set $k_\tau$ and $k_\phi$, we assume that there is a desirable final embedding dimension $k_{\text{emb}}$.
Given $k_\tau$ and $k_\phi$, and two Boolean variables $f_{T}$ and $f_{A}$ indicating if the transposition and aggregation enhancements are used, the final \textsc{Digraphwave} embedding dimensionality is 
\begin{align*}
    \tilde{k}_{\text{emb}} &= k_f k_\tau  k_\phi, &&
    k_f = 2 \cdot 2^{f_{T}} \cdot 2^{f_{A}}.
\end{align*}
We set $k_\tau$ and $k_\phi$ so that $\tilde{k}_{\text{emb}}$ is as close as possible, but smaller than, the desired dimension $k_{\text{emb}}$.
Furthermore, we assume that $k_\phi$ should be larger than $k_\tau$ since in practice the range for $\tau$, i.e., $[1, R]$, will likely be small, e.g., $R=3$ and not many values for $\tau$ should therefore be needed. 
These assumptions are implemented as the heuristic formulae
\begin{align}
    \label{eq:digw_dims}
    k_\tau &= \lfloor (k_{\text{emb}} / k_f)^{1/3}  \rfloor, &&
    k_\phi = \lfloor k_{\text{emb}} /k_\tau  \rfloor.
\end{align}
In Table \ref{tab:dim_vals}, the resulting dimensions are shown for a few values of $k_{\text{emb}}$ using $f_{T} = 1$ and $f_{A} = 1$.
Future work will determine if a more principled approach can be used for setting $ k_\tau$ and $k_\phi $, but our experience is that this heuristic works well in practice.

With this, \textsc{Digraphwave} is fully specified. 
The pseudocode for \textsc{Digraphwave} is shown in Algorithms \ref{alg:expm}-\ref{alg:digraphwave}, and Algorithm \ref{alg:hyperparameters} shows how the hyperparameters are set.
Only $R$ and $k_{\text{emb}}$ are left to the practitioner to decide.
For our experiments, $R=3$ or $R=2$, and $k_{\text{emb}}=128$ have generally worked well.

\SetKwFunction{FFT}{FFT}%
\SetKwFunction{hstack}{hstack}%
\SetKwFunction{vstack}{vstack}%
\SetKwFunction{linspace}{linspace}%
\SetKwFunction{emptyinit}{empty}%
\SetKwFunction{expmmb}{expm\_batch}%
\begin{algorithm}
 \DontPrintSemicolon
\caption{Calculating a batch of reachability values }\label{alg:expm}
\Fn{\expmmb{$\lapnorm$, $B$, $\{ \tau_s \}_{s=0}^{k_\tau-1}$, $K=40$}}{
    
    $\mathbf{B} \gets $ Eq. \eqref{eq:b-batch} \Comment*[r]{Create batch matrix}
    $\mathbf{F} \gets \eye \mathbf{B}$\\
    
    \For{$s$ \KwTo \Range{$k_\tau$}}{
        $\{ a_{sk}(\tau) \}_{k=0}^{K} \gets \FFT \left( \left\{ \exp -\tau_s \left(1 +  e^{2 \pi i \frac{\kappa}{K}} \right) \right\}_{\kappa=0}^{K} \right)$ \Comment*[r]{Fast fourier transform}
        $\heatdistbatchb{*} \gets a_{0s} \mathbf{F}$
    }

    \For{$k$ \KwTo \Range{$1$, $K+1$}}{
        $\mathbf{F} \gets (\lapnorm - \eye) \mathbf{F}$\\
        \For{$s$ \KwTo \Range{$k_\tau$}}{
            $\heatdistbatchb{*} \gets \heatdistbatchb{*} + a_{sk} \mathbf{F}$ \\
        }
    }
    \Return $\{ \heatdistbatchb{*} \}_{s=0}^{k_\tau-1}$
}

\end{algorithm}

\SetKwFunction{normalise}{normalise}%
\SetKwFunction{numzeros}{numzeros}
\SetKwFunction{reachabilitytoembeddings}{reachability2embeddings}%
\begin{algorithm}
\caption{Compressing reachability values to core embeddings}\label{alg:embeddings}
\Fn{\reachabilitytoembeddings{$\{ \heatdistbatchb{*} \}_{s=0}^{k_\tau-1}$, $\{ t_k \}_{k=0}^{k_\phi -1}$}}{

    $\bm{\chi}_{B} \gets \emptyinit(\text{len}(B), 0 )$ \Comment*[r]{Initialise the core embeddings.}
    
    \For{$s$ \KwTo \Range{$k_\tau$}}{
        $\zeta_B \gets \numzeros(\heatdistbatchb{*}) / n$ \Comment*[r]{Fraction zeros for each node in batch}
        \tcc{The below sums are applied only to nonzero elements of $\heatdistbatchb{*}$}
        \For{$k$ \KwTo \Range{$k_\phi$}}{
            $\bm{\tilde\chi}_{B, 2k}(\tau_s) \gets \zeta_B + \sum_{j: \heatdistbatch{j} > 0} \cos( t_k  \heatdistbatch{j})$\\
            $\bm{\tilde\chi}_{B, 2k+1}(\tau_s)  \gets \sum_{j: \heatdistbatch{j} > 0} \sin( t_k  \heatdistbatch{j})$\\
        }
        $\bm{\chi}_{B} \gets \hstack( \bm{\chi}_{B} , \bm{\tilde\chi}_{B}(\tau_s) ) $\\
    
    }
    \Return $\bm{\chi}_{B}$
}
\end{algorithm}

\SetKwFunction{Digraphwave}{digraphwave}%
\SetKwFunction{Digraphwavecore}{digraphwave\_core}%
\SetKwFunction{threshold}{threshold}%

\begin{algorithm}
 \DontPrintSemicolon
\caption{\textsc{Digraphwave core}}\label{alg:digraphwave_core}
\Fn{\Digraphwavecore{$G$, $\{ \tau_s \}$, $\{ \theta_j \}$, $\{ t_k \}$, $n_{\text{batch}}$ }}{

    $\adj \gets \text{adj}(G)$ \Comment*[r]{Get the (weighted) adjacency matrix}
    $D_{jj} \gets \{ \sum_{i} A_{\edgeji} \}_j$ \Comment*[r]{Calculate out-degrees}
    
    $\lapnorm \gets \eyestarb - \adj \degDstarb^{-1}$\\
    $\bm{\chi} \gets \emptyinit(0, 2 k_\tau k_\phi )$ \Comment*[r]{Initialise the core embeddings.}
    
    \For{$j$ \KwTo \Range{$0$, $n$, $n_{\text{batch}}$}}{
        $B \gets \{j, j+1, \dots j+n_{\text{batch}} - 1\}$ \\
        $\{ \Psi_{*, B}(\tau_s) \}_{s=0}^{k_\tau-1} \gets$ \expmmb{$\lapnorm$, $B$, $\{ \tau_s \}_{s=0}^{k_\tau-1}$}\\ 
        $\{ \bar\Psi_{*, B}(\tau_s) \}_{s=0}^{k_\tau-1}\gets $ \threshold{$\{ \Psi_{*, B}(\tau_s) \}_{s=0}^{k_\tau-1}$, $\theta_j$}\\
        $\bm{\chi}_{B} \gets$ \reachabilitytoembeddings{$\{ \bar\Psi_{*, B}(\tau_s) \}_{s=0}^{k_\tau-1}$, $\{ t_k \}$} \\
        $\bm{\chi} \gets \vstack(\bm{\chi}, \bm{\chi}_{B}) $\\
    }
    \Return $\bm{\chi}$
}

\end{algorithm}

\begin{algorithm}
 \DontPrintSemicolon
\caption{\textsc{Digraphwave}}\label{alg:digraphwave}
\Fn{\Digraphwave{$G$, $\{ \tau_s \}$, $\{ \theta_{j} \}_j$,  $\{ t_k \}$, $n_{\text{batch}}$, $f_{T}$, $f_{A}$}}{
    $\bm{\chi} \gets$ \Digraphwavecore{$G$, $R$, $k_{\tau}$, $k_{\phi}$, $n_{\text{batch}}$}\\
    
    \tcc{Transposition enhancement}
    \If{$f_{T}$}{ 
        $\bm{\chi}^{(T)} \gets$ \Digraphwavecore{$G^T$, $R$, $k_{\tau}$, $k_{\phi}$, $n_{\text{batch}}$}\\
        $\bm{\chi} \gets \hstack(\bm{\chi}, \bm{\chi}^{(T)})$\\
    }
    \tcc{Aggregation enhancement}
    \If{$f_{A}$}{
        \For{$i$ \KwTo $V$}{
            $\bm{\chi}^{(A)}_i \gets \text{mean} ( \{\bm{\chi}_j \}_{j \in \mathcal{N}(i)} ) $\\
        }
        $\bm{\chi} \gets \hstack(\bm{\chi}, \bm{\chi}^{(A)})$\\
    }   
    \Return $\bm{\chi}$
}

\end{algorithm}

\SetKwFunction{Parameter}{set\_hyperparameters}%

\begin{algorithm}
 \DontPrintSemicolon
\caption{Setting hyperparameters}\label{alg:hyperparameters}
\Fn{\Parameter{$G$, $R$, $k_{\text{emb}}$, $f_{T}$, $f_{A}$}}{
    
    $\adj \gets \text{adj}(G)$ \Comment*[r]{Get the unweighted adjacency matrix}
    $\{ d_j \}_j \gets \{ \sum_{i} A_{\edgeji} \}_j$ \Comment*[r]{Calculate unweighted out-degrees}
    $\beta_j \gets $ Eq. \eqref{eq:betaj} \\

    $k_f \gets 2 \cdot 2^{f_{T}} \cdot 2^{f_{A}}$ \\
    $k_{\tau} \gets \lfloor (k_{\text{emb}} / k_f)^{1/3} \rfloor$\\
    $k_{\phi} \gets \lfloor (k_{\text{emb}} / k_f) / k_{\tau} \rfloor$\\
    
    $\{ \tau_s \} = \linspace(1, R, k_{\tau})$\\
    $\{ t_k \} = \linspace(\pi, k_\phi \pi, k_\phi)$ \\
    
    $\theta_{j} \gets \max \left( \min \left( \exp(-R), \frac{R \exp(-R)}{d_j } , \frac{\exp(-1)}{d_j \beta_i^{R-1} R!} \right), 10^{-6} \right)$\\
    
    $n_{\text{batch}} \gets$ Calculate based on available memory \\
    \Return $\{ \tau_s \}$, $\{ \theta_{j} \}_j$, $\{ t_k \}$, $n_{\text{batch}}$
}
\end{algorithm}

\section{Theoretical Analysis of Diffusion Timescales} \label{sec:theory_contribution}

In this section, the theoretical analysis used in Section \ref{sec:digraphwave} to determine appropriate values for $\taumin$ and $\taumax$ is produced.
Specifically, the analysis studies the speed at which the diffusion signal propagates away from the neighbourhood of the initialization node in a directed graph.
By considering one worst case scenario, in which the heat signal leaves the neighbourhood of the initialization node the fastest and thus lose the local structural information,  we find Theorem \ref{thm:main_theorem} which provides a lower bound on the total heat contained in the $R$-egonet of the initialization node at time $\tau$.
Several additional lemmas are also presented as part of the proof of Theorem \ref{thm:main_theorem}.
Lemma \ref{lemma:ss_analytical_expression} is particularly useful since it provides analytical expressions for the heat coefficients of all nodes in the used worst case graph.
Throughout this section, the heat diffusion terminology will be used to help concretize the theory.

\subsection{Preliminaries}

Throughout this section, the differential equation \eqref{eq:advection_diffusion_diff_eq} with a unit of heat initialized at a node $j$ in a  directed graph $G=(V,E)$ is studied: 
\begin{align}
    \dot{\ubold}(\tau) &= \frac{\text{d}\ubold}{\text{d} \tau} (\tau) = -\lapnorm \ubold (\tau) && 
    u_i(\tau=0)= b_i =
    \begin{cases}
        1 & \text{if $i=j$}\\
        0 & \text{otherwise.}
    \end{cases} \label{eq:heat_transport_diff_eq}
\end{align}
As before, $\tau \in \Rreal_{+}$ is the time variable, and $\Rreal_{+}$ is the set of real and nonnegative numbers.
The heat variable $\ubold(\tau) \in \Rreal_{+}^n$ is a compact notation for the reachability values $\heatdist$, which was used in Section \ref{sec:digraphwave}.
Furthermore, the dot notation $\dot{\ubold}$ is used for the derivative with respect to the diffusion time $\tau$.

We here consider the edge set $E$ as a variable over which functions can be maximized or minimized, and we use $\Alphab$ -- the normalized adjacency matrix, see \eqref{eq:alpha_values} -- to represent the variable edges.
That is, a positive value of $\alpha_{\edgeji} = [\Alphab]_{\edgeji} > 0$ implies $\edgeji \in E$, and conversely, $\alpha_{\edgeji} = 0$ implies $\edgeji \notin E$. 
Note that $\ubold$ is also a function of $\Alphab$, $\ubold(\tau, \Alphab) $, though we suppress dependence and use $\ubold(\tau)$ for brevity.

Moreover, we define the $(j, r)$-ball in $G$ as the set of nodes which can be reached with $r$ or fewer hops from $j$, and denote it mathematically as $\Ncore(j, r) = \{i | \spdist(j, i) \leq r\}$, where $\spdist(j, i)$ is the directed shortest path distance from $j$ to $i$. 
Conversely, we denote the set of nodes which cannot be reached with $r$-hops from $j$ as $\Nperiph(j, r) = V -  \Ncore(v, r)$, and refer this set as the $(j, r)$-periphery in $G$.
Furthermore, the notation $\mathcal{C}(j, r) = \{j | \spdist(j, j) = r\} = \Ncore(j, r) - \Ncore(j, r-1)$ is used for the set of nodes exactly $r$-hops away from node $i$.
These three definitions are visualized in Figure \ref{fig:core periphery}, where nodes belonging to the $(0, 2)$-ball are encircled and marked with green, while nodes in the $(0, 2)$-periphery reside outside the dashed circle and are marked with orange.
The nodes in $\mathcal{C}(0, 1)$ are labelled with $1$, the nodes $\mathcal{C}(0, 2)$ with $2$, etc.

Using these definitions, we further define $u_b (j, r, \tau)$ as the total heat contained in a $(j, r)$-ball as time $\tau$, and likewise $u_p (j, r, \tau)$ as the total heat contained in the $(j, r)$-periphery:
\begin{align*}
    u_b (j, r, \tau) = \sum_{i \in \Ncore(j, r)} u_i(\tau), &  && u_p (j, r, \tau)= \sum_{i \in \Nperiph(j, r)} u_i(\tau).
\end{align*}
It is clear from these definitions and the heat conservation in \eqref{eq:heat_conservation} that
\begin{equation} \label{eq:ball_plus_periphery}
    u_b (j, r, \tau) + u_p (j, r, \tau)= \sum_{i \in V} u_i(\tau) = 1.
\end{equation}

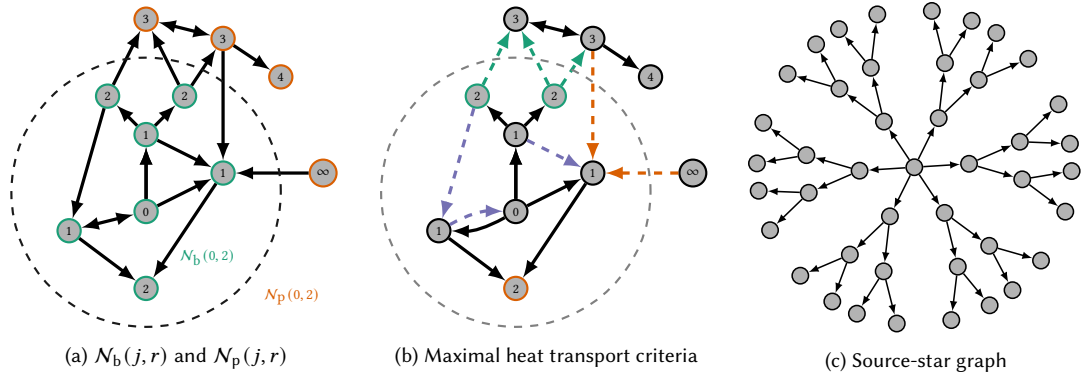
\begin{figure}[tp]
    \centering
    \begin{subfigure}{0.32\textwidth}
        \centering
        \resizebox{0.95\linewidth}{!}{
        \begin{tikzpicture}[
simplenode/.style={circle, draw, fill=black!30, thick},
c2cedge/.style={color=black, very thick},
c2pedge/.style={color=black, very thick},
p2cedge/.style={color=black, very thick},
p2pedge/.style={color=black, very thick},
neurtaledge/.style={color=black, very thick},
goodedge/.style={color=black, very thick},
badedge/.style={color=black, very thick},
]
\tiny
\def \ma {4}
\def \mb {6}
\def \mc {10}
\def \md {7}
\def \radiusa {1}
\def \radiusb {2}
\def \radiusc {5}
\node[simplenode, draw=cb1] (0) at (0,0) {0};
\node[simplenode, draw=cb1] (1) at (-1,-0.25) {1};
\node[simplenode, draw=cb1] (2) at (0,1) {1};
\node[simplenode, draw=cb1] (3) at (1,0.5) {1};

\node[simplenode, draw=cb1] (4) at (-0.5,1.5) {2};
\node[simplenode, draw=cb1] (5) at (0.5,1.5) {2};
\node[simplenode, draw=cb1] (6) at (0,-1) {2};

\node[simplenode, draw=cb2] (7) at (0,2.5) {3};
\node[simplenode, draw=cb2] (8) at (1,2.25) {3};
\node[simplenode, draw=cb2] (9) at (1.75,1.75) {4};
\node[simplenode, draw=cb2] (10) at (2.3,0.5) {$\infty$};

\draw[latex-latex, neurtaledge] (1) to (0);
\draw[-latex, neurtaledge] (0) -- (2);
\draw[-latex, neurtaledge] (0) -- (2);
\draw[-latex, neurtaledge] (0) -- (3);
\draw[-latex, neurtaledge] (2) -- (3);
\draw[-latex, neurtaledge] (1) -- (6);

\draw[-latex, neurtaledge] (2) -- (4);
\draw[-latex, neurtaledge] (2) -- (5);

\draw[-latex, goodedge] (4) -- (7);
\draw[-latex, goodedge] (5) -- (7);
\draw[-latex, goodedge] (5) -- (8);
\draw[-latex, badedge] (4) -- (1);
\draw[-latex, badedge] (8) -- (3);

\draw[latex-latex, neurtaledge] (7) -- (8);
\draw[-latex, neurtaledge] (8) -- (9);
\draw[-latex, badedge] (10) -- (3);

\draw[-latex, neurtaledge] (3) -- (6);

\draw[color=black!90, dashed, thick](0,0.25) circle (1.75);

\node[text=cb1] () at (0.8, -0.6) {$\Ncore(0, 2)$};
\node[text=cb2] () at (1.9, -1.1) {$\Nperiph(0, 2)$};

\end{tikzpicture}
        }
        \caption{$\Ncore(j, r)$ and $\Nperiph(j, r)$}\label{fig:core periphery}
        \Description{A graph with ten nodes labelled by their directed shortest path distance from node zero. Nodes at distance at most two are encompassed by a circle. The nodes within the circle are labelled as belonging to the core of node zero. The nodes outside the circle are labelled as belonging to the periphery of node zero.}
    \end{subfigure}
    ~
    \begin{subfigure}{0.32\textwidth}
        \centering
        \resizebox{0.95\linewidth}{!}{
        \begin{tikzpicture}[
simplenode/.style={circle, draw, fill=black!30, thick},
c2cedge/.style={color=black, very thick},
c2pedge/.style={color=cb1, very thick},
p2cedge/.style={color=cb3, very thick},
p2pedge/.style={color=black, very thick},
neurtaledge/.style={color=black, very thick},
goodedge/.style={color=cb1, very thick},
badedge/.style={color=cb2, very thick},
badedgeintra/.style={color=cb3, very thick},
]
\tiny
\def \ma {4}
\def \mb {6}
\def \mc {10}
\def \md {7}
\def \radiusa {1}
\def \radiusb {2}
\def \radiusc {5}
\node[simplenode] (0) at (0,0) {0};
\node[simplenode] (1) at (-1,-0.25) {1};
\node[simplenode] (2) at (0,1) {1};
\node[simplenode] (3) at (1,0.5) {1};

\node[simplenode, draw=cb1] (4) at (-0.5,1.5) {2};
\node[simplenode, draw=cb1] (5) at (0.5,1.5) {2};
\node[simplenode, draw=cb2] (6) at (0,-1) {2};

\node[simplenode] (7) at (0,2.5) {3};
\node[simplenode] (8) at (1,2.25) {3};
\node[simplenode] (9) at (1.75,1.75) {4};
\node[simplenode] (10) at (2.3,0.5) {$\infty$};

\path [-latex,badedgeintra,dashed,bend left=15] (1) edge node {} (0);
\path [-latex,neurtaledge,bend left=15] (0) edge node {} (1);
\draw[-latex, neurtaledge] (0) -- (2);
\draw[-latex, neurtaledge] (0) -- (2);
\draw[-latex, neurtaledge] (0) -- (3);
\draw[-latex, dashed, badedgeintra] (2) -- (3);
\draw[-latex, neurtaledge] (1) -- (6);

\draw[-latex, neurtaledge] (2) -- (4);
\draw[-latex, neurtaledge] (2) -- (5);

\draw[-latex, dashed, goodedge] (4) -- (7);
\draw[-latex, dashed, goodedge] (5) -- (7);
\draw[-latex, dashed, goodedge] (5) -- (8);
\draw[-latex, dashed, badedgeintra] (4) -- (1);
\draw[-latex, dashed, badedge] (8) -- (3);

\draw[latex-latex, neurtaledge] (7) -- (8);
\draw[-latex, neurtaledge] (8) -- (9);
\draw[-latex, dashed, badedge] (10) -- (3);

\draw[-latex, neurtaledge] (3) -- (6);

\draw[color=black!50, dashed, thick](0,0.25) circle (1.75);

\end{tikzpicture}
        }
        \caption{Maximal heat transport criteria}\label{fig:heat source maximal}
        \Description{A graph with ten nodes labelled by their directed shortest path distance from node zero. Nodes at distance at most two are encompassed by a circle. Three graph edges going from nodes at distance two are marked with green. Two edges are marked with orange; one going from a node at distance two to a node at distance one, the other doing from a node at distance three to a node at distance one. The one node at distance two without connections to nodes at distance three is marked with orange.}
    \end{subfigure}
    ~
    \begin{subfigure}{0.32\textwidth}
        \centering
        \resizebox{0.95\linewidth}{!}{
        \begin{tikzpicture}[
simplenode/.style={circle, draw, fill=black!30, thick, inner sep=3pt},
invisnode/.style={},
dashednode/.style={draw, circle, dashed, thick, fill=black!15, inner sep=3pt},
]

\def \mf {5}
\def \md {1}
\def \radius {1.0}
\def \margin {8}
\coordinate (O) at (0,0);
\node[simplenode] (0) at (0,0) {};
\pgfmathsetmacro\m{int(\mf + \md)};
\foreach \s in {1,...,\mf}
{
  \pgfmathsetmacro\i{floor((\s - 1)/ \m)};
  \coordinate (A) at ({360/\m * (2 + \i\m - \s)}:{\radius} );
  \coordinate (B0) at ({360/\m * (2 + \i\m - \s) - 10}:{\radius*1.98} );
  \coordinate (C00) at ({360/\m * (2 + \i\m - \s) - 5}:{\radius*2.87} );
  \coordinate (C01) at ({360/\m * (2 + \i\m - \s) - 20}:{\radius*2.87} );
  \coordinate (B1) at ({360/\m * (2 + \i\m - \s) + 10}:{\radius*1.98} );
  \coordinate (C10) at ({360/\m * (2 + \i\m - \s) + 20}:{\radius*2.87} );
  \coordinate (C11) at ({360/\m * (2 + \i\m - \s) +5}:{\radius*2.87} );

  \coordinate (C) at ({360/\m * (2 + \i\m - \s)}:{\radius*3} );
  \node[simplenode] (\s) at (A) {};
  \draw[-latex, thick] (0) -- (\s);
  \node[simplenode] (\s * \m) at (B0) {};
  \draw[-latex, thick] (\s) -- (\s * \m);
  
  \node[simplenode] (2 * \s * \m) at (C00) {};
  \draw[-latex, thick] (\s * \m) -- (2 * \s * \m);
  \node[simplenode] (2 * \s * \m) at (C01) {};
  \draw[-latex, thick] (\s * \m) -- (2 * \s * \m);
  
  \node[simplenode] (\s * \m) at (B1) {};
  \draw[-latex, thick] (\s) -- (\s * \m); 
  \node[simplenode] (2 * \s * \m) at (C10) {};
  \draw[-latex, thick] (\s * \m) -- (2 * \s * \m);
  \node[simplenode] (2 * \s * \m) at (C11) {};
  \draw[-latex, thick] (\s * \m) -- (2 * \s * \m);

}
\foreach \t in {1, ..., \md}
{
  \pgfmathsetmacro\s{int(\t + \mf)};
  \pgfmathsetmacro\i{floor((\s - 1)/ \m)};
  \coordinate (A) at ({360/\m * (2 + \i\m - \s)}:{\radius} );
  \coordinate (B0) at ({360/\m * (2 + \i\m - \s) - 10}:{\radius*1.98} );
  \coordinate (C00) at ({360/\m * (2 + \i\m - \s) - 5}:{\radius*2.87} );
  \coordinate (C01) at ({360/\m * (2 + \i\m - \s) - 20}:{\radius*2.87} );
  \coordinate (B1) at ({360/\m * (2 + \i\m - \s) + 10}:{\radius*1.98} );
  \coordinate (C10) at ({360/\m * (2 + \i\m - \s) + 20}:{\radius*2.87} );
  \coordinate (C11) at ({360/\m * (2 + \i\m - \s) +5}:{\radius*2.87} );
  
  \node[simplenode] (\s) at (A) {};
  \draw[-latex, thick] (0) -- (\s);
  \node[simplenode] (\s * \m) at (B0) {};
  \draw[-latex, thick] (\s) -- (\s * \m);
  
  \node[simplenode] (2 * \s * \m) at (C00) {};
  \draw[-latex, thick] (\s * \m) -- (2 * \s * \m);
  \node[simplenode] (2 * \s * \m) at (C01) {};
  \draw[-latex, thick] (\s * \m) -- (2 * \s * \m);
  
  \node[simplenode] (\s * \m) at (B1) {};
  \draw[-latex, thick] (\s) -- (\s * \m); 
  \node[simplenode] (2 * \s * \m) at (C10) {};
  \draw[-latex, thick] (\s * \m) -- (2 * \s * \m);
  \node[simplenode] (2 * \s * \m) at (C11) {};
  \draw[-latex, thick] (\s * \m) -- (2 * \s * \m);

}
    
\end{tikzpicture}
        }
        \caption{Source-star graph}\label{fig:source-star}
        \Description{A star graph  with all edges directed away from the centre node. Each branch of the star splits into three at every level away from the centre node.}
    \end{subfigure}
    \label{fig:theory_figures}
    \caption{
    \subref{fig:core periphery} Visualization of the 2-hop ball and periphery of node 0.
    \subref{fig:heat source maximal} Conceptualization of a graph fulfilling the first three criteria of Lemma \ref{lemma:heat_transport_maximal} for begin $(0,2)$-heat transport maximal. The three criteria are fulfilled if: 1) The green dashed edges crossing the boundary outwards exist. 2) Neither the dashed orange edges going from $\Nperiph(0,2)$ to $\Ncore(0,2)$ nor the dashed purple edges exist. 3) The bottom node marked with orange is connected to a node outside the dashed circular boundary.
    \subref{fig:source-star} Source-star graph with $d=6$, $\beta=2$ and $\ell=3$.
    }
\end{figure}

\subsection{Heat Containment Theorem}

The main result of the theoretical analysis is stated in the following theorem:
\begin{theorem} \label{thm:main_theorem}
For any directed graph $G=(V,E)$ with 
a heat diffusion process as defined by \eqref{eq:heat_transport_diff_eq},
the total heat in the $(j, R)$-ball $\Ncore(j, R)$ at time $\tau$ is greater or equal to $ Q(R+1, \tau)$, the normalized upper incomplete gamma function \cite{NIST_DLMF8.2.E4}:
\begin{align*}
    u_b (j, R, \tau) &= \sum_{i \in \Ncore(j, R)} u_i(\tau) \geq  Q(R+1, \tau) = \frac{\Gamma(R+1, \tau)}{\Gamma(R+1)}.
\end{align*}

\end{theorem}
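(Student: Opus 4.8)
The plan is to recast the claim as a family of bounds on the scalars $S_r(\tau) := u_b(j,r,\tau) = \sum_{i \in \Ncore(j,r)} u_i(\tau)$, namely $S_r(\tau) \ge Q(r+1,\tau)$ for every integer $r \ge 0$, and to prove these by induction on $r$ from a single differential inequality. Two facts set the stage. First, $S_r(0) = u_j(0) = 1$ for every $r \ge 0$, since $\mathbf{b}$ is one-hot at $j$ and $\spdist(j,j)=0$. Second, $\ubold(\tau) = \exp(-\tau\lapnorm)\mathbf{b}$ is analytic in $\tau$ with nonnegative entries (Section~\ref{sec:diffusion_intro}), so each $S_r$ is smooth, nonnegative, and may be differentiated term by term.

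The crux is the inequality $\dot S_r(\tau) \ge S_{r-1}(\tau) - S_r(\tau)$, with the convention $S_{-1} \equiv 0$. To obtain it I would start from the scalar form~\eqref{eq:advection_diffusion_diff_eq_scalar}: $\dot u_i$ is an inflow term — a sum of nonnegative contributions $\alpha$-weighted over the in-neighbours of $i$ — minus an outflow term equal to $u_i$ when $i$ has an out-neighbour and to $0$ otherwise. Summing $\dot u_i$ over $i \in \Ncore(j,r)$ and cancelling the contributions of edges with both endpoints in $\Ncore(j,r)$ — legitimate because the column of $\Alphab$ of a node with out-neighbours sums to $1$, so that node's outflow is exactly its current heat — leaves
\[
  \dot S_r(\tau) \;=\; (\text{heat flowing from }\Nperiph(j,r)\text{ into }\Ncore(j,r)) \;-\; (\text{heat flowing from }\Ncore(j,r)\text{ into }\Nperiph(j,r)).
\]
The first term is $\ge 0$ by nonnegativity of the heat coefficients. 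For the second, the triangle inequality for $\spdist$ forces every edge leaving $\Ncore(j,r)$ to start in $\mathcal{C}(j,r)$ and end in $\mathcal{C}(j,r+1)$; since the total $\alpha$-weight out of any such node is at most $1$, the outflow is bounded by $\sum_{i \in \mathcal{C}(j,r)} u_i(\tau) = S_r(\tau) - S_{r-1}(\tau)$. Combining the two estimates gives the inequality.

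The induction then closes quickly. Assume $S_{r-1}(\tau) \ge Q(r,\tau)$ for all $\tau \ge 0$; the base case $r=0$ is merely $S_{-1}\equiv 0 = Q(0,\cdot)$. Then $\dot S_r + S_r \ge Q(r,\tau)$. Using the closed form $Q(r+1,\tau) = e^{-\tau}\sum_{k=0}^{r}\tau^k/k!$ for integer argument, one checks that $Q(r+1,\cdot)$ obeys the matching relation $\tfrac{d}{d\tau}Q(r+1,\tau) + Q(r+1,\tau) = Q(r,\tau)$ together with $Q(r+1,0)=1$. Hence $g := S_r - Q(r+1,\cdot)$ satisfies $g(0)=0$ and $\dot g + g \ge 0$, so $\tfrac{d}{d\tau}(e^{\tau}g) \ge 0$ and therefore $e^{\tau}g(\tau) \ge g(0) = 0$, i.e. $S_r(\tau) \ge Q(r+1,\tau)$. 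Specializing to $r = R$ is the theorem.

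The step I expect to be the real work is the flow bookkeeping in the second paragraph: accounting carefully for the cancellation of edges internal to $\Ncore(j,r)$ — which needs the separate but trivial treatment of nodes with no out-neighbours, since these never shed heat — and pinning down that boundary-crossing edges run precisely from $\mathcal{C}(j,r)$ to $\mathcal{C}(j,r+1)$ and carry total $\alpha$-weight at most one. Once that identity is in hand, the integrating factor $e^{\tau}$, the elementary recurrence for $Q(r+1,\cdot)$, and the induction are routine. A more computational alternative would be to argue that heat escapes an $R$-ball fastest on a directed ``source-star'' graph and to evaluate $u_i(\tau)$ there in closed form; this is heavier but has the side benefit of yielding explicit per-node quantities.
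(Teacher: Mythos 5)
Your proposal is correct, but it takes a genuinely different route from the paper's proof. You establish the bound directly on an arbitrary graph through the differential inequality $\dot{S}_r(\tau) \ge S_{r-1}(\tau) - S_r(\tau)$ (with $S_{-1}\equiv 0$), whose derivation checks out: interior edges cancel exactly because a node's outflow equals $\eyestar_{ii}u_i$ with $\eyestar_{ii}\le 1$, inflow from the periphery is nonnegative, and any boundary-crossing edge must start in $\mathcal{C}(j,r)$ and carry total $\alpha$-weight at most one; combined with the recurrence $\tfrac{d}{d\tau}Q(r+1,\tau)+Q(r+1,\tau)=Q(r,\tau)$, $Q(r+1,0)=1$, an integrating factor $e^{\tau}$, and induction on $r$, this yields the claim (your base case rests on the harmless convention $Q(0,\cdot)\equiv 0$; alternatively one can start the induction at $r=0$ from $\dot{S}_0+S_0\ge 0$, $S_0(0)=1$). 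The paper instead argues by an extremal construction: it characterizes $(j,R)$-heat-transport-maximal edge sets via three criteria (Lemma \ref{lemma:heat_transport_maximal}), shows that the ball heat of any graph dominates that of a suitably padded source-star graph (Lemma \ref{lemma:least_contained_heat}), computes the matrix exponential on the source-star graph in closed form (Lemma \ref{lemma:ss_analytical_expression}), and sums those coefficients to obtain exactly $Q(R+1,\tau)$ (Corollary \ref{col:50percent_heat}). Your comparison argument is shorter and more elementary — no extremal graph, no isomorphism/padding step, no closed-form series computation — and in fact the same flow bookkeeping underlying your inequality appears inside the paper's induction in Lemma \ref{lemma:heat_transport_maximal}, there used with equality to characterize the worst case. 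What the paper's heavier route buys is (i) the explicit per-node heat coefficients on the source-star graph, which the algorithm reuses to set the thresholds in Section \ref{sec:digw:compress}, and (ii) the fact that the bound is attained with equality on source-star graphs, i.e.\ tightness; your argument does not directly deliver these, though it nearly identifies the equality case (no inflow, unit outflow at every shell).
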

\begin{proof}
In the trivial case where the out-degree of $j$ is $0$, $d_j = 0$, we have $u_b (j, R, \tau) = 1$ for all $R$, and since $1 \geq Q(R+1, \tau)$ the statement holds.

For the case $d_j \geq 1$, Lemma \ref{lemma:least_contained_heat} combined with Corollary \ref{col:50percent_heat} completes the proof.
\end{proof}

In other words, the normalized upper incomplete gamma function $Q(R+1, \tau)$ provides a lower bound for the amount of heat which can leave the $R$-hop neighbourhood of the heat initialization node. 
In Figure \ref{fig:heat_containment}, $Q(R+1, \tau)$ is shown as a function over $\tau$ for a set of $R$ values. 
The behaviour is as expected, the bound for the contained heat eventually reaches zero as $\tau$ grows and the proportion of contained heat for a given $\tau$ grows with $R$.

The rest of this section is devoted to completing the proof of Theorem \ref{thm:main_theorem}, in particular by establishing Lemma \ref{lemma:least_contained_heat} and \ref{col:50percent_heat}.
To summarize, this is done in three steps. 
First, three criteria on the graph edges for when the heat leaves the $(j, R)$-ball the fastest are established, Section \ref{sec:heat_transport_maximal}.
These criteria are then used to create the source-star graph for which the heat contained in $(j, R)$-ball is minimal, Section \ref{sec:source_star}.
Finally, we find that $Q(R+1, \tau)$ provides the lower bound by calculating the analytical expression of the matrix exponential for all nodes in  the source-star graph, and the proof of Theorem \ref{thm:main_theorem} is completed.

\subsection{Heat Transport Maximality} \label{sec:heat_transport_maximal}

\begin{definition}
Given a node $j \in V$ and a radius $R \in \{ 0, 1, 2, \dots \} $, we call a set of directed edges $\Alphab$, and the corresponding directed graph $G = (V, E)$, \emph{$(j, R)$-heat transport maximal} over the node set $V$ if the function 
\begin{align*}
    f(\Alphab, \tau) &= \int_0^\tau \frac{\mathrm{d}}{\mathrm{d} \tau} \left( u_p (j, R, s) - u_b (j, R, s) \right)  \text{d} s
\end{align*}
is maximal for every time $\tau$.
That is, $f(\Alphab, \tau) \geq f(\tilde\Alphab, \tau) $
for every $\tau$ and every set of edges $\tilde\Alphab$.
\end{definition}
Such a set of heat transport maximal edges $\Alphab$ is guaranteed to exist since the time derivatives are bounded, which follows from the differential equation \eqref{eq:heat_transport_diff_eq}, and the fact that the set of all $\Alphab$ for a given $V$ is closed.
Furthermore, it is easy to show that $(j,R$-heat transport maximality is equivalent to $u_b(j, R, \tau)$ being minimal:
\begin{lemma} \label{lemma:heat_transport_equivalence}
A directed graph is $(j, R)$-heat transport maximal iff $u_b(j, R, \tau)$ is minimal.
\end{lemma}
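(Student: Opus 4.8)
The plan is to collapse the definition of $(j,R)$-heat transport maximality into an explicit affine relation between $f(\Alphab,\tau)$ and $u_b(j, R, \tau)$, after which the equivalence is immediate. First I would apply the fundamental theorem of calculus to the integral defining $f$, obtaining
\[
    f(\Alphab, \tau) = \bigl( u_p(j, R, \tau) - u_b(j, R, \tau) \bigr) - \bigl( u_p(j, R, 0) - u_b(j, R, 0) \bigr).
\]
Then I would invoke the heat conservation identity \eqref{eq:ball_plus_periphery}, i.e. $u_p(j, R, s) = 1 - u_b(j, R, s)$ for every $s \geq 0$, to replace $u_p - u_b$ by $1 - 2 u_b$ at both endpoints.

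For the boundary term at $\tau = 0$, the initial condition of \eqref{eq:heat_transport_diff_eq} concentrates all the heat at node $j$, and since $\spdist(j,j) = 0 \leq R$ we have $j \in \Ncore(j, R)$; hence $u_b(j, R, 0) = 1$, $u_p(j, R, 0) = 0$, and the boundary term equals $-1$. Substituting gives
\[
    f(\Alphab, \tau) = \bigl( 1 - 2 u_b(j, R, \tau) \bigr) - (-1) = 2 \bigl( 1 - u_b(j, R, \tau) \bigr),
\]
so $f(\cdot, \tau)$ is a strictly decreasing affine function of $u_b(j, R, \tau)$, with the edge set $\Alphab$ entering $f$ only through this quantity. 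Consequently, for any two edge sets $\Alphab, \tilde\Alphab$ and any fixed $\tau$, the inequality $f(\Alphab, \tau) \geq f(\tilde\Alphab, \tau)$ holds if and only if $u_b(j, R, \tau; \Alphab) \leq u_b(j, R, \tau; \tilde\Alphab)$. Quantifying over all $\tilde\Alphab$ and all $\tau$ then yields exactly the claimed equivalence.

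There is no genuine analytic obstacle here: the whole content of the lemma is the bookkeeping identity $f = 2(1 - u_b)$, and the one point that needs care is the sign of the $\tau = 0$ boundary term. The remaining subtlety is purely logical — the ``for every $\tau$'' quantifier in the definition of heat transport maximality must be matched by ``for every $\tau$'' in the minimality statement — but since the reduction above is pointwise in $\tau$, this transfers verbatim, and the existence of an edge set that is simultaneously minimizing for all $\tau$ follows from the already-established existence of a $(j,R)$-heat transport maximal edge set.
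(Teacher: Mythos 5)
Your proof is correct and follows essentially the same route as the paper: both reduce $f(\Alphab,\tau)$ via the fundamental theorem of calculus and the conservation identity $u_p = 1 - u_b$, together with $u_b(j,R,0)=1$, to the affine relation $f(\Alphab,\tau) = 2 - 2\,u_b(j,R,\tau)$, from which the equivalence of maximizing $f$ and minimizing $u_b$ (pointwise in $\tau$) is immediate.
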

\begin{proof}
Using $u_b (j, R, 0)=1$ and $u_p (j, R, 0)=0$ and \eqref{eq:ball_plus_periphery},
\begin{align*}
    f(\Alphab, \tau) 
    &= \int_0^\tau  \dot{u}_p (j, r, s) - \dot{u}_b (j, r, s)   \text{d} s 
    = u_p (j, R, \tau) - u_b (j, R, \tau) + 1
    = 1 -  u_b (j, R, \tau)  - u_b (j, R, \tau) + 1
    = 2 -  2 u_b (j, R, \tau).
\end{align*}
So when $f(\Alphab, \tau)$ is maximal, $u_b (j, R, \tau)$ is minimal, and vice-versa.
\end{proof}

Intuitively, the set of edges which are heat transport maximal should all point away from the initialization node, as edges pointing towards it only serve to contain the heat.
This intuition is formalized through the three criteria in the following lemma, and visualized in Figure \ref{fig:heat source maximal}.
\begin{lemma} \label{lemma:heat_transport_maximal}
A graph for which the following three criteria are fulfilled for every $r \leq R$ is $(j, R)$-heat transport maximal.
\begin{enumerate}
    \item There exist no edges going from $\Nperiph(j, r)$ to $\Ncore(j, r)$.
    \item All nodes in $\mathcal{C}(j, r)$ have at least one outgoing edge.
    \item All outgoing edges from nodes in $\mathcal{C}(j, r)$ go to $\Nperiph(j, r)$.
\end{enumerate}
\end{lemma}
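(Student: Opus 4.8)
The plan is to invoke Lemma~\ref{lemma:heat_transport_equivalence} to reduce the statement to showing that a graph $G^\star=(V,E^\star)$ meeting the three criteria for every $r\le R$ achieves the pointwise minimum of $u_b(j,R,\tau)$ over all edge sets on $V$. I would establish this in two movements: first pin down what a criteria-satisfying graph does, then show that nothing can do better, by an exchange argument.

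For the first movement, let $w_r(\tau)=\sum_{i\in\mathcal{C}(j,r)}u_i(\tau)$ be the heat in the $r$-th shell. Feeding criteria 1--3 into \eqref{eq:advection_diffusion_diff_eq_scalar}: criterion~1 at radius $0$ says $j$ has no in-edges, so $\dot w_0=-w_0$; for $1\le r\le R$, criterion~1 cancels every inflow into $\Ncore(j,r)$ from farther shells, criterion~3 at $r-1$ together with the shortest-path bound $\spdist(j,i)\le\spdist(j,k)+1$ routes the entire outflow of shell $r-1$ into shell $r$, and criterion~2 gives $\eyestar_{ii}=1$ for $i\in\mathcal{C}(j,r)$, so each shell drains at rate $w_r$. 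This yields the closed chain
\[
\dot w_0=-w_0,\qquad \dot w_r=w_{r-1}-w_r\ \ (1\le r\le R),\qquad w_0(0)=1,
\]
whose solution is $w_r(\tau)=e^{-\tau}\tau^r/r!$; hence \emph{every} criteria-satisfying graph has the same value $u_b(j,R,\tau)=\sum_{r=0}^{R}e^{-\tau}\tau^r/r!=Q(R+1,\tau)$.

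For the second movement, I would take an arbitrary graph $G$ and rewrite it into a criteria-satisfying graph through a finite sequence of edge edits, each one not increasing $u_b(j,R,\cdot)$ at any time: (i) delete all edges from $\Nperiph(j,R)$ into $\Ncore(j,R)$ --- such edges lie on no $j$-rooted shortest path, so all distances (hence $\Ncore(j,R)$ itself) are untouched, and they only carry heat back into the ball; (ii) for a node at distance $r\le R$ with an out-edge to distance $\le r$, redirect that edge to some node at distance $r+1$ (one exists unless the whole graph is within distance $r$, in which case $u_b\equiv 1$ and the claim is trivial); (iii) give every out-degree-zero node at distance $\le R$ one out-edge into $\Nperiph$. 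Ordering these edits by increasing $r$, so that repairing one radius does not spoil a smaller one, the end product $G'$ satisfies all three criteria, hence $u_b(j,R,\tau;G')=Q(R+1,\tau)$ by the first movement; since $u_b$ never rose, $u_b(j,R,\tau;G^\star)=u_b(j,R,\tau;G')\le u_b(j,R,\tau;G)$. As $G$ was arbitrary, $G^\star$ minimizes $u_b$, i.e.\ is $(j,R)$-heat transport maximal.

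The hard part will be justifying the individual edits in (i)--(iii): each must be shown not to increase $u_b(j,R,\tau)$ for any $\tau$, which calls for a monotone-comparison principle for the order-preserving semigroup $\expmlap$ (using that $-\lapnorm$ is essentially nonnegative) and, worse, care about the fact that adding or rerouting an edge can move the very sets $\Ncore(j,R),\Nperiph(j,R)$ against which $u_b$ is defined, so the edits must be sequenced to keep those sets stable. A slicker route that sidesteps the exchange bookkeeping --- worth at least mentioning --- is the Poisson random-walk reading of \eqref{eq:heat_transport_diff_eq} (Appendix~\ref{app:rw_intepretation}): after $n$ jumps the walk sits at distance at most $n$ from $j$, so $u_b(j,R,\tau)\ge \Pr(N_\tau\le R)=Q(R+1,\tau)$ for \emph{every} graph, where $N_\tau\sim\mathrm{Poisson}(\tau)$ counts the jumps, with equality exactly when each jump strictly increases the distance and the walk never stalls --- which is precisely what criteria 1--3 encode.
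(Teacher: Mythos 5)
Your primary route --- rewriting an arbitrary graph into a criteria-satisfying one through a sequence of edge edits --- has a genuine gap, and you have located it yourself: each individual edit must be shown not to increase $u_b(j,R,\tau)$ at \emph{every} time $\tau$, and no such comparison principle is supplied. This is not a formality. Deleting or rerouting one out-edge changes the out-degree normalization at its source node, so the generator $-\lapnorm$ is perturbed in a way that is neither a signed perturbation supported outside the ball nor something the order-preservation of $\expmlap$ in the initial condition controls; moreover, edits of types (ii) and (iii) can change shortest-path distances and hence the sets $\Ncore(j,r)$ against which $u_b$ is defined, so the sequencing issue is real. Establishing the pointwise-in-$\tau$ monotonicity of each edit is essentially as hard as the lemma itself, so as written the exchange argument does not close.

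However, the two ingredients you already have form a complete and correct proof once the exchange step is discarded. Your first movement is sound: under criteria 1--3 for all $r\le R$, the shell masses satisfy the closed chain $\dot w_0=-w_0$, $\dot w_r=w_{r-1}-w_r$ for $1\le r\le R$, hence $u_b(j,R,\tau)=\sum_{r=0}^{R}e^{-\tau}\tau^r/r!=Q(R+1,\tau)$ (small slip: at $r=0$ it is criterion 2, not criterion 1, that gives $\eyestar_{jj}=1$). And the Poisson random-walk reading you relegate to a closing remark gives, for \emph{every} graph, $u_b(j,R,\tau)\ge \Pr(N_\tau\le R)=Q(R+1,\tau)$, because each step of the column-stochastic kernel $\eye-\lapnorm$ either stalls at a sink or moves along one edge, so after $k\le R$ steps the walk is surely inside $\Ncore(j,R)$; the equality characterization you sketch is not even needed. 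A criteria-satisfying graph therefore attains the universal minimum of $u_b$, and Lemma \ref{lemma:heat_transport_equivalence} converts this into $(j,R)$-heat transport maximality. This is a genuinely different argument from the paper's, which proves the lemma by induction on $R$, decomposing $\dot u_b(j,R,\tau)$ into inflow and boundary-outflow terms and arguing that the three criteria annihilate the inflows and maximize the outflow, with minimality of $u_b(j,R-1,\cdot)$ as the induction hypothesis; the explicit value $Q(R+1,\tau)$ only emerges afterwards through the source-star comparison (Lemmas \ref{lemma:least_contained_heat} and \ref{lemma:ss_analytical_expression}). Your walk-based route delivers minimality and the explicit bound in one stroke --- indeed it proves Theorem \ref{thm:main_theorem} directly, bypassing the source-star construction --- at the price of leaning on the probabilistic interpretation rather than working purely with the differential equation.
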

\begin{proof}
See Appendix \ref{app:proof_heat_transport_maximal}.
\end{proof}

\subsection{The Source-Star Graph} \label{sec:source_star}

The three criteria in Lemma \ref{lemma:heat_transport_equivalence} can be used to construct a set of graphs which are heat transport maximal for all radii for a specific node.
We these graphs \emph{source-star graphs}, and an example is visualized in Figure \ref{fig:source-star}. 
A source-star graph is defined by our nonnegative integer parameters, $\Gss(d, \beta, \ell, n_{\text{isolated}})$.
The parameter $d \geq 1$ is the out-degree of the centre node, which has node index $0$.
Each of these $d$ mutually disconnected neighbour nodes is the root node of a $\beta$-tree, where $\beta \geq 1$ is the tree branch factor. 
The leaf nodes of each branch lie $\ell \geq 0$ steps away from the centre node.
The final parameter $n_{\text{isolated}} \geq 0$ specifies a number of isolated nodes which can be included in the graph.
These are needed merely for proof technical reasons, and no isolated nodes are used unless otherwise specified, i.e., $\Gss(d, \beta, \ell) = \Gss(d, \beta, \ell, 0)$.
All the edges of $\Gss$ are directed away from the centre node, and thus the three criteria in Lemma \ref{lemma:heat_transport_maximal} are easy to verify for $j=0$ and each $R \leq \ell - 1$.

The number of nodes in $\Gss(d, \beta, \ell, n_{\text{isolated}})$ is $n = n_{\text{isolated}} + 1 + d \sum_{k=0}^{\ell-1} \beta^k$, and $| \Ncore(0, R) | = 1 + d \sum_{k=0}^{R-1} \beta^k$.
Moreover, the non-isolated nodes of $\Gss$ can be indexed for $i > 0$ as 
\begin{align}
    i = \nu_l +  d \sum_{k=0}^{l-2} \beta^k, \label{eq:ss_index}
\end{align}
where $\nu_l \in \{1, \dots, d \beta^{l-1} \}$ is an index in each layer $\mathcal{C}(0, l)$.
The inversion formula for \eqref{eq:ss_index}, i.e., going from $i$ to $l$ and $\nu_l$, can be found in the Appendix \ref{app:source_star} together with a block-matrix representation of $\lapnorm$ for $\Gss(d, \beta, \ell)$.

The next question to answer is how the heat diffusion on a source-star graph can bed compared with the diffusion on another graph.
The answer is provided in this next lemma.
\begin{lemma} \label{lemma:least_contained_heat}
For any directed graph $G=(V, E)$ and node $j \in V$, with out-degree $d_j \geq 1$ and with $\ell$ being the distance to the furthest away, but reachable, node from $j$, i.e.\ $\ell = \max_{k \in V} \{\spdist(j, k) | \spdist(j, k) < \infty \}$, there exists a source-star graph $\Gss(d_j, \beta, \ell, n_{\text{isolated}}) = (V^{(\text{ss})}, E^{(\text{ss})})$, with $\Ncore^{(\text{ss})}(0, R) \subseteq V^{(\text{ss})}$ such that $|\Ncore^{(\text{ss})}(0, R)| \geq |\Ncore(j, R)|$ and $u_b (j, R, \tau)  \geq u_b^{(\text{ss})} (0, R, \tau)$, i.e.\
\begin{align*}
    \sum_{i \in \Ncore(j, R)} u_i(\tau) \geq  \sum_{i \in \Ncore^{(\text{ss})}(0, R)} u_i(\tau),
\end{align*}
for any $R \leq \ell - 1$ and all $\tau \in \Rreal_{+}$.
\end{lemma}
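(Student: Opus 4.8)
The idea is to bound $u_b^G(j,R,\tau)$ below by a minimum over edge sets on a \emph{larger} node set, identify that minimum with the source-star, and never actually evaluate it. For a finite node set $W$ with a distinguished node $s$, let $m_W(R,\tau)$ denote the infimum of $u_b(s,R,\tau)$ over all edge sets on $W$. Two elementary facts drive the proof. First, $u_b^G(j,R,\tau)\ge m_V(R,\tau)$ trivially, since the edge set of $G$ is one of the competitors. Second, if $|W'|\ge|W|$ then $m_{W'}(R,\tau)\le m_W(R,\tau)$: given any edge set on $W$, inject $W\hookrightarrow W'$ sending $s$ to the distinguished node of $W'$, copy the edges onto the image, and leave the $|W'|-|W|$ remaining nodes isolated; since a node with no incident edges contributes a zero row and a zero column to $\lapnorm$, these nodes never carry heat under $\dot{\ubold}=-\lapnorm\ubold$, so they change neither $\Ncore(s,R)$ nor the value $u_b(s,R,\tau)$, and hence every value attainable on $W$ is attainable on $W'$.

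First I would fix the source-star. Put $d=d_j$ and let $\ell$ be the largest finite distance from $j$ in $G$, which is a legal depth because $R\le\ell-1$. Choose the branch factor $\beta$ large enough that $|\Ncore^{(\text{ss})}(0,R)|=1+d_j\sum_{k=0}^{R-1}\beta^{k}\ge|\Ncore(j,R)|$; this is automatic for $R\le 1$ and holds for any $\beta\ge|V|$ when $R\ge 2$. Then choose $n_{\text{isolated}}$ large enough that the node set $V^{(\text{ss})}$ of $\Gss(d_j,\beta,\ell,n_{\text{isolated}})$ satisfies $|V^{(\text{ss})}|\ge|V|$; concretely $n_{\text{isolated}}=\max\{0,\,|V|-1-d_j\sum_{k=0}^{\ell-1}\beta^{k}\}$. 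These two choices do not interact, so both can be met simultaneously. As noted when source-stars were introduced, and unaffected by the isolated padding, $\Gss(d_j,\beta,\ell,n_{\text{isolated}})$ satisfies the three criteria of Lemma~\ref{lemma:heat_transport_maximal} for source node $0$ and every radius $r\le R\ (\le\ell-1)$.

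The proof is then a short chain. By Lemma~\ref{lemma:heat_transport_maximal} the chosen source-star is $(0,R)$-heat transport maximal, so by Lemma~\ref{lemma:heat_transport_equivalence} its ball heat is minimal over edge sets on $V^{(\text{ss})}$, i.e.\ $u_b^{(\text{ss})}(0,R,\tau)=m_{V^{(\text{ss})}}(R,\tau)$. Applying the second elementary fact with $W=V$ (distinguished node $j$) and $W'=V^{(\text{ss})}$ (distinguished node $0$), which is permitted because $|V^{(\text{ss})}|\ge|V|$, gives $m_V(R,\tau)\ge m_{V^{(\text{ss})}}(R,\tau)=u_b^{(\text{ss})}(0,R,\tau)$. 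Chaining with the first fact yields $u_b^G(j,R,\tau)\ge u_b^{(\text{ss})}(0,R,\tau)$ for all $\tau\in\Rreal_{+}$, and together with $|\Ncore^{(\text{ss})}(0,R)|\ge|\Ncore(j,R)|$ this is exactly the claim.

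The step requiring the most care is the isolated-padding argument: one must verify that appending isolated nodes genuinely leaves the diffusion started at the distinguished node unchanged (immediate from the vanishing row/column of $\lapnorm$ noted above), and that the injection can be taken to fix the distinguished node so that $\Ncore(j,R)$ maps exactly onto the copied part of $\Ncore^{(\text{ss})}(0,R)$, with the genuinely isolated nodes of $V^{(\text{ss})}$ lying at distance $\infty$ from $0$ and so neither entering any ball nor affecting the criteria. It is worth stressing that this route never computes $u_b^{(\text{ss})}$; that analytical evaluation, and the resulting bound $u_b^{(\text{ss})}(0,R,\tau)\ge Q(R+1,\tau)$, is the business of Lemma~\ref{lemma:ss_analytical_expression} and Corollary~\ref{col:50percent_heat}.
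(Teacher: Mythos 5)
Your proof is correct and follows essentially the same route as the paper's: you choose $\beta$ and $n_{\text{isolated}}$ so that the source-star's ball and node set are at least as large as those of $G$, invoke Lemma~\ref{lemma:heat_transport_maximal} and Lemma~\ref{lemma:heat_transport_equivalence} to identify the source-star's ball heat as the minimum over edge sets on $V^{(\text{ss})}$, and bridge the differing node-set sizes by padding with isolated nodes that carry no heat. The paper packages that last step as an explicit intermediate graph $\tilde G$ with a node bijection onto $V^{(\text{ss})}$, while you phrase it as monotonicity of the infimum $m_W$ under enlarging the node set, which is the same padding argument.
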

\begin{proof}
See Appendix \ref{app:proof_least_contained_heat}.
\end{proof}

This lemma completes the first part of the proof for Theorem \ref{thm:main_theorem}. i.e., the source-star graph provides a lower bound for the heat contained in the neighbourhood of the initialization node for all graphs.
It is left to show that this bound is given by $Q(R+1, \tau)$.
To do so, we expand the first column of the matrix exponential to express the heat coefficients for all nodes in a source-star graph when the heat is initialized at the centre node:
\begin{lemma} \label{lemma:ss_analytical_expression}
Given a source-star graph $\Gss(d, \beta, \ell)$ with out-degree normalized Laplacian $\lapnormstar$ as in \eqref{eq:laplace:source-star} and a diffusion process as defined by \eqref{eq:heat_transport_diff_eq} with $j=0$, we have
\begin{equation}
    u^{(ss)}_i(\tau) =
    [\exp (-\tau \lapnormstar ) \mathbf{b}]_i = 
    \begin{cases}
        \exp(-\tau) & \text{if $i = 0$} \\
        \displaystyle \frac{d^{-1}}{\beta^{l-1}}\frac{ \tau^l}{l!} \exp(- \tau) & \text{if $i \in \{1, \dots, n - d \beta^{\ell -1}  - 1 \}$} \\
        \displaystyle  \frac{d^{-1}}{\beta^{\ell-1}} P(\ell, \tau) & \text{if $i \in \{n - d \beta^{\ell -1}, \dots, n - 1\} $},
    \end{cases}
\end{equation}

where $P(\ell, \tau)$ is the normalized version of the lower incomplete gamma function \cite{NIST_DLMF8.2.E4}.

\end{lemma}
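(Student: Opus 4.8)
The plan is to exploit the automorphism symmetry of the source-star graph to collapse the $n$-dimensional diffusion system into $\ell+1$ scalar ODEs, one per layer, and then solve those by induction. Every pair of nodes in the same layer $\mathcal{C}(0,l)$ is related by an automorphism of $\Gss(d,\beta,\ell)$ fixing the centre $0$; since $\lapnormstar$ commutes with the permutation matrices of these layer-preserving automorphisms and the initial condition $\mathbf{b}$ is invariant under them, the solution $\ubold(\tau)$ stays invariant for all $\tau$. Hence all nodes of $\mathcal{C}(0,l)$ share a common heat coefficient $p_l(\tau)$, and it suffices to determine $p_0,\dots,p_\ell$.

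First I would read off the reduced equations from the block form of $\lapnormstar$ in \eqref{eq:laplace:source-star}. The centre has no in-edges and out-degree $d\ge1$, so $\dot p_0=-p_0$, $p_0(0)=1$, giving $p_0(\tau)=e^{-\tau}$. A layer-$l$ node for $2\le l\le\ell-1$ receives heat through a single in-edge from its layer-$(l-1)$ parent (out-degree $\beta$, normalized weight $1/\beta$) and has out-degree $\beta\ge1$, so $\dot p_l=-p_l+\tfrac1\beta p_{l-1}$; for $l=1$ the parent is the centre, so $\dot p_1=-p_1+\tfrac1d p_0$; and for the leaf layer $l=\ell$ the out-degree is $0$, the diagonal Laplacian entry vanishes, and $\dot p_\ell=\tfrac1\beta p_{\ell-1}$ (with $\beta$ replaced by $d$ when $\ell=1$). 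Every $p_l$ with $l\ge1$ starts at $0$.

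Next I would solve by induction on $l$, with the ansatz $p_l(\tau)=\dfrac{\tau^l}{d\,\beta^{l-1}\,l!}\,e^{-\tau}$ for $1\le l\le\ell-1$. The base case $l=1$ is the elementary linear ODE $\dot p_1=-p_1+\tfrac1d e^{-\tau}$ with solution $\tfrac1d\tau e^{-\tau}$. For the step, substituting the ansatz for $p_{l-1}$ into $\dot p_l=-p_l+\tfrac1\beta p_{l-1}$ and differentiating the proposed $p_l$ verifies it together with $p_l(0)=0$ — a one-line check. For the leaves, inserting the formula for $p_{\ell-1}$ into $\dot p_\ell=\tfrac1\beta p_{\ell-1}$ and integrating from $0$ gives
\[
p_\ell(\tau)=\frac{1}{d\,\beta^{\ell-1}}\int_0^\tau\frac{s^{\ell-1}}{(\ell-1)!}\,e^{-s}\,\mathrm{d}s=\frac{1}{d\,\beta^{\ell-1}}\,P(\ell,\tau),
\]
with $P(\ell,\tau)=\gamma(\ell,\tau)/\Gamma(\ell)$ the normalized lower incomplete gamma function. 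Translating layer indices back to node indices via the inversion of \eqref{eq:ss_index} identifies nodes $1,\dots,n-d\beta^{\ell-1}-1$ with the non-centre, non-leaf nodes (with $l$ determined by that inversion) and nodes $n-d\beta^{\ell-1},\dots,n-1$ with the leaves of $\mathcal{C}(0,\ell)$, which is exactly the three-way split in the statement. A consistency check against heat conservation, $p_0+\sum_{l=1}^{\ell-1}d\beta^{l-1}p_l+d\beta^{\ell-1}p_\ell=e^{-\tau}\sum_{l=0}^{\ell-1}\tau^l/l!+P(\ell,\tau)=Q(\ell,\tau)+P(\ell,\tau)=1$, confirms the answer.

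The computations are routine; the points requiring care are (i) justifying the symmetry reduction, i.e.\ that automorphism-invariant initial data remains invariant under the flow — immediate once one notes $\lapnormstar$ commutes with layer-preserving permutations; (ii) tracking the in-edge normalization, which is $1/d$ into layer $1$ but $1/\beta$ into layers $\ge2$; and (iii) the degenerate case $\ell=1$, where layer $1$ is both the first and the leaf layer, the formula collapses to $p_1(\tau)=\tfrac1d(1-e^{-\tau})=\tfrac1d P(1,\tau)$, and one should verify this directly against $\dot p_1=\tfrac1d p_0$. An alternative route would expand $[\exp(-\tau\lapnormstar)]_{i0}=\sum_k\frac{(-\tau)^k}{k!}[(\lapnormstar)^k]_{i0}$ directly, but the layer-wise ODE is cleaner.
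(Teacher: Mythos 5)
Your proposal is correct, but it takes a genuinely different route from the paper. The paper works directly with the Taylor series of the matrix exponential: it defines $b_{\nu_l}(l,k)=[\lapnormstar^k\mathbf{b}]_i$, reads off a system of recurrence relations in $(l,k)$ from the block structure of $\lapnormstar$ in \eqref{eq:laplace:source-star}, solves the two-variable recurrence via a separate auxiliary lemma (proved by induction in Appendix \ref{sec:app:recursion}), and then resums the series term by term, invoking a polyexponential series identity to recognize the leaf-layer sum as $\gamma(\ell,\tau)/(\ell-1)!$. You instead quotient by the layer-transitive automorphisms fixing the centre, collapse the system to $\ell+1$ scalar ODEs $\dot p_0=-p_0$, $\dot p_1=-p_1+\tfrac1d p_0$, $\dot p_l=-p_l+\tfrac1\beta p_{l-1}$, $\dot p_\ell=\tfrac1\beta p_{\ell-1}$, and solve them by induction, obtaining $P(\ell,\tau)$ directly from the integral definition of the lower incomplete gamma function. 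Your symmetry reduction is legitimate (the automorphism group of $\Gss$ acts transitively on each layer and fixes node $0$, so the invariant subspace is preserved; alternatively, uniqueness of solutions to the linear ODE lets you simply verify the layer-constant ansatz against the full system), and your bookkeeping of the $1/d$ versus $1/\beta$ in-weights and the degenerate $\ell=1$ case matches the paper's formulas. What each approach buys: yours avoids both the auxiliary recurrence lemma and the series-resummation manipulations, and the heat-conservation check $Q(\ell,\tau)+P(\ell,\tau)=1$ falls out immediately; the paper's computation stays closer to the matrix-power coefficients $\lapnormstar^k\mathbf{b}$, which mirrors how the truncated Taylor approximation is actually evaluated in the algorithm, at the cost of a more technical argument.
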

\begin{proof}
See Appendix \ref{app:ss_analytical_expression}.
\end{proof}

Using Lemma \ref{lemma:ss_analytical_expression}, it is straightforward to show the following corollary, which completes the proof of Theorem \ref{thm:main_theorem}.
\begin{corollary}\label{col:50percent_heat}
 For any source-star graph $G_{\text{source-star}}(d, \beta, \ell)$, at time $\tau$, the total heat within in the $(0,R)$-ball $\Ncore(0, R)$, where $R < \ell$, is $\Gamma(R+1, \tau) / \Gamma(R+1)$, i.e.\ 
 \begin{equation}
  u_b^{(\text{ss})} (\tau, R, 0) = \sum_{i \in \Ncore^{(\text{ss})}(0, R)} u_i(\tau) = 
     \frac{\Gamma(R+1, \tau)}{\Gamma(R+1)}.
\end{equation}
\end{corollary}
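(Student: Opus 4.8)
The plan is to substitute the closed-form heat coefficients from Lemma~\ref{lemma:ss_analytical_expression} directly into the definition of $u_b^{(\text{ss})}(\tau, R, 0)$ and recognize the resulting finite sum as the normalized upper incomplete gamma function. The key simplification is that, because $R < \ell$, every node in $\Ncore^{(\text{ss})}(0,R)$ lies in a layer $\mathcal{C}(0,l)$ with $0 \le l \le R \le \ell-1$; hence none of them are leaf nodes, and only the first two cases of Lemma~\ref{lemma:ss_analytical_expression} are ever invoked.

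First I would record the layer cardinalities of $\Gss(d,\beta,\ell)$: the centre node $0$ is the only node in $\mathcal{C}(0,0)$, and $|\mathcal{C}(0,l)| = d\beta^{l-1}$ for $1 \le l \le \ell$, which follows from the construction (each of the $d$ roots spawns a $\beta$-ary tree). Then I would evaluate
\begin{equation*}
 u_b^{(\text{ss})}(\tau, R, 0)
 = \sum_{i \in \Ncore^{(\text{ss})}(0,R)} u_i(\tau)
 = e^{-\tau} + \sum_{l=1}^{R} |\mathcal{C}(0,l)| \cdot \frac{d^{-1}}{\beta^{l-1}} \frac{\tau^l}{l!} e^{-\tau}.
\end{equation*}
The per-layer count $d\beta^{l-1}$ cancels the prefactor $d^{-1}\beta^{-(l-1)}$ exactly, so the whole expression collapses to $u_b^{(\text{ss})}(\tau, R, 0) = e^{-\tau}\sum_{l=0}^{R}\tau^l/l!$.

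Finally I would invoke the standard closed form of the incomplete gamma function at integer first argument, $\Gamma(R+1,\tau)/\Gamma(R+1) = e^{-\tau}\sum_{l=0}^{R}\tau^l/l!$ (a well-known series representation of $Q(R+1,\tau)$; see \cite{NIST_DLMF8.2.E4} and its associated identities), which immediately gives the claimed value. There is no genuine obstacle in this argument — it is a routine computation — and the only points requiring care are getting the layer cardinalities right and using the strict inequality $R < \ell$ to stay in the non-leaf regime of Lemma~\ref{lemma:ss_analytical_expression}. Together with Lemma~\ref{lemma:least_contained_heat}, this corollary completes the proof of Theorem~\ref{thm:main_theorem}.
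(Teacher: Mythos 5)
Your proposal is correct and follows essentially the same route as the paper's own proof: sum the closed-form heat coefficients of Lemma \ref{lemma:ss_analytical_expression} layer by layer, let the layer cardinality $d\beta^{l-1}$ cancel the prefactor $d^{-1}\beta^{-(l-1)}$, and recognize $e^{-\tau}\sum_{l=0}^{R}\tau^l/l!$ as $\Gamma(R+1,\tau)/\Gamma(R+1)$. The only cosmetic difference is that the paper derives this last identity by manipulating the Taylor series of $e^{\tau}$ and citing the DLMF, whereas you quote the standard closed form directly, which is equally valid.
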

\begin{proof}
See Appendix \ref{app:ss_analytical_expression}.
\end{proof}

\section{Related Works}\label{sec:related_works}

\subsection{Relation to \textsc{Graphwave}} \label{sec:related_works_gw}

This work is markedly inspired by the \textsc{Graphwave} algorithm for extracting structural node embeddings \cite{donnat_learning_2018}, and we indeed recognize this by naming our algorithm \textsc{Digraphwave}.
Nevertheless, there are considerable differences between \textsc{Digraphwave} and \textsc{Graphwave} which motivates as an independent work.

Foremost, \textsc{Digraphwave} is applicable to directed graphs, unlike \textsc{Graphwave}.
\textsc{Graphwave} is formulated using the eigenvalue decomposition of the Laplacian, which, as explained in Section \ref{sec:numerical_computation_of_mat_exp}, does not generalize to directed graphs.
We instead use the diffusion equation \eqref{eq:advection_diffusion_diff_eq} and its matrix exponential solution to formulate \textsc{Digraphwave}, which directly generalize to directed graphs.
Practically, \textsc{Graphwave} uses Chebyshev series approximation to compute its reachability values, which, again, does not generalize to directed graphs, see Section \ref{sec:numerical_computation_of_mat_exp}. 
As detailed in Section \ref{sec:digw:reachability_values_computation}, \textsc{Digraphwave} uses Taylor series approximation with a provable error bounds for any directed and weighted graph.

Furthermore, disparate methods for setting the hyperparameters are used. 
While the intuition for setting timescales is shared between this work and \cite{donnat_learning_2018}, i.e., $\taumin$ should be large for the diffusion signal to differ from the initialization, and $\taumax$ small enough to avoid reaching an uninformative stationary distribution, the proposed expressions are not similar.
Through theoretical motivation, \textsc{Graphwave} proposes $\taumin = -\log{0.95} / \sqrt{\lambda_1 \lambda_{n-1}}$ and $\taumax = -\log{0.85} / \sqrt{\lambda_1 \lambda_{n-1}}$, where $\lambda_1$ is the smallest non-zero eigenvalue of the normalized Laplacian and $\lambda_{n-1}$ is its largest eigenvalue.
In practice, however, the official \textsc{Graphwave} implementation uses the approximations $\lambda_1 = 1/n$ and $\lambda_{n-1} = 2$, likely due to the eigenvalues being expensive to compute.
However, with these approximations, $\taumin$ and $\taumax$ grow with $\sqrt{n}$, resulting in $\taumin \approx 3.6$ and $\taumax \approx 11.5$ for a graph with 10k nodes, and $\taumin \approx 36$ and $\taumax \approx 115$ for a graph with 1M nodes.
At such large timescales, not only will the heat diffusion have reached an uninformative stationary distribution, but it is also likely that the polynomial approximation of the matrix exponential has large errors, see Figure \ref{fig:taylor_error_bound}. 
For \textsc{Digraphwave}, $\taumin=1$ and $\taumax=R$ are used, where $R$, in our experiments, is no larger than $3$ even for graphs with $100k$ nodes.

Another minor hyperparameter difference is the evaluation of the empirical characteristic function, with sample points $\texttt{linspace}(0, 100, k_\phi)$ being used for \textsc{Graphwave} and $\texttt{linspace}(\pi, \pi k_\phi, k_\phi)$ for \textsc{Digraphwave}, as motivated in Section \ref{sec:digw:compress}.

In this work, we also present the transposition and aggregation embedding enhancements, Section \ref{sec:enhancements}.
The transposition enhancement addresses an issue specific for directed graphs while the aggregation enhancement is equally useful for undirected graphs, meaning that \textsc{Graphwave} would also benefit from it.

Finally, with the use of batching and parallelism, our implementation of \textsc{Digraphwave} scales to much larger graphs than the official implementation of \textsc{Graphwave}, which runs into memory bottlenecks even for medium-sized graphs, see Figure \ref{fig:scale:results}.

\subsection{Other Structural Embedding Extraction Methods for Directed Graphs} \label{sec:related_works_other}


The \textsc{ReFeX} algorithm \cite{henderson_its_2011} is a seminal work on structural node embedding extraction, and which is applicable to both directed and weighted graphs.
In the algorithm, each node is initialized with a set of handcrafted features, i.e., node degrees and counts of the incoming/outgoing edges to each node's egonet, which are then enhanced through recursive concatenation of aggregations of the features of each node's neighbours. 
The resulting exponential growth in embedding dimension is remedied by feature pruning.
In \cite{henderson_its_2011}, the proposed feature pruning is a combination of logarithmic binning and correlation measurements.
However, no official implementation of \textsc{ReFeX} is available online, and various custom implementations all use different versions of the pruning.
In our experiments, our own implementation of \textsc{ReFeX}, which performs pruning via QR-factorization \cite{refex_reimplementation22}, is used as a baseline. 
The main hyperparameter of this implementation is the pruning tolerance. 


\textsc{EMBER} is a recently proposed method for extracting structural embeddings \cite{ember_19}, and an extension to the \textsc{xNetMF} algorithm \cite{regal_xnetmf_18} for applicability to large and directed graphs.
\textsc{xNetMF} produces embedding in two steps.
First, structural signatures are extracted as sets of weighted node degrees for in all nodes in the $K$-egonet each node.
These sets are made fixed length using logarithmic binning. 
In the second step, these signatures are compressed to embeddings using matrix factorization.
In \cite{ember_19}, \textsc{EMBER} is applied for node classification on large email datasets, and in \cite{regal_xnetmf_18} \textsc{xNetMF} is used for network alignment.

Though both \textsc{xNetMF} and \textsc{Digraphwave} consist of signature extraction and compression, each of these steps are significantly different.
In  \textsc{xNetMF}, the neighbourhood used to extract the signatures is defined by the hyperparameter $K$. 
Unlike $R$ in \textsc{Digraphwave}, $K$ is a hard cut-off, meaning that no nodes further away than $K$ are considered for the signature. 
In practice, $K=2$ is used \cite{regal_xnetmf_18, ember_19}, meaning that only small neighbourhoods are considered.
In \textsc{Digraphwave} $R$ is a soft constrained, meaning that even if $R=2$ is used, all reachability values in the entire network are contained in the signature.
The number of terms of the Taylor series approximation, indecently also $K$, does constitute a hard threshold for the neighbourhood size considered. However, we set it to $40$, which includes the entire graph for real-world graphs with small diameters \cite{watts_collective_1998}.
For the signature compression, \textsc{xNetMF} relies on matrix factorization, which is an iterative procedure learned per graph.
Consequently, projection matrices need to be stored to repeat the same embedding extraction on a new graph.
\textsc{Digraphwave} instead uses the empirical characteristic function, which is deterministic and thus does not require any information to be stored in order to repeat the embedding extraction on a new graph.

In \cite{furutani_graph_2020}, Furutani et al.\ propose to apply  \textsc{Graphwave} to the hermitian Laplacian of a graph -- a matrix with complex elements where the connection strengths and directions of edges are separately encoded using the magnitude and phase of the matrix elements -- to extract structural embeddings.
The advantage of this representation is that the eigenvalue decomposition is guaranteed to exist for the hermitian Laplacian, which enables theoretical analysis using tools from the signal processing literature.
In addition to the  \textsc{Graphwave} hyperparameters, this approach has one hyperparameter associated with the hermitian Laplacian matrix, $q$, which the authors set to $0.02$ for their experiments.

Both \cite{ember_19} and \cite{furutani_graph_2020} also applied \textsc{Graphwave} to directed graphs as a baseline in their experiments, but without any modifications to the algorithm.

\section{Experiments} \label{sec:experiments}

In this section, \textsc{Digraphwave} is evaluated in terms of embedding quality and scalability and compared to other structural node embedding methods applicable to directed graphs.
The baseline methods and a description of the general experiment settings is found in Section \ref{sec:baselines_and_compute}.
The scalability experiment is detailed, and the results are presented in Section \ref{sec:scalability}.

We evaluate quality of the \textsc{Digraphwave} embeddings both on synthetic and real graph datasets.
The synthetic dataset is an adopted and modified version of a previously suggested benchmark dataset for undirected graphs \cite{junchen_2021}, and it is described in Section \ref{sec:synthetic_dataset}.
Using the synthetic dataset, an ablation study of the transposition and aggregation enhancements is performed, Section \ref{sec:ablation}, and a comparison to other embedding methods, Section \ref{sec:synthetic_performance}.
On real graph datasets, the embeddings are evaluated using node classification, Section \ref{sec:enron_data}, and network alignment Section \ref{sec:network_alignment}.

\subsection{Specification of Embedding Methods and Computational Environment} \label{sec:baselines_and_compute}

We compare \textsc{Digraphwave} to the other structural embedding methods discussed in Section \ref{sec:related_works}: \textsc{ReFeX} \cite{henderson_its_2011, refex_reimplementation22}, \textsc{EMBER} \cite{ember_19}, \textsc{Graphwave} from 2018 \cite{donnat_learning_2018} and \textsc{Graphwave} using the Hermitian Laplacian by Furutani et al.\ \cite{furutani_graph_2020}.
Additionally, a trivial baseline is included where the embeddings consist of the in and out node degrees, both unweighted and weighted if edge weights are available.

The embedding dimensions are set to be as similar as possible for each method.
For the scalability experiments 64 dimensions are used, while 128 dimensions are used otherwise.
\textsc{ReFeX} automatically determines the embedding dimension from the data, and only a maximum embedding dimension may be specified.
By setting the maximum dimension to twice that specified for the other methods, and the prune tolerance to $10^{-6}$, the automatically determined \textsc{ReFeX} dimensions were generally close to those of the other methods.

For \textsc{EMBER}, we use the default hyperparameters $K=2$, $\gamma=1$ and discount factor $0.1$, and likewise for \textsc{Graphwave}, the default hyperparameters described in Section \ref{sec:related_works_gw} are used with $k_\tau=2$.
Furutani et al.\ appear to use $3620$ dimensions for their embeddings \cite{furutani_graph_2020}, which we reduce to match the other methods by using fewer timescales while keeping the other hyperparameter values.
Importantly, once extracted, all embeddings are standardized, zero mean and unit standard deviation, before being applied in the experiment tasks.
Forgoing standardization may result in significantly worse performance in our experience.

All experiments were performed using Intel Xeon Gold 6230 CPUs @ 2.10 GHz, with 8 cores for the scalability experiment and synthetic data, and 16 cores for the real data.
For \textsc{Digraphwave}, the only method with GPU support, one Nvidia RTX 2080Ti 11 GB was used in the scalability experiments, and three for the experiments on real data.
The memory requirements are very different between the methods, and some experimentation was required to avoid out-of-memory issues.
In the end,  64 GB was used for the scalability experiments, 32 GB for the synthetic data experiments and 148 GB for the experiments using real data.
it is worth noting that for \textsc{Digraphwave}, 32 GB was enough to have memory left-to-sparse in all experiments.

\begin{figure}[htp]
    \centering
    \begin{subfigure}{0.49\textwidth}
    \centering
    \includegraphics[width=\linewidth]{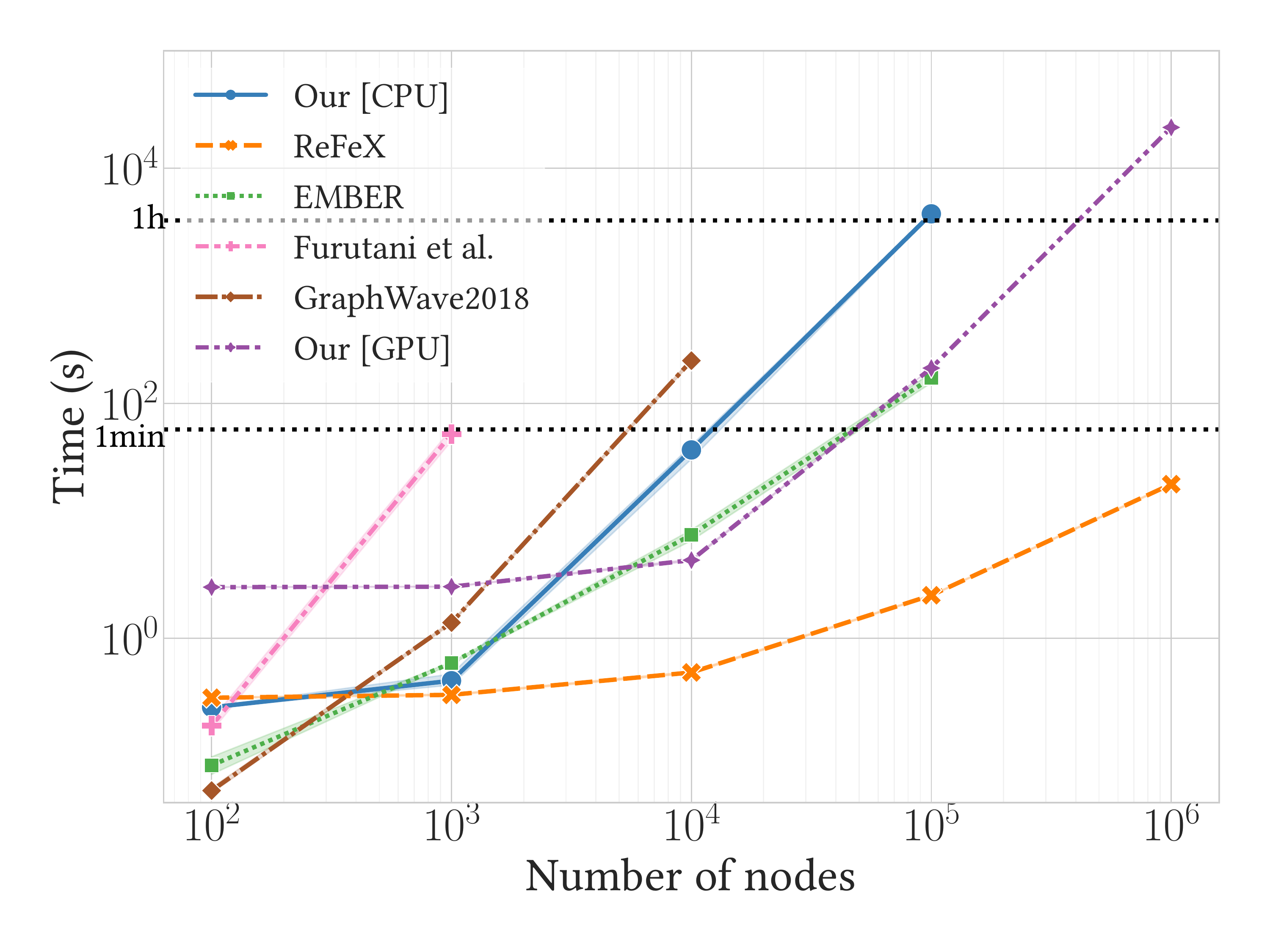}
    \caption{1 edge per node}
    \end{subfigure}
    ~
    \begin{subfigure}{0.49\textwidth}
    \centering
    \includegraphics[width=\linewidth]{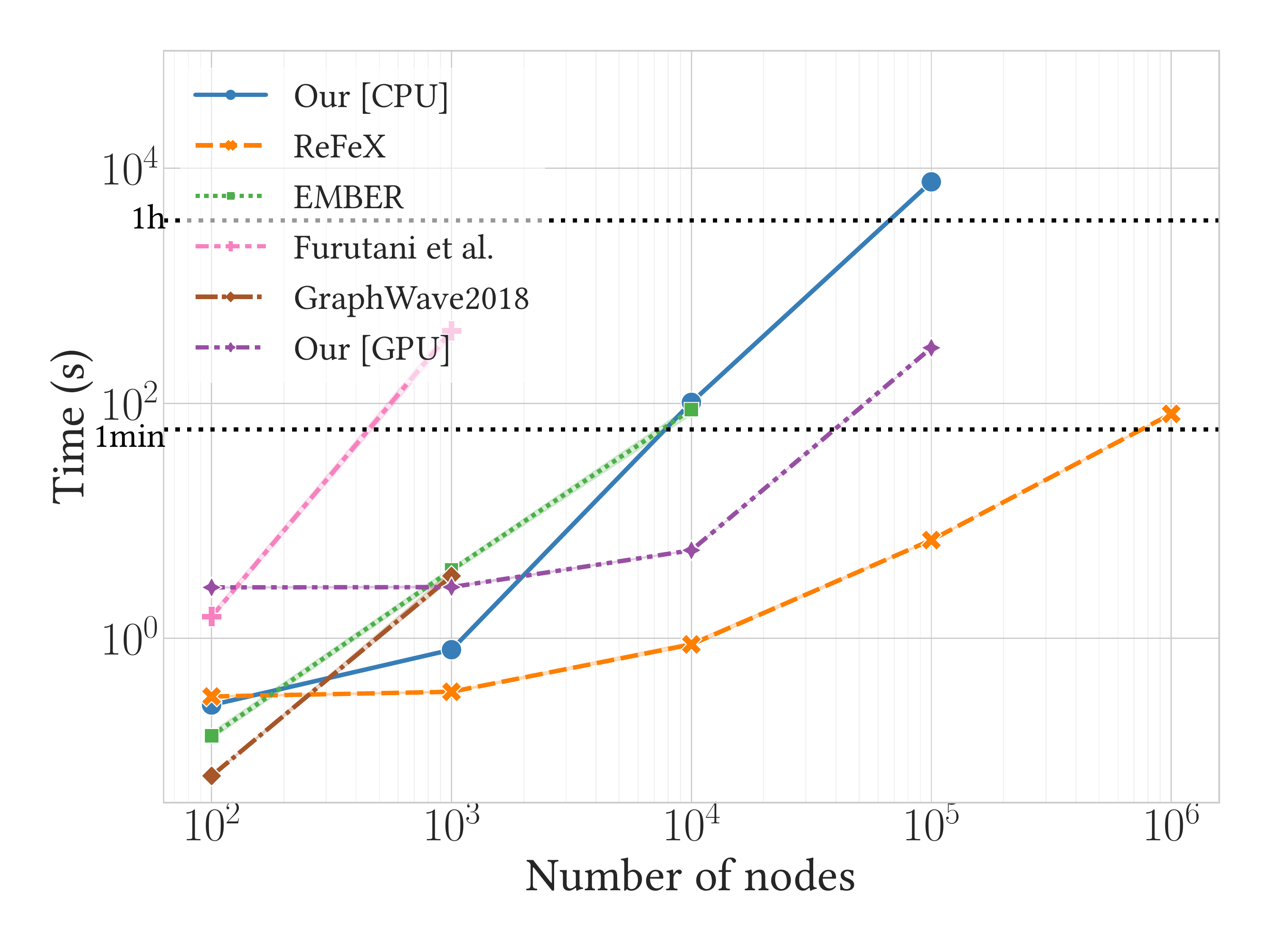}
    \caption{10 edges per node}
    \end{subfigure}
    \Description{%
    Two plots showing the scalability performance of six different embedding methods. The first plot shows the performance when one edge per node in the Barabási-Albert graph is used, the second for ten edges per node. The y axes show the mean computation time in seconds, and the x axes the number of nodes in the graphs.
    }
    \caption{%
    The scalability experiment results. The lines show the mean runtime in second vs number of nodes in the Barabási-Albert graph. Standard deviations are shown as shaded areas around each line, though these are hard to distinguish due to being insignificantly small.%
    }
    \label{fig:scale:results}
\end{figure}

\subsection{Scalability} \label{sec:scalability}

To measure the scalability, random graphs of successively increasing size were generated and the embedding computation time for each graph was measured.
The Barabási-Albert preferential attachment model \cite{barabasi_graph} was used for the generation, chosen since it produces graphs exhibiting power law degree distributions, similar to many real graphs.
The experiment was performed using 1, 5 and 10 edges per node\footnote{More specifically, these numbers are the degrees of nodes added to the graph during the generation process.}, and for each graph size, 5 different graphs were sampled over which the mean and standard deviation were calculated.
For methods where an out-of-memory error was raised, or when the timeout limit of 8 hours was reached, were aborted.
\textsc{Digraphwave} used $R=3$ and both transposition and aggregation enhancements. 

In Figure \ref{fig:scale:results}, the mean computation time in seconds for 1 and 10 edges per node are shown on log-log axes, with standard deviations barely distinguishable  as shaded areas around the lines. See Appendix \ref{app:scalability} 5 edges per node.
\textsc{ReFeX} is by far the fastest method with close to linear time complexity.
The intercept observed for small graph sizes is caused by JIT compilation, and could be removed by instead using an implementation with ahead-of-time compilation.
Similarly, for \textsc{Digraphwave}, reported as “Our”, the intercept of the CPU implementation is caused by JIT compilation, while CUDA initialization overhead is the main cause when GPU is used. 

The performance of CPU-\textsc{Digraphwave} and \textsc{EMBER} are similar, and on large graphs both are outperformed by GPU-\textsc{Digraphwave}, which is the only method, in addition to \textsc{ReFeX}, able to complete the embedding extraction for a graph with 1M nodes.
The main bottleneck for \textsc{EMBER} is its memory consumption, with out-of-memory errors terminating the experiment after 100K nodes for 1 and 5 edges per node, and already after 10K nodes when 10 edges per node is used.
Neither \textsc{Graphwave} nor the method by Furutani et al.\ are scalable with their current implementations, and the experiments were terminated even for moderately sized graph.

\begin{figure}[htp]
\centering
\includegraphics[width=\linewidth]{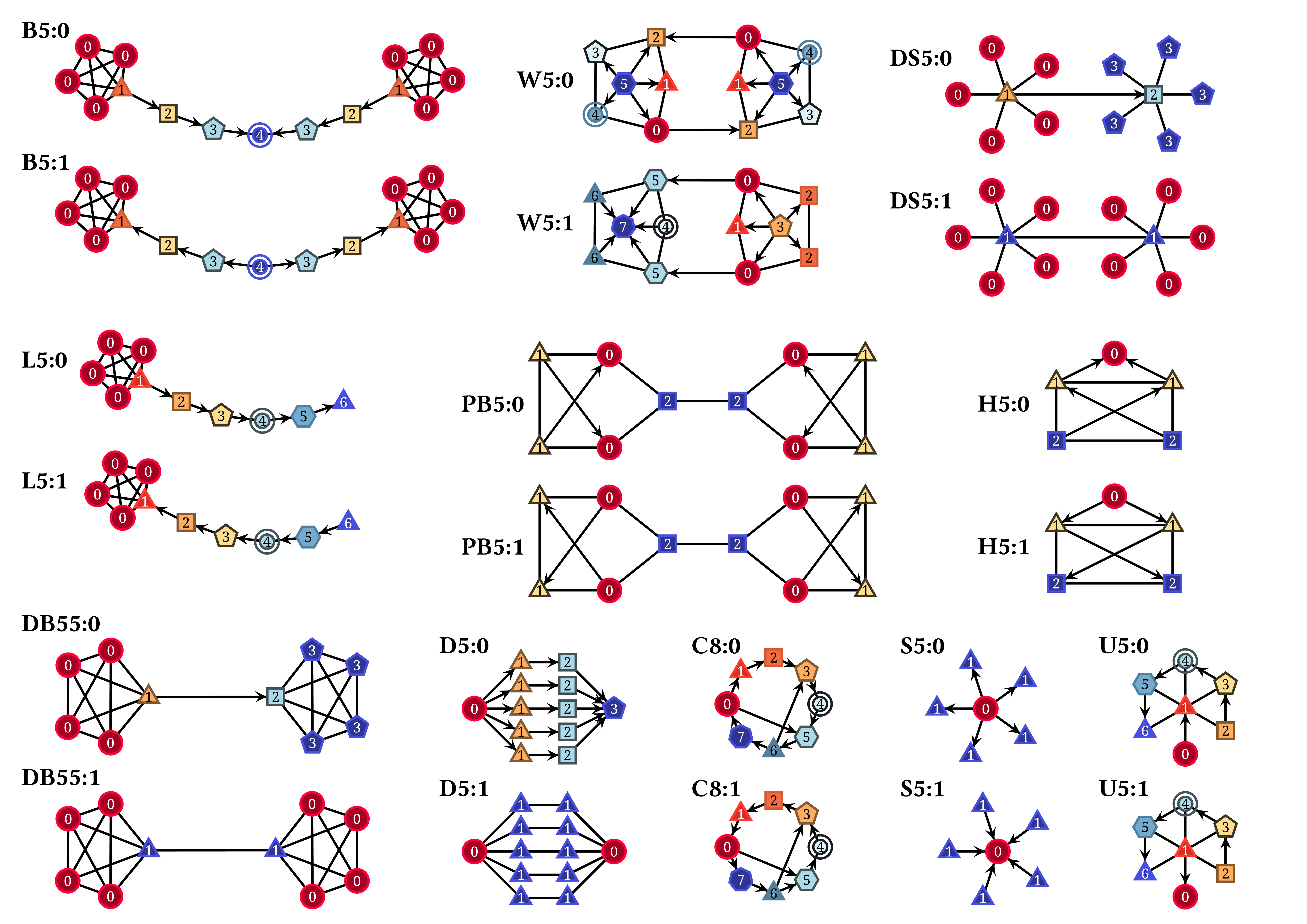}
\Description{%
22 different graphs arranged in a grid and labelled with names. All nodes are marked using numbers, colours and shapes to indicate their automorphic identitiy.%
}
\caption{%
Synthetic dataset of directed graphs based on the undirected dataset proposed in \cite{junchen_2021}. The numbers, colours, and shapes of the nodes show the automorphic identity of the node within its component. No automorphic identity is shared between disconnected components, with a total of 102 distinct identities present. Edges where both directions are present are shown as undirected edges.%
}
\label{fig:synth_dataset}
\end{figure}

\subsection{The Synthetic Dataset and Experiments Setup} \label{sec:synthetic_dataset}

In \cite{junchen_2021}, a synthetic dataset of undirected graphs with known automorphic equivalence identities was proposed for evaluation of structural node embedding methods, and we here adopt this dataset as well as their extrinsic evaluation experiment.
That is, the node equivalence identities are used as node labels to train and test classification and clustering models with the embeddings as input, and the classification and clustering scores are used as measures of the embedding quality.
Since our focus is directed graphs, we modify the dataset accordingly, manually defining two directed graphs for each undirected graphs in the original dataset.
Importantly, the directed graphs were defined so that no automorphic identity is \emph{between} the different graphs.
Moreover, we introduce a new graph named D5, while the larger composed graphs and the graphs for regular equivalence testing were excluded. 
The full set of directed graphs, along with their names, is shown in Figure \ref{fig:synth_dataset}.
The colour, shape, and number of each node indicates it automorphic identity in the context of its component. 
Note again that no automorphic identity is shared between the different graphs.

We also modify experiment setup used in \cite{junchen_2021}.
Instead of only treating the synthetic dataset graphs as disconnected components, we also evaluate the embeddings' robustness to noise in the local structures by randomly connecting them to a common circular graph.
The number of connecting noise edges per graph is gradually increased from zero to five.
An example using three connecting edges for the C8 and W5 graphs is shown in Figure \ref{fig:synth_dataset_composed}.
In the experiments, all graphs are included and repeated 10 times, each connected with different edges to the common circular graph. 
For the classification evaluation, five of these are included in the train set and the other five in the test set.
The number of nodes in the circular graph is scaled so that each inserted graph receives its own segment to which it is connected.
In Figure \ref{fig:synth_dataset_composed}, the C8 and W5 graphs are only repeated two times for visual clarity.
Moreover, for each number of connecting edges, five different composed graphs are created to measure the means and standard deviations of the classification and clustering performance.
\begin{wrapfigure}{r}{0.4\textwidth}
  \centering
    \includegraphics[width=1\linewidth]{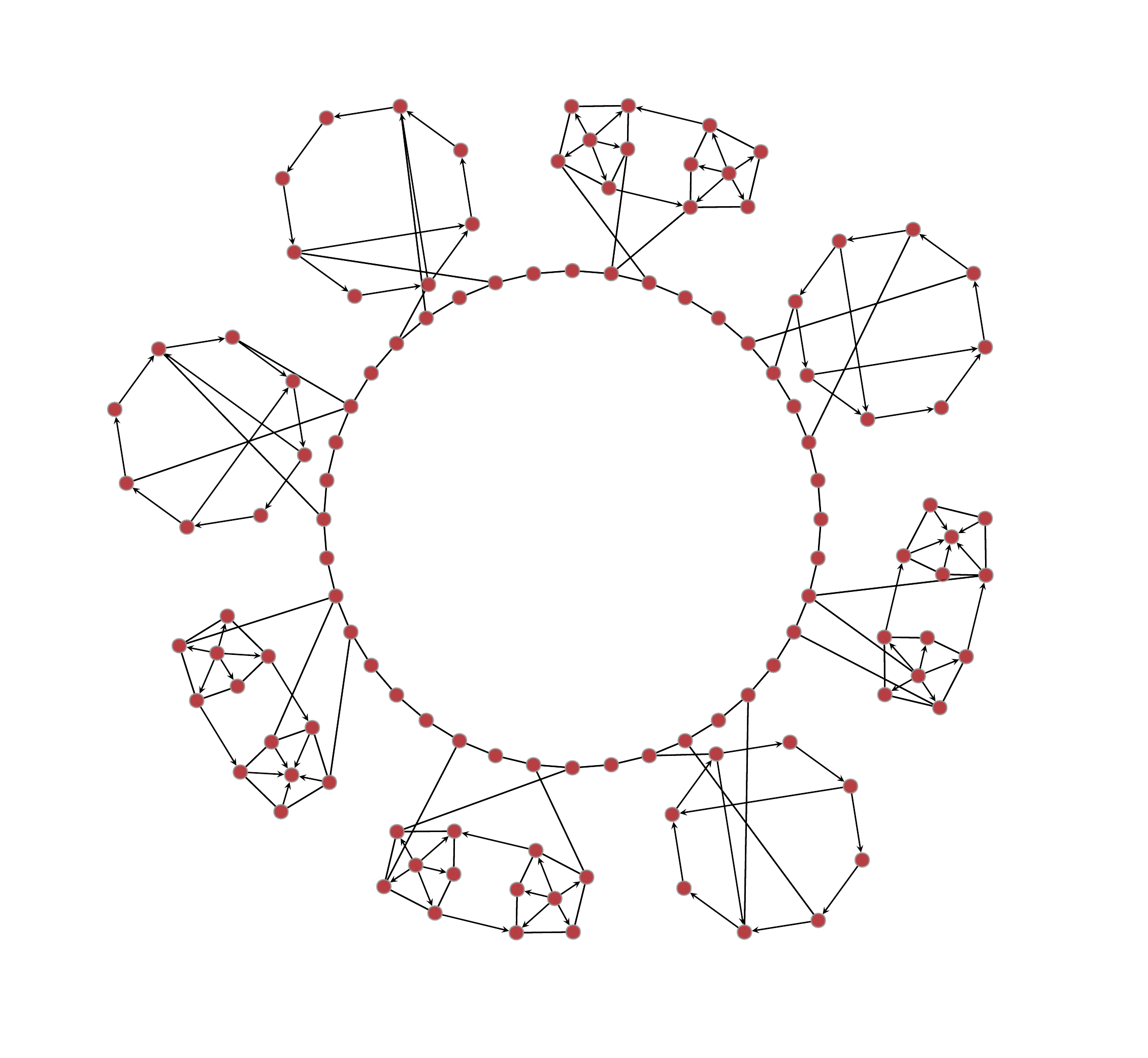}
    \Description{A circualar graph surrounded by eight subgraphs, each conencted with three edges to the circular graph.}
    \caption{A graph composed using the C8 and W5 synthetic dataset graphs repeated two times. Each dataset graph is connected to separate segments of the circular graph with three randomly sampled edges.}
    \label{fig:synth_dataset_composed}
    \vspace{-20pt}
\end{wrapfigure}
Each of these graphs consists of 3240 nodes, out of which 1100 are in the circular graph and not included in the classification or clustering, with 102 distinct automorphic identity classes. 
The whole dataset is available through the project page\footnote{\url{https://ciwanceylan.github.io/digraphwave-project/}}.

Macro F1 score is used to measure the classification performance, and adjusted mutual information for the clustering.
The classification model used for all embedding methods is a multinomial logistic regression model with $\ell_2$ penalty $C=1.0$ and class weighted loss function, and $k$-means with 102 clusters, i.e., the number of  ground truth automorphic identities, is used for the clustering, which is the same models as used in \cite{junchen_2021}.

\subsection{Ablation Study} \label{sec:ablation}

Our first usage of the synthetic dataset is an ablation study of the two enhancements, transposition (T) and aggregation (A).
We set $R=3$ and $k_{\text{emb}}=128$, meaning that $k_\tau=4$ and $k_\phi = 16$ when no enhancements were applied, $k_\tau=3$ and $k_\phi = 10$ when one enhancement was applied, and $k_\tau=2$ and $k_\phi = 8$ when both were applied.
The results are shown in Figure \ref{fig:synth:ablation}.
For all noise levels, the aggregation enhancement provides large classification performance gains, $10$-$20$\% higher F1, with even higher gains from the transposition enhancement, $20$-$30$ higher F1.
Using both yields the best results.
The result is similar for the clustering, but with the performance gain from aggregation being even smaller than that from transposition.
Regardless, it is easy to conclude that the benefits of using the two enhancements outweigh any detriments from using lower values of  $k_\tau$ and $k_\phi$.
Consequently, both enhancements are used for all later experiments. 

By analysing the confusion matrices from the classification experiment in the noise free case, we see that there is a large overlap between the two enhancements for which automorphic identities the classification accuracy improves, e.g., both improve the distinguishability of identity 3 in B5:1 and identity 3 in L5:1.
We also find that only two pairs of automorphism identities require both transposition and aggregation to be distinguishable: identity 0 in B5:0 and identity 0 in DB55:0, and identity 0 in B5:1 and identity 3 in DB55:0.

As a point of reference, the result where the graphs are treated as undirected is also included in Figure \ref{fig:synth:ablation}.
Some automorphic identities are distinguishable without consideration of the edges directions, but overall performance is poor, as expected.
The small increase in classification performance after the addition of a few noise edges is a consequence of how the logistic regression model handles overlapping data points with disparate class labels.

\begin{figure}[tp]
    \centering
    \begin{subfigure}{0.49\textwidth}
    \centering
    \includegraphics[width=\linewidth]{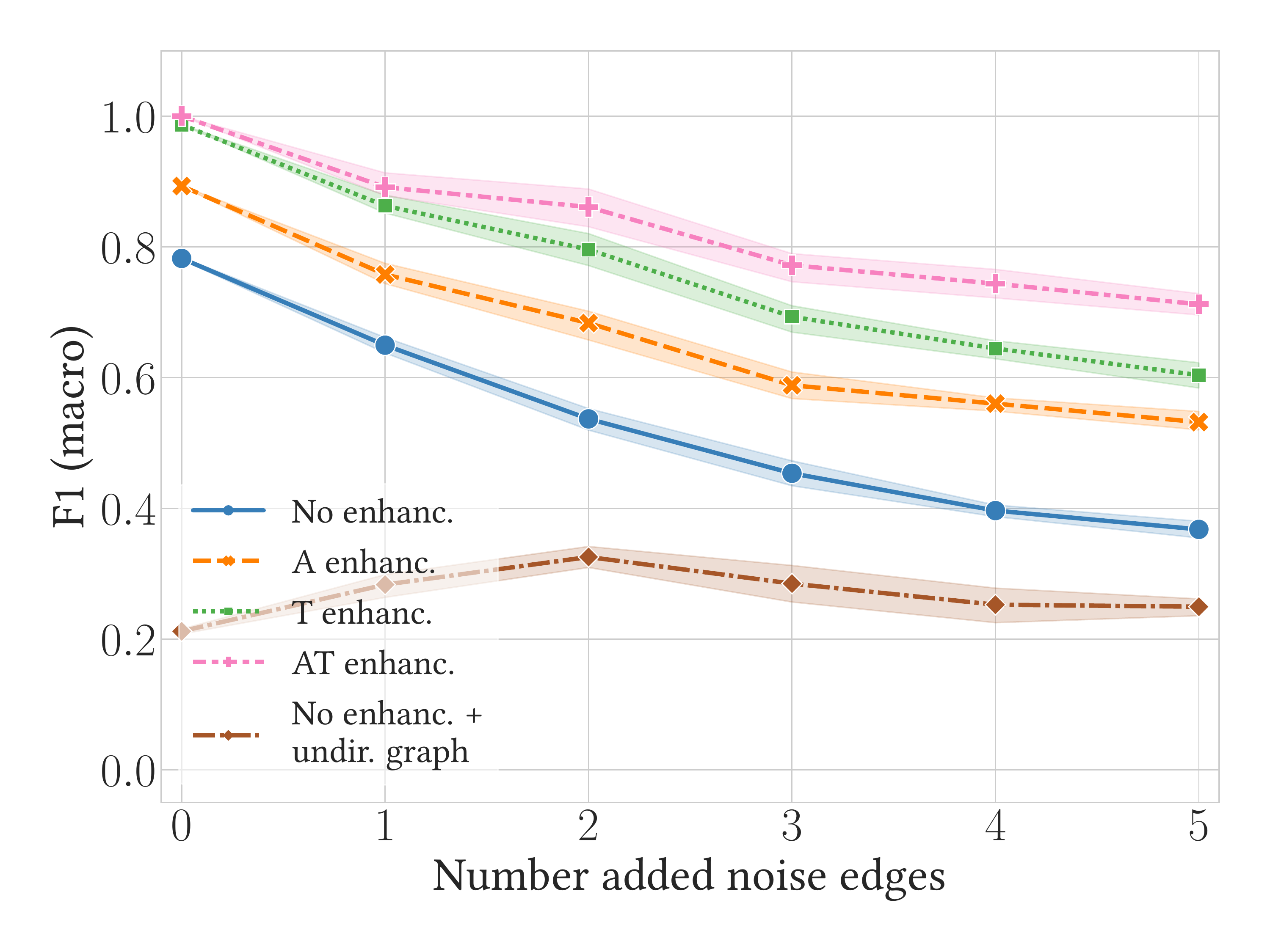}
    \caption{Classification}
    \end{subfigure}
    ~
    \begin{subfigure}{0.49\textwidth}
    \centering
    \includegraphics[width=\linewidth]{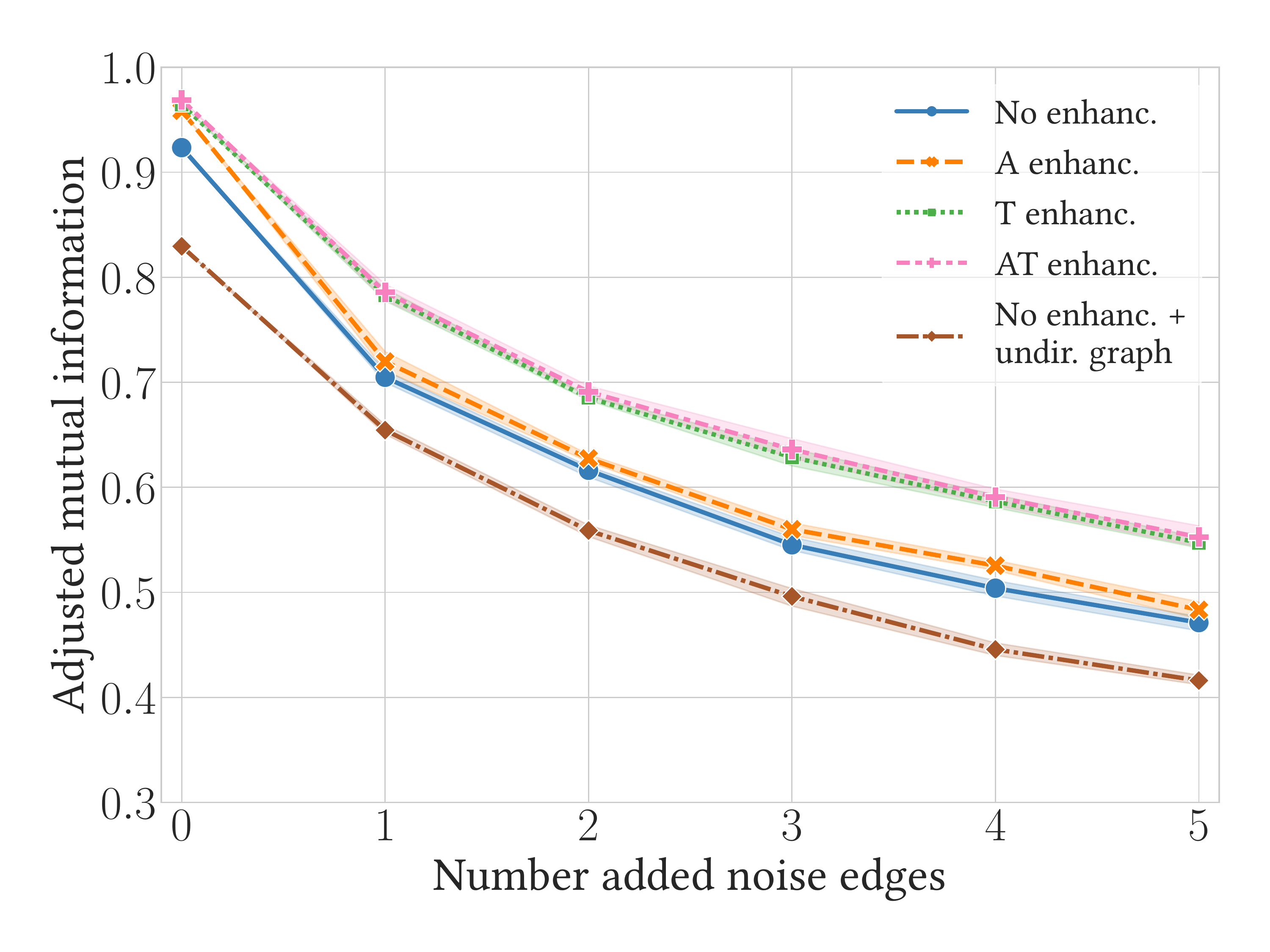}
    \caption{Clustering}
    \end{subfigure}
    \Description{Two plots side-by-side showing the results of the ablation study on the synthetic dataset. The left plot shows the classification performance,and the right plot the clustering performance, of Digraphwave with and without enhancements, and also without considering edge directions.}
    \caption{Results of the ablation study of the transposition (T) and aggregation (A) enhancements on the synthetic dataset.}
    \label{fig:synth:ablation}
\end{figure}

\subsection{Embedding Quality Comparison} \label{sec:synthetic_performance}

In the second experiment on the synthetic dataset, the quality of different structural node embedding algorithms are compared with \textsc{Digraphwave} using $R=3$ and both enhancements.
The results are shown in Figure \ref{fig:synth:alg_comp}.
When no noise edges are added, both \textsc{Digraphwave} and the method by Furutani et al.\ achieve perfect classification performance, meaning that all different automorphic identities can be represented by the embeddings, which, conversely, is not the case for the other methods, with 0.88 macro F1 for \textsc{ReFeX}, 0.76 for \textsc{EMBER} and 0.7 for \textsc{Graphwave}.
Once noise edges are added, \textsc{Digraphwave} performs similarly to \textsc{ReFeX}, whose classification performance is the most robust to the noise edges, while the performance of the method by Fututani et al.\ drops fast.

For noise levels 1-5, \textsc{Graphwave} performs similarly to \textsc{Digraphwave} without enhancements in the ablation study, see Figure \ref{fig:synth:ablation}, but receive $8$\% lower F1 score in the noise free case.
The similar performance is expected, since the two methods are, in this case, very similar.
The synthetic graphs are not large enough for the numerical imprecision resulting from the Chebyshev approximation in \textsc{Graphwave} to make a difference.
Thus, the main difference is the used timescales and ECF sample points: $\taumin \approx 2$, $\taumax \approx 6.5$,  $k_\tau=2$ and $k_\phi=32$ for \textsc{Graphwave}, and $\taumin = 1$, $\taumax = 3$, $k_\tau=4$ and $k_\phi=16$ for \textsc{Digraphwave}.
Since the range of the timescales quite similar, the difference in number of timescales used is likely the explanation for the observed performance discrepancy in the noise free case.

The most surprising results are \textsc{ReFeX}'s robustness to noise and the poor performance of \textsc{EMBER}.
Though both methods use aggregations of node degrees as their core features, classification performance only drops around 10\% for \textsc{ReFeX} while the performance of \textsc{EMBER} is reduced to close to trivial performance once the maximum amount of noise edges have been added.
Conversely, a similarity between \textsc{ReFeX} and \textsc{EMBER} is revealed by analysing the confusion matrices.
Both struggle to distinguish the identities in the C8 graphs, while these not presenting a challenge for the other embedding methods.
The reason is likely that \textsc{ReFeX} and \textsc{EMBER} focus on node degrees as signatures for their embeddings, which are homogenous between C8 graphs.
By instead extracting signatures via diffusion, the cyclic patterns of C8 graphs are revealed and can be used to distinguish the different nodes.

The results for the clustering are again similar in terms of method ranking.
All methods, except \textsc{Graphwave}, perform similar in the noise free case, and the performance of \textsc{EMBER} and the method by Furutani et al. drops fast as noise edges are added.
Here, \textsc{Digraphwave} is the most robust to the noise, and it is the only method which achieves a higher score than the trivial node degree baseline in the presence of all five noise edges.

\begin{figure}[tp]
    \centering
    \begin{subfigure}{0.49\textwidth}
    \centering
    \includegraphics[width=\linewidth]{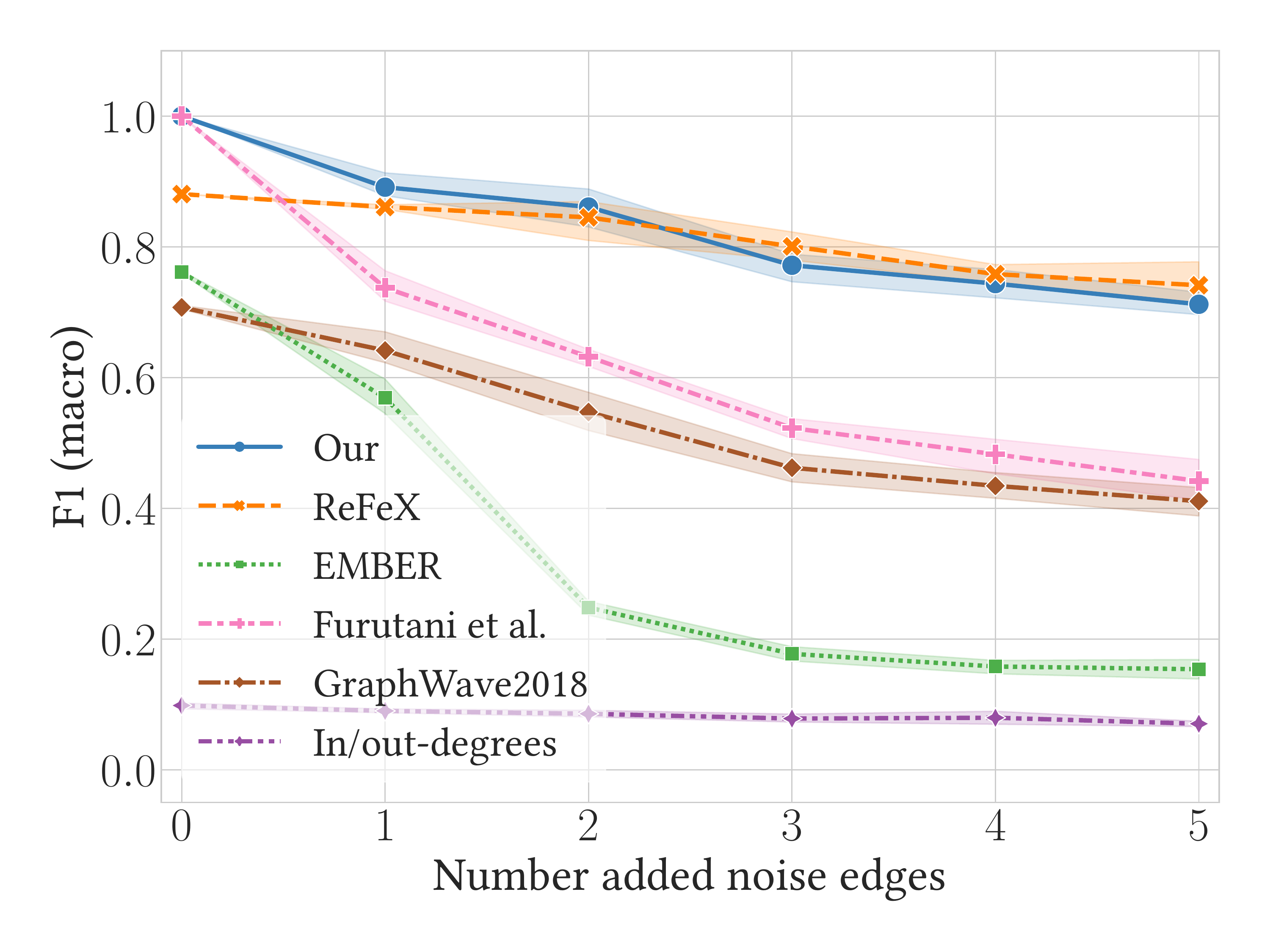}
    \caption{Classification}
    \end{subfigure}
    ~
    \begin{subfigure}{0.49\textwidth}
    \centering
    \includegraphics[width=\linewidth]{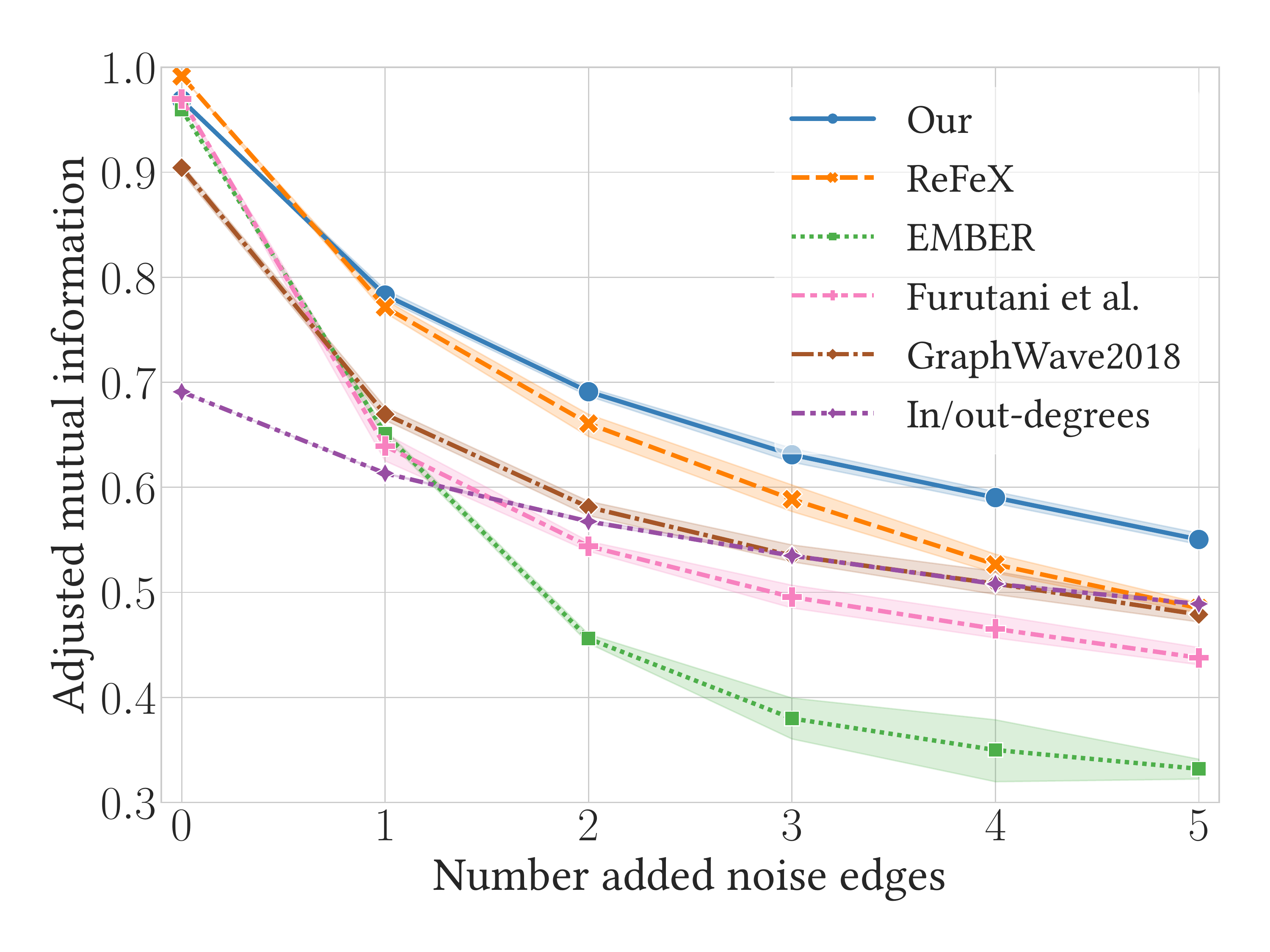}
    \caption{Clustering}
    \end{subfigure}
    \Description{Two plots side-by-side showing the results of the embedding method comparision experiment on the synthetic dataset. The left plot shows the classification performance, and the right plot the clustering performance.}
    \caption{Results of the embedding quality comparison experiment on the synthetic dataset.}
    \label{fig:synth:alg_comp}
\end{figure}

\subsection{Enron Node Classification} \label{sec:enron_data}

To evaluate how the embeddings perform on a node classification task on a real directed graph, a dataset with ground truth node labels which correlates with the nodes' local structures is required.
To this end, we use the Enron email communication corpus \cite{enron}, a well-known dataset which contains all the email communication from the Enron corporation, and which was made public as a result of legal action.
An email communication graph can be extracted from the corpus by considering each email address a node, and with edges between node $i$ and $j$ meaning that at least one email has been sent from address $i$ to address $j$. 

While many graphs extracted from the Enron corpus exist online, we have been unable to locate a version which includes node labels, or even the email address of each node.
However, a file specifying nine different company roles -- employee, manager, vice president etc.\ -- for a few email addresses is available online\footnote{\url{https://dsg.tuwien.ac.at/team/dschall/email/enron-employees.txt}}.
Thus, by extracting the communication graph from the corpus, and thereby matching email addresses to nodes, this ground truth role information can be used to define a node classification task.
Our extracted graph consists of 91685 nodes and 367995 edges, and edge weights are the number of emails sent between the respective nodes. 
The graph includes nodes of both Enron and non-Enron emails addresses, but not any edges between non-Enron nodes.

Unfortunately, the obtained file only contains 130 node labels and the nine classes are unbalanced.
To make good use of these few labels, and to avoid the classification result being dominated by the performance on the very rare classes, we create a simplified and more balanced set of labels: Employee (53 nodes), Manager (17 nodes) and Upper-management (60 nodes).
Moreover, we create another, larger, set of node labels based on the type of each email address: Enron (33561 nodes), External (51558 nodes) and Bot (6566 nodes).
These labels are extracted by first applying a heuristic to find any addresses which are likely controlled by an algorithm.
The remaining addresses are then assigned the Enron label if an Enron domain is used, and otherwise they are considered to be external.

Thus, we have two node label sets and two corresponding node classification tasks: company role classification and email type classification.
The experiment is performed both with and without the edge directions, and with and without node weights.
As for the synthetic dataset, a multinomial logistic regression model with $\ell_2$ penalty $C=1.0$ and class weighted loss function is used on top of the embeddings, and the performance is measured using the macro F1 score.
To obtain means and standard deviation estimates, 3-fold stratified cross validation repeated 5 times is used.
Due to their scalability issues, see Section \ref{sec:scalability}, \textsc{Graphwave} and the method of Furutani et al. are not used in this experiment.
For \textsc{Digraphwave}, we use $R=2$, as it works better than $R=3$.
This is likely due to the short average path length for the Enron graph, $\sim 4.5$.

The results are presented in Table \ref{tab:enron_results}.
\textsc{Digraphwave} and \textsc{ReFeX} achieve similar classification performance, with \textsc{Digraphwave} coming out slightly ahead on the company roles and vice-versa for the email types.
Both methods perform better when edge directions are considered, but including the weights does not necessarily improve the results.

\textsc{EMBER} performs similar to, or even worse, than the trivial baseline, and when edge weights are used it seems to completely fail.
We have not found any mention of required weight normalization in either \cite{ember_19} or the associated repository, and are unsure of the cause of this failure.

\begin{table}[t]
    \centering
    \caption{The node classification results on the Enron email communication graph. The models are sorted by classification performance and the best performing model is highlighted, as are all models within one standard deviation of the same result.}
    \label{tab:enron_results}
    \begin{subtable}[h]{0.48\linewidth}
    \centering
      \caption{Company role}
      \label{tab:enron:roles}
      \footnotesize{
      \begin{tabular}{ccp{15mm}c}
\toprule
Directed & Weights & Embedding algorithm & F1 (macro)\\
\midrule
Yes & No & Our & $\mathbf{0.49 \pm 0.07}$\\
Yes & No & ReFeX & $\mathbf{0.46 \pm 0.06}$\\
Yes & Yes & ReFeX & $\mathbf{0.46 \pm 0.08}$\\
No & Yes & Our & $\mathbf{0.45 \pm 0.06}$\\
No & Yes & ReFeX & $\mathbf{0.45 \pm 0.07}$\\
No & No & ReFeX & $\mathbf{0.45 \pm 0.06}$\\
Yes & No & In/out-degrees & $0.41 \pm 0.07$\\
No & No & Our & $0.41 \pm 0.05$\\
Yes & Yes & Our & $0.41 \pm 0.04$\\
Yes & No & EMBER & $0.40 \pm 0.06$\\
Yes & Yes & In/out-degrees & $0.40 \pm 0.06$\\
No & Yes & In/out-degrees & $0.40 \pm 0.06$\\
No & No & EMBER & $0.33 \pm 0.06$\\
No & No & In/out-degrees & $0.31 \pm 0.09$\\
No & Yes & EMBER & $0.11 \pm 0.06$\\
Yes & Yes & EMBER & $0.09 \pm 0.04$\\
\bottomrule
\end{tabular}
      }
    \end{subtable}
    ~
    \begin{subtable}[h]{0.48\linewidth}
    \centering
      \caption{Email type}
      \label{tab:enron:email_type}
      \footnotesize{
      \begin{tabular}{ccp{15mm}c}
\toprule
Directed & Weights & Embedding algorithm & F1 (macro)\\
\midrule
Yes & No & ReFeX & $\mathbf{0.50 \pm 0.00}$\\
Yes & Yes & ReFeX & $0.48 \pm 0.00$\\
Yes & No & Our & $0.48 \pm 0.00$\\
Yes & Yes & Our & $0.47 \pm 0.00$\\
No & No & ReFeX & $0.46 \pm 0.00$\\
No & Yes & ReFeX & $0.46 \pm 0.00$\\
No & Yes & Our & $0.45 \pm 0.00$\\
No & No & Our & $0.44 \pm 0.01$\\
Yes & No & In/out-degrees & $0.28 \pm 0.00$\\
Yes & Yes & In/out-degrees & $0.28 \pm 0.00$\\
No & Yes & In/out-degrees & $0.22 \pm 0.00$\\
Yes & No & EMBER & $0.22 \pm 0.00$\\
No & No & EMBER & $0.22 \pm 0.01$\\
No & No & In/out-degrees & $0.22 \pm 0.00$\\
No & Yes & EMBER & $0.04 \pm 0.00$\\
Yes & Yes & EMBER & $0.04 \pm 0.00$\\
\bottomrule
\end{tabular}
      }
    \end{subtable}
\end{table}

\subsection{Network Alignment} \label{sec:network_alignment}

Network alignment is the task of finding a matching correspondence between the nodes of two graphs, and solutions to this task has many applications within biology, computer vision and social network analysis \cite{net_alginment, regal_xnetmf_18}.
It has been shown in previous work \cite{regal_xnetmf_18} that structural node embeddings can be used to partially address this task.
Specifically, \cite{regal_xnetmf_18} proposes to extract structural node embeddings for the two graphs and then use greedy matching, i.e., each node gets matched with the node in the other graph with the most similar embedding, measured using the Euclidean distance.
While this approach ignore constrains which may improve performance, e.g., that two nodes should not be matched to the same node, it is exceedingly fast using KD-trees.
The results in \cite{regal_xnetmf_18} also indicate that this greedy approach can outperform more computationally expensive approaches which use message-passing and linear programming. 

We thus adopt the experiment setup of \cite{regal_xnetmf_18} to evaluate the embedding methods.
Specifically, given two graphs, $G_1 = (V_1, E_1)$ and $G_2 = (V_2 E_2)$, for which we know the node correspondences, and assuming $|V_1| \geq |V_2|$, we create a KD-tree of from the structural node embeddings of $G_1$ and then compute the $k$ closest embedding in the tree for each embedding extracted from $G_2$.
The top-$k$ accuracy is then the proportion of nodes in $G_2$ which contain the correct node correspondence among these $k$ most similar embeddings.
Additionally, the embeddings' robustness to noise is tested by repeating the above procedure for eight different noise levels, $p \in \{ 0, 0.01, 0.02, 0.03, 0.04, 0.05, 0.075, 0.1 \}$.
For each noise level, $p (|E_1| + |E_2|) $ noise edges are inserted into the graphs at random.
Five different sets of sampled edges per noise level are used to obtain mean and standard deviation estimates.


As datasets, one graph from \cite{regal_xnetmf_18} -- an undirected collaboration graph of the arXiv astrophysics category with 18772 nodes and 396100 edges \cite{leskovec_07} -- is used with the same setup, i.e., $G_1$ is the original graph and $G_2$ is the same graph but with all node indices randomly permuted.
Since all graphs used in \cite{regal_xnetmf_18} are undirected, we additionally perform the experiment on the Enron graph.
However, instead of simply randomly permuting the node indices, we use the email type node labels from Section \ref{sec:enron_data} and create $G_2$ as a subgraph of $G_1$.
Specifically, we take $G_2$ to be the induced subgraph of the 33561 Enron internal nodes.
Unlike the arXiv graph, the Enron graph has edge weights, and the experiment is performed both with and without theses.
To associate weights with the noise edges, the empirical distribution of the observed weights is used for sampling.
The embedding algorithms are set up as before, with $R=2$ for \textsc{Digraphwave} for both datasets, since, similarly to the Enron graph, the average path distance in the arXiv graph is short, $\sim 4$.

Top-1 accuracy results are shown in Figure \ref{fig:alignment_enron_all_internalk1}, and top-10 accuracy results can be found in the Appendix \ref{app:alignment}.
For the undirected and unweighted arXiv graph, the three embedding methods all reach close to $80$\% accuracy in the noise free case, with \textsc{EMBER} being less robust to added noise, with $2.6$\% accuracy for noise level $p=0.1$, compared to around $8.7$\% for \textsc{Digraphwave} and \textsc{ReFeX}.

For the more difficult setting using the unweighted Enron dataset, \textsc{Digraphwave} outperforms the other methods for all noise levels, with $40$\% top-1 accuracy in the noise free case, compared to $15$\% for \textsc{ReFeX} and $2.7$\% for \textsc{EMBER}, and $8.5$\% at noise level $p=0.1$ with $6.5$\% for \textsc{ReFeX} and $0.7$\% for \textsc{EMBER}.
When edge weights are used, \textsc{ReFeX} has a distinct advantage over \textsc{Digraphwave} since its embeddings includes features from both the weighted and unweighted graph, while the current version of \textsc{Digraphwave} uses either the weighted or unweighted graph.
The same is true for the in and out-degree baseline.
Thus, we see large improvements for these methods, with accuracies more than doubling in Figure \ref{fig:alignment_enron_dir_weighted} compared to Figure \ref{fig:alignment_enron_dir_unweighted}.
The accuracies for \textsc{Digraphwave} also improves, around $30-50$\% relative to the accuracies for the unweighted graphs, but the improvement is likely to be larger if embeddings from both the weighted and unweighted graph would be combined, as for \textsc{ReFeX}.
Similar to the experiment in Section \ref{sec:enron_data}, \textsc{EMBER} seem to completely fail when edge weights are used.

\begin{figure}[tp]
     \begin{subfigure}{0.32\textwidth}
    \centering
    \includegraphics[width=\linewidth]{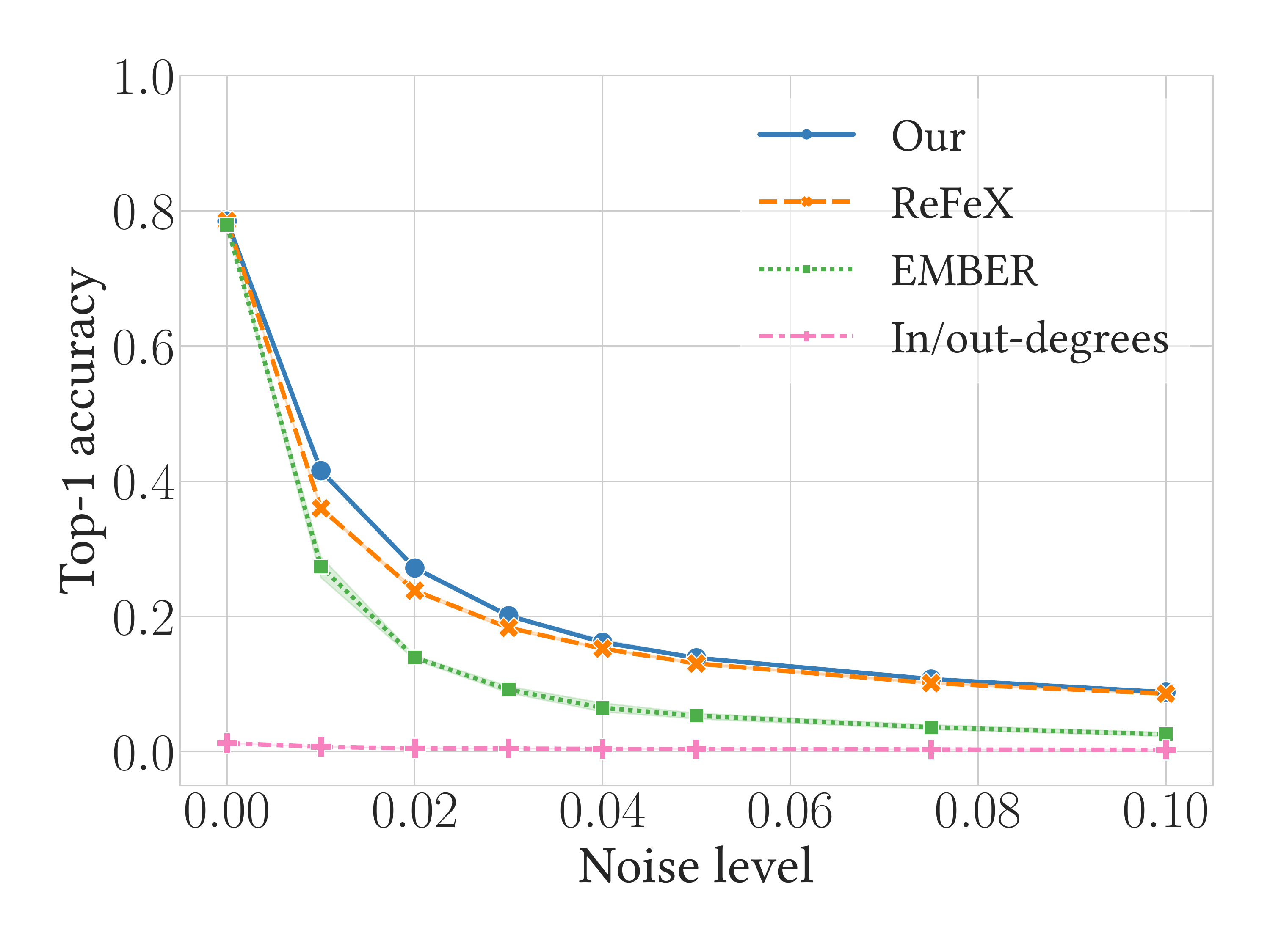}
    \caption{\footnotesize{arXiv, undirected and unweighted}}
    \end{subfigure}
    ~
    \begin{subfigure}{0.32\textwidth}
    \centering
    \includegraphics[width=\linewidth]{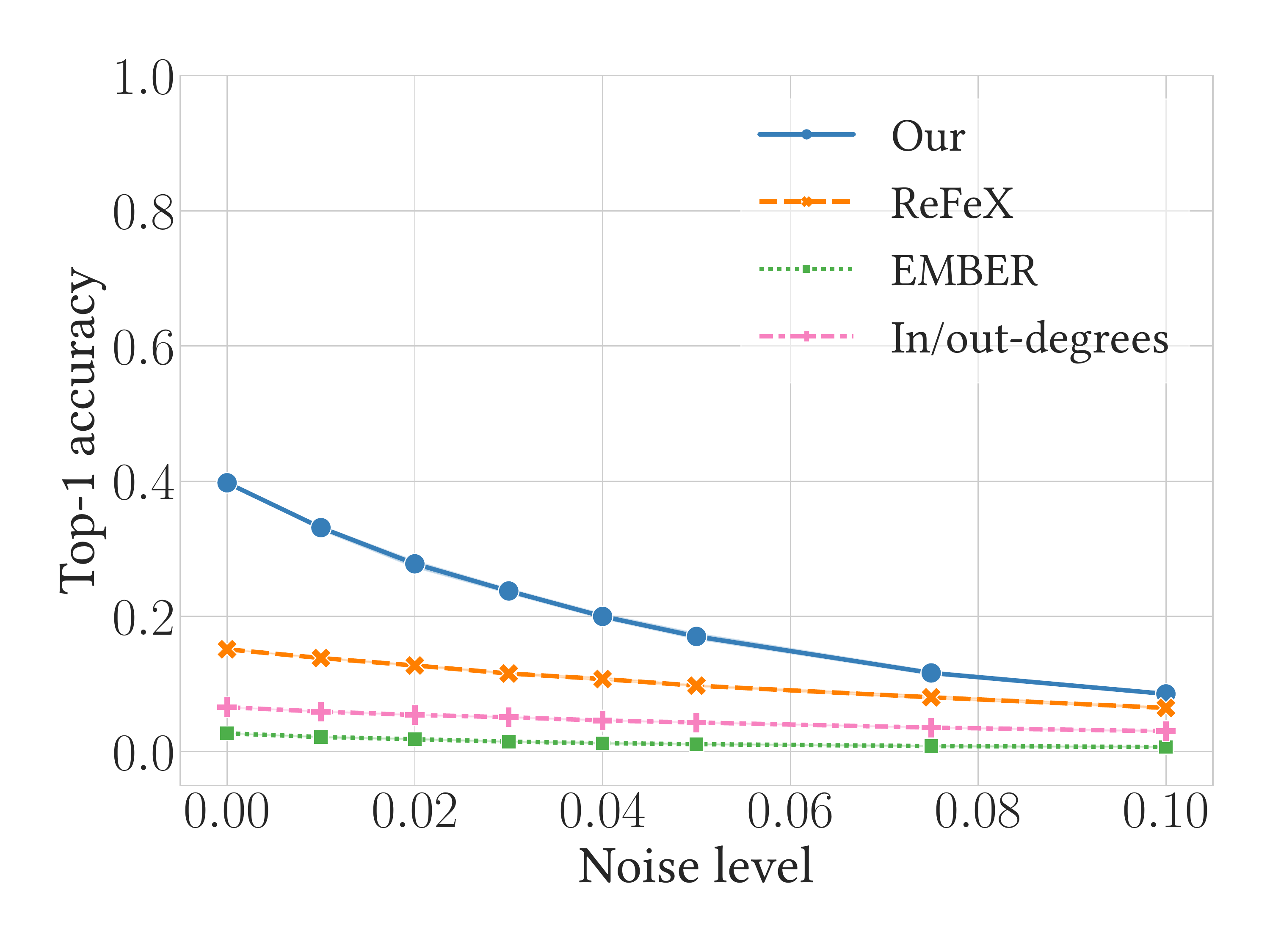}
    \caption{\footnotesize{Enron, directed and unweighted}}
    \label{fig:alignment_enron_dir_unweighted}
    \end{subfigure}
    ~
    \begin{subfigure}{0.32\textwidth}
    \centering
    \includegraphics[width=\linewidth]{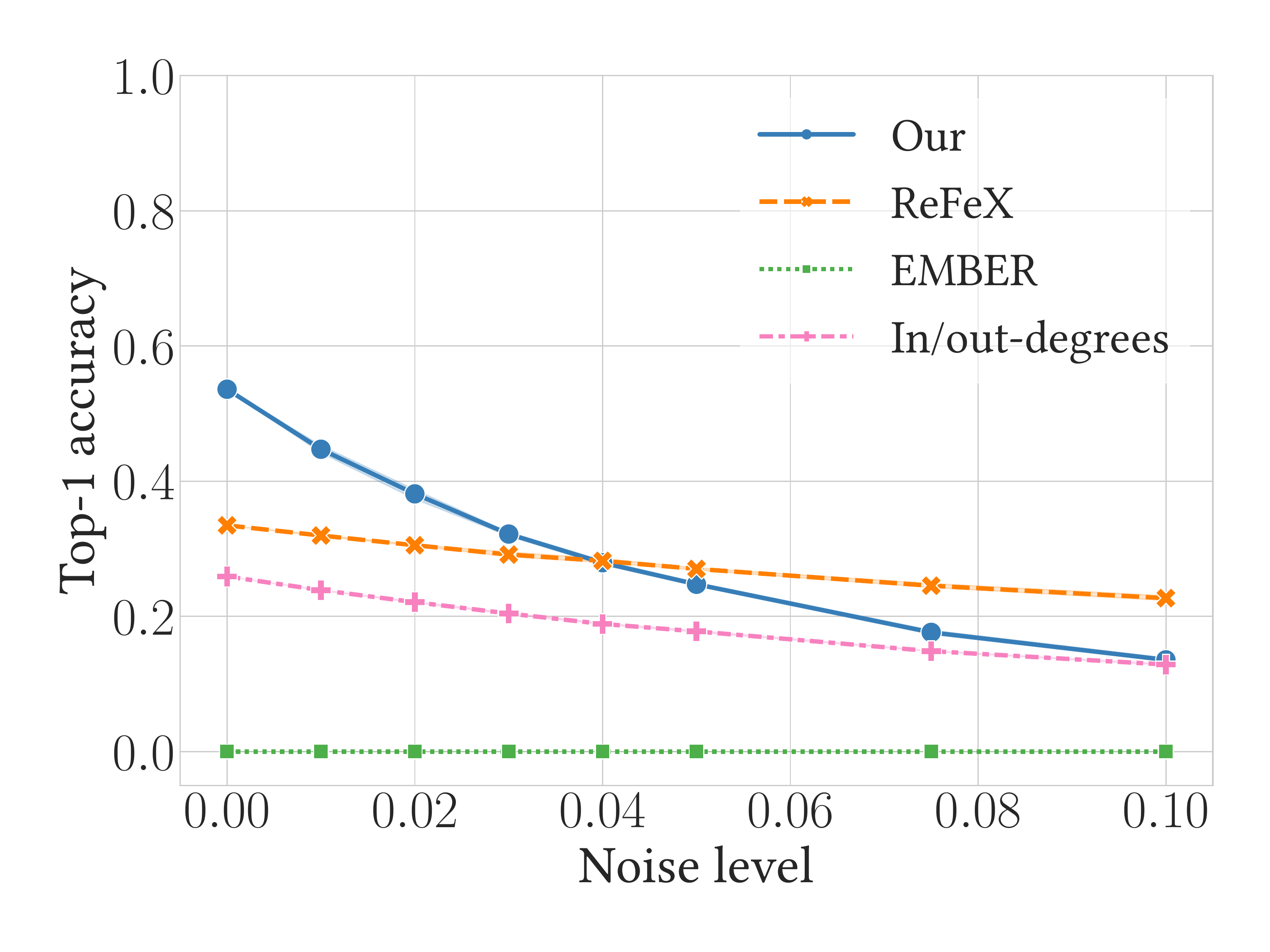}
    \caption{\footnotesize{Enron, directed and weighted}}
    \label{fig:alignment_enron_dir_weighted}
    \end{subfigure}
    \Description{Three plots placed  side-by-side containing the top-1 accuracy results of the network alignment experiments. }
    \caption{Results of the network alignment experiments. The noise level is the fraction of randomly inserted noise edges, and the lines show the mean result over five different sets of noise edges. Standard deviations are shown as bands, but these are barely visible due to their small magnitude, $10^{-4} - 10^{-3}$. The top-1 accuracy is the proportion of nodes in $G_2$ having themselves in $G_1$ as closest neighbour in the embedding space. Random change would score $\sim5 \cdot 10^{-5}$ on the arXiv graph, and  $\sim10^{-5}$ on the Enron graph.}
    \label{fig:alignment_enron_all_internalk1}
\end{figure}

\section{Conclusion and Outlook} \label{sec:conclusion_and_outlook}

In this work, we present \textsc{Digraphwave}, a scalable structural node embedding extraction algorithm based on the diffusion equation, and which is applicable to both directed and undirected graphs, with or without edge weights.
We also prove a lower bound on the heat contained in the local vicinity of a diffusion initialization node, and use this bound to determine appropriate hyperparameters for \textsc{Digraphwave}.
Our experiments show that \textsc{Digraphwave} vastly outperforms recent structural node embedding methods, both for identifying automorphic identities, and for node classification and network alignment tasks on the Enron email dataset.
We also discover that an early, yet often overlooked, structural embedding method. \textsc{ReFeX} \cite{henderson_its_2011, refex_reimplementation22}, an older structural embedding method, performs surprisingly well on our benchmark tasks while being very scalable.
\textsc{Digraphwave} does achieve $13$\% better macro F1 score in our automorphism identification task, and more than twice the accuracy in the network alignment task on the unweighted Enron graph.
Still, we encourage readers working in this field to include \textsc{ReFeX} in their future benchmarks, given its performance and good scalability.

Despite its excellent performance on our benchmarks, we have ideas which we believe would further improve \textsc{Digraphwave}, both in terms of embedding quality and scalability.
For the quality, more distinct structural signatures can be created by considering the tuples $(\Psi_{\edgeji}, \spdist(j, i))$, instead of only the reachability values $\Psi_{\edgeji}$, and then using the multidimensional generalization of the ECF for compression.
Such signature enrichment would be particularly beneficial for unsupervised applications, e.g., clustering, anomaly detection and network alignment, where the performance is largely determined by the node representations, and an increased computational cost could be warranted.

In this work, the matrix exponential was computed using double precision and $K=40$ for the Taylor approximation based on the error bound in \eqref{eq:lapexpm_error_bound}.
However, it is likely that the numerical precision can be reduced without much affecting the quality of the embeddings.
Preliminary tests indicate that lowering $K$ to $20$ and using single precision would result in more than $4$x speed up, though further experiments are needed to establish the effect on the embedding quality.

An interesting machine learning research avenue is to exploit that \textsc{Digraphwave} is differentiable with respect to the elements of the adjacency matrix.
Learning edge weights to improve downstream task performance is one possible use, and learned graph generation another.
Similarly, differentiation with respect to the Taylor coefficients is also possible, meaning that these could also be optimized for downstream task performance, as previously been done for certain convolutional graph neural networks on undirected graphs \cite{Defferrard16}.

It would also be interesting to see how the diffusion equation and matrix exponential could be generalized to extract different structural signatures.
For example, a nonlinear term $g(\tau, \ubold(\tau))$ may be added to the diffusion equation
and the solution can still be formulated using the matrix exponential \cite{al-mohy_computing_2011}, meaning that the Taylor series approximation can still be used to compute the solution.
Moreover, several node properties, e.g., subgraph centrality, communicability and betweenness, can be expressed as matrix functions of the adjacency matrix \citep[Sec. 3]{computing_matrix_functions_2010}, suggesting that applying the same matrix functions to the Laplacian or normalized adjacency matrix $\Alphab$ may yield other interpretable node properties.

Finally, recent work has suggested a unifying framework for structural and positional node embeddings \cite{zhu_2021_proximity_is_all_you_need}, where both embedding types can be extracted by first creating a proximity matrix, and applying the ECF for structural embeddings, and matrix factorization for positional embeddings.
Since this framework currently only considers undirected graphs, it will be interesting to see how it can be generalized to directed graphs and if the \textsc{Digraphwave} reachability values also can be used to extract positional embeddings.

\begin{acks}
This work was partially supported by the Wallenberg AI, Autonomous Systems and Software Program (WASP) funded by the Knut and Alice Wallenberg Foundation.
\end{acks}
\bibliographystyle{ACM-Reference-Format}
\bibliography{references}
\appendix

\section{Additional \textsc{Digraphwave} Details} \label{app:additional_details}

\subsection{Random-walk Interpretation} \label{app:rw_intepretation}
We here show how the matrix exponential $\expmlap$ can be interpreted from a random walk perspective.
First, we define $\mathbf{P}$ as
\begin{align*}
    \mathbf{P} 
    = \eye - \lapnorm 
    = \eye - \eyestarb  + \adj \degDstarb^{-1},
\end{align*}
and note that it is a column-stochastic transition matrix. 
This means that $P(i|j, k) = [\mathbf{P}^k]_{ij}$ is the probability of transitioning from $j$ to $i$ in $k$ steps.
Second, the Poisson probability distribution with parameter $\tau$ is given by $p(k) = \tau^k e^{-\tau}/ k!$.

Now we consider the Taylor series of $\expmlap$ expanded in $\eye$:
\begin{align*}
    \expmlap 
    = \sum_{k=0}^\infty \frac{(-\tau)^k e^{-\tau}}{k!} (\lapnorm - \eye )^k
    = \sum_{k=0}^\infty  \frac{(-\tau)^k e^{-\tau}}{k!} (-\mathbf{P})^k 
    = \sum_{k=0}^\infty \frac{(\tau)^k e^{-\tau}}{k!} \mathbf{P}^k 
    = \sum_{k=0}^\infty p(k) \mathbf{P}^k.
\end{align*}
From the last step, we see that $\expmlap$ is the marginalization of the joint distribution $ P(i, k|j)= p(k) \mathbf{P}^k $ over the number of walk steps $k$.
This has the interpretation of transitioning from node $j$ to node $i$ under a random walk with the number of steps being sampled from $\mathrm{Pois}(\tau)$.

\subsection{Stationary Solution to Diffusion Equation for an Undirected Graph} \label{app:stationary}


Here, we prove that $u_j(\tau)= c d_j$ is a stationary solution to the diffusion equation \eqref{eq:advection_diffusion_diff_eq} for an undirected graph, and where $d_j = \sum_{k \in V} A_{\edgeuv{k}{j}} = \sum_{k \in V} A_{\edgeuv{j}{k}}$ is the weighted degree of the node $j$ and $c$ is a normalization constant.
Stating from \eqref{eq:advection_diffusion_diff_eq}, we have
\begin{align*}
\frac{\text{d}u_i}{\text{d} \tau} (\tau) 
    &= -[\lapnorm \ubold(\tau)]_i 
    = \sum_{j \in V} \mathcal{L}_{\edgeji} u_j(\tau) 
    = \sum_{j \in V} (\alpha_{\edgeji} - \eyestar_{ij}) u_j(\tau).
\end{align*}
Inserting $u_j(\tau) = c d_j$ on the RHS,
\begin{align*}
    c \sum_{j \in V} (\alpha_{\edgeji} - \eyestar_{ij}) d_j 
    &= c \sum_{j \in V} \alpha_{\edgeji} d_j - c \sum_{j \in V} \eyestar_{ij} d_j
    = c \sum_{j \in \Ncal(i)} \weightnorm_{\edgeji}  d_j - c d_i
    = c \sum_{j \in \Ncal(i)} \weight_{\edgeji} - c d_i
    = c d_i - c d_i = 0.
\end{align*}




\subsection{\textsc{FFT} for Taylor Coefficients} \label{sec:app:fft}
Here, we show that the Taylor series coefficients can be calculated using the discrete Fourier transform.
Interested readers may turn to \cite{ellacott_83, bergamaschi_efficient_2003} to see how this can be generalized to Chebyshev and Faber coefficients. 

For a holomorphic function $f(z)$ with a convergent power series expanded in $\alpha \in \Cim$,
\begin{align*}
    f(z) &= \sum_{k=0}^\infty a_k \left(z - \alpha \right)^k\\
    a_k &= \frac{f^{(k)}(\alpha)}{k!},
\end{align*}
Cauchy's integral formula states that
\begin{align*}
    \frac{f^{(k)}(\alpha)}{k!} &= \frac{1}{2 \pi i} \oint_\gamma \frac{f(z)}{(z - \alpha)^{k + 1}} \mathrm{d}z,
\end{align*}
where $\gamma$ is the circular boundary of a disc, oriented counterclockwise, in the complex plane which contains $\alpha$.
On this boundary, we can choose the parameterization $z = \alpha + e^{2 \pi  i \theta}$ with $\theta \in [0, 1)$:
\begin{align*}
    \frac{f^{(k)}(\alpha)}{k!} &= \int_0^1 f(\alpha + e^{2 \pi  i \theta}) e^{-2 \pi  i k \theta}\mathrm{d}\theta.
\end{align*}
By discretizing this formula using $M$ steps, $\theta = \frac{\kappa}{M}$ with $\kappa \in \{0, \dots, M-1 \}$, we obtain
\begin{align*}
    \frac{f^{(k)}(\alpha)}{k!} &= \frac{1}{M} \sum_{\kappa =0}^{M-1} f(\alpha + e^{2 \pi  i \frac{\kappa}{M}}) e^{-2 \pi  i k \frac{\kappa}{M}}
\end{align*}
which is the discrete Fourier transform.

\subsection{Error Bound} \label{sec:app:error_bound}
We here prove the error bound \eqref{eq:lapexpm_error_bound} using \citep[Ch. 4]{Higham08} as guidance.
For unfamiliar readers, all matrix norms, including $\| \cdot \|_1$, have the following useful properties \cite{horn_johnson_2012}:
\begin{align}
    \label{eq:matrix_norm_scalar}
    \|c \mathbf{X} \| &= |c| \| \mathbf{X} \| \text{ for all } c \in \Cim \\
    \label{eq:matrix_norm_addition}
    \| \mathbf{X} + \mathbf{Y}  \| &\leq \| \mathbf{X} \| + \| \mathbf{Y}  \| \\
    \label{eq:matrix_norm_multiplication}
    \| \mathbf{X} \mathbf{Y}  \| &\leq \| \mathbf{X} \| \| \mathbf{Y}  \|.
\end{align}
Consequently,
\begin{align}
    \label{eq:expm_ineq}
    \| \exp (\mathbf{X}) \| 
    &= \left\| \eye + \mathbf{X}+ \frac{\mathbf{X}^2}{2!} + \frac{\mathbf{X}^3}{3!} + \cdots  \right\| 
    \leq 1 + \|\mathbf{X}\| + \frac{\|\mathbf{X}\|^2}{2!} + \frac{\|\mathbf{X}\|^3}{3!} + \cdots
    = \exp (\|\mathbf{X} \|).
\end{align}

The error bound has two terms, the Taylor approximation error and the rounding error, 
\begin{align} \label{eq:general_error_bound}
    \| \expmlap - \mathbf{\hat{T}}_K (\lapnorm, \tau) \|_1 
    &\leq \| \expmlap - \mathbf{T}_K (\lapnorm, \tau) \|_1 
    + \|\mathbf{T}_K (\lapnorm, \tau) - \mathbf{\hat{T}}_K (\lapnorm, \tau)  \|_1.
\end{align}
Since the Taylor polynomial is implemented using explicit powers, i.e., Algorithm 4.3 in \citep[Ch. 4]{Higham08}, we can apply Theorem 4.5 in the same text to obtain the bound on the rounding error term,
\begin{equation*}
    \|\mathbf{T}_K (\lapnorm, \tau) - \mathbf{\hat{T}}_K (\lapnorm, \tau)\|_1 \leq \tilde{\gamma}_{Kn} \exp( \tau \| \lapnorm - \eye\|_1).
\end{equation*}

For the Taylor approximation error, we use Theorem 4.8 in \citep[Ch. 4]{Higham08},
\begin{equation}
    \left\| f(\mathbf{X}) - \sum_{k=0}^K a_k \left(\mathbf{X} - \eye \right)^k \right\|_1
    \leq  \frac{1}{(K+1)!} \max_{0 \leq t \leq 1} 
    \| \left(\mathbf{X} - \eye \right)^{K+1} 
    f^{(K+1)}( \eye + t(\mathbf{X} - \eye))  \|_1,
\end{equation}
with $f(\lapnorm) = \expmlap$,
\begin{align*}
\left\| \expmlap - T_{K}(\lapnorm) \right\|_1 
\leq&
    \frac{1}{(K+1)!} \max_{0 \leq t \leq 1} 
    \| \left(\lapnorm - \eye \right)^{K+1}
    (-\tau)^{K+1}
    \exp(-\tau (\eye + t(\lapnorm - \eye)))  \|_1.
\end{align*}
Using \eqref{eq:matrix_norm_multiplication} we get
\begin{align*}
    \left\| \expmlap - T_{K}(\lapnorm) \right\|_1 
\leq&
    \frac{\tau^{K+1}\| \lapnorm - \eye \|^{K+1}_1}{(K+1)!} 
    \max_{0 \leq t \leq 1} 
    \|\exp(-\tau (\eye + t(\lapnorm - \eye)))  \|_1,
\end{align*}
and the rightmost factor can be simplified by using \eqref{eq:expm_ineq} and \eqref{eq:matrix_norm_addition}:
\begin{align*}
    \max_{0 \leq t \leq 1} 
    \|\exp(-\tau (\eye + t(\lapnorm - \eye)))  \|_1 
    &\leq
    \max_{0 \leq t \leq 1} \exp(\tau + \tau t\|\lapnorm - \eye\|_1)
    \leq
    \exp(\tau + \tau\|\lapnorm - \eye\|_1).
\end{align*}
Moreover, since $\| \lapnorm - \eye \| \leq 1$ by the definition of $\lapnorm$, $\| \lapnorm - \eye \|^{K+1} \leq 1$ and $\exp(\tau\|\lapnorm - \eye\|_1) \leq \exp(\tau)$, so the Taylor error bound is 
\begin{align*}
    \left\| \expmlap - T_{K}(\lapnorm) \right\|_1 
    \leq 
        \frac{\tau^{K+1}}{(K+1)!} \exp(2\tau).
\end{align*}

We insert the bounds back into \eqref{eq:general_error_bound} and get
\begin{equation*}
    \| \expmlap - \mathbf{\hat{T}}_K \|_1
\leq
    \left(\frac{\exp(\tau)\tau^{K+1} }{(K+1)!} 
    + 
    \tilde{\gamma}_{Kn}
    \right)
    \exp(\tau)
     .
\end{equation*}

\section{Additional Theory Details and Proofs} \label{app:proofs}

\subsection{Heat Transport Maximality Criteria} \label{app:proof_heat_transport_maximal}
\begingroup
\def\thetheorem{\ref{lemma:heat_transport_maximal}}
\begin{lemma}

\end{lemma}
\addtocounter{theorem}{-1}
\endgroup
\begin{proof}


For the proof, we use the fact that
\begin{align*}
    u_b(j, \tau, r) 
    = \int_0^\tau \dot{u}_b(j, s, r)  \text{d}s +  u_b(j, 0, r) 
    = \int_0^\tau \dot{u}_b(j, s, r)  \text{d}s +  1.
\end{align*}
Thus, proving that $\dot{u}_b(j, \tau, r) $ is minimal with respect to $\Alphab$ implies that $u_b(j, \tau, r)$ is minimal.
We use proof by induction over $R$.

\textbf{Base case:} 
For $R=0$ we have
\begin{align*}
    \dot{u}_b(j, \tau, 0) = \sum_{i \in \Ncore(j, 0)} \dot{u}_i(\tau) =  \dot{u}_j(\tau).
\end{align*}
We expand the RHS using the diffusion equation \eqref{eq:heat_transport_diff_eq},
\begin{align} \label{eq:base_case_diff_eq}
    \dot{u}_j(\tau) 
    &= \sum_{k \in V} -\mathcal{L}_{\edgeuv{k}{j}} u_k(\tau) 
    =\sum_{k \in \Nperiph(j, r)} \alpha_{\edgeuv{k}{j}} u_k(\tau) - \eyestar_{jj} u_j(\tau).
\end{align}
Since $u_k(\tau) \geq 0 $ and $\alpha_{\edgeuv{k}{j}} \geq 0$, see Section \ref{sec:diffusion_on_graphs}, it is clear that the sum should be minimized and $\eyestar_{jj} u_j(\tau)$ maximized for $\dot{u}_j(\tau)$ to be minimal.
This is achieved exactly by setting all $\alpha_{\edgeuv{k}{j}}$ in the sum to $0$, which is criteria (1) in the lemma.
Second, for $\eyestar_{jj}=1$, node $j$ needs to have at least one outgoing edge, criteria (2).
Since we do not allow self-loops, the target node must be in $\Nperiph(j, r)$, criteria (3).
By this, \eqref{eq:base_case_diff_eq} reduces to 
\begin{align*}
    \dot{u}_j(\tau) = - u_j(\tau),
\end{align*}
which together with the initial condition  $u_0(j) = 1$ fully determines $u_j(\tau)$.
Thus, the three criteria implies that a graph is $(j, 0)$-hear transport maximal.

\textbf{Induction case:} 
For the induction case, we use the induction assumption that the graph is $(j, R-1)$-heat transport maximal from fulfilling the three criteria.

Again, we consider the derivative of heat in $\Ncore(j, R)$
and expand it using the diffusion equation \eqref{eq:heat_transport_diff_eq}:
\begin{align*}
    \dot{u}_b(j, R, \tau) 
    &= \sum_{i \in \Ncore(j, R)} \dot{u}_i(\tau) 
    = \sum_{i \in \Ncore(j, R)} \sum_{k \in V} -\mathcal{L}_{\edgeuv{k}{i}} u_k (\tau)
    = \sum_{i \in \Ncore(j, R)} \sum_{k \in V} \alpha_{\edgeuv{k}{i}} u_k (\tau)  - \sum_{i \in \Ncore(j, R)} \eyestar_{ii} u_i (\tau).
\end{align*}
We can decompose the first double sum based on the source nodes as 
\begin{align*}
    \begin{split}
    \sum_{i \in \Ncore(j, R)} \sum_{k \in V} \alpha_{\edgeuv{k}{i}} u_k (\tau) 
    &= \sum_{i \in \Ncore(j, R)} \sum_{k \in  \Ncore(j, R-1)} \alpha_{\edgeuv{k}{i}} u_k (\tau) 
    + \sum_{i \in \Ncore(j, R)} \sum_{k \in  \mathcal{C}(j, R)}  \alpha_{\edgeuv{k}{i}} u_k (\tau) + \sum_{i \in \Ncore(j, R)} \sum_{k \in  \Nperiph(j, R)}  \alpha_{\edgeuv{k}{i}} u_k (\tau) .
\end{split}
\end{align*}

The first term in this decomposition accounts for edges with the source node $k \in \Ncore(j, R-1)$ and target node $i \in \Ncore(j, R)$.
Per definition, all edges with a source in $\Ncore(j, R-1)$ will have all their targets in $\Ncore(j, R)$, and since $\lapnorm$ is normalized $\sum_{i \in \Ncore(j, R)} \alpha_{\edgeuv{k}{i}} = \eyestar_{kk} $.
Thus, we can rewrite the first term as
\begin{align*}
    \sum_{i \in \Ncore(j, R)} \sum_{k \in  \Ncore(j, R-1)} \alpha_{\edgeuv{k}{i}} u_k (\tau) 
    &= \sum_{k \in  \Ncore(j, R-1)} u_k (\tau) \sum_{i \in \Ncore(j, R)} \alpha_{\edgeuv{k}{i}} 
     = \sum_{k \in  \Ncore(j, R-1)} \eyestar_{kk}  u_k (\tau).
\end{align*}

We can now express $\dot{u}_b(j, R, \tau) $ using the decomposition and simplify it to three terms
\begin{align*}
    \begin{split}
        \dot{u}_b(j, R, \tau) 
        &= \sum_{k \in  \Ncore(j, R-1)} \eyestar_{kk}  u_k (\tau) 
        + \sum_{i \in \Ncore(j, R)} \sum_{k \in  \mathcal{C}(j, R)}  \alpha_{\edgeuv{k}{i}} u_k (\tau) \\
        & \qquad + \sum_{i \in \Ncore(j, R)} \sum_{k \in  \Nperiph(j, R)}  \alpha_{\edgeuv{k}{i}} u_k (\tau) 
        - \sum_{i \in \Ncore(j, R)} \eyestar_{ii} u_i (\tau)
    \end{split} \\
    &=  \underbrace{
    \sum_{i \in \Ncore(j, R)} \sum_{k \in  \mathcal{C}(j, R)}  \alpha_{\edgeuv{k}{i}} u_k (\tau)
    }_{\circled{A}}
    + \underbrace{
    \sum_{i \in \Ncore(j, R)} \sum_{k \in  \Nperiph(j, R)}  \alpha_{\edgeuv{k}{i}} u_k (\tau) 
    }_{\circled{B}}
    - \underbrace{
    \sum_{i \in \mathcal{C}(j, R)} \eyestar_{ii} u_i (\tau)
    }_{\circled{C}}.
\end{align*}
Similar to the base case, the positive terms $\circled{A}$ and $\circled{B}$ should be minimized while the negative term $\circled{C}$ should be maximized.
Setting all $\alpha_{\edgeuv{k}{i}} = 0$ in $\circled{B}$ implements criteria (1), and if each $i \in \mathcal{C}(j, R)$ have at least one outgoing edge $\eyestar_{ii} = 1$ for all $i$ in $\circled{C}$, criteria (2).
Finally, if all of these outgoing edges go to $\Nperiph(j, R)$, criteria (3), term $\circled{A}$ is zero.

This result in the differential equation,
\begin{align*}
    \dot{u}_b(j, R, \tau) &= - \sum_{i \in \mathcal{C}(j, R)} u_i (\tau),
\end{align*}
where we can rewrite the RHS as
\begin{align*}
    - \sum_{i \in \mathcal{C}(j, R)} u_i (\tau) 
    =
   u_b(j, R-1, \tau) - u_b(j, R-1, \tau) - \sum_{i \in \mathcal{C}(j, R)} u_i (\tau) 
    =
     u_b(j, R-1, \tau) - u_b(j, R, \tau).
\end{align*}
Together with the initial condition $u_b(j, R, 0)=1$, and the induction assumption that $ u_b(j, R-1, \tau)$ is minimal,  the differential equation $\dot{u}_b(j, R, \tau) =  u_b(j, R-1, \tau) - u_b(j, R, \tau)$ implies that  $u_b(j, R, \tau)$ is also minimal.

\end{proof}

\subsection{Additional Source-Star Graph Details} \label{app:source_star}


The index of the source-star graph \eqref{eq:ss_index} can be inverted using the equations 
\begin{align*}
    l &= 
    \begin{cases}
        \lceil i / d \rceil & \text{if $\beta  = 1$}\\
        \left\lceil \log_{\beta} \left( 1 + i \frac{\beta - 1}{d} \right) \right\rceil & \text{if $\beta > 1$}
    \end{cases} 
    &&  \text{and} &
    \nu_l =
    \begin{cases}
    0 & \text{if $i = 0$} \\
    i - d (l -1) & \text{if $\beta  = 1$} \\
      i - d \frac{\beta^l - 1}{\beta - 1} & \text{if $\beta > 1$}.
    \end{cases}
\end{align*}
Furthermore, $\lceil \nu_l / \beta^{l-1} \rceil \in \{1, \dots, d \}$ can be used to determine in which of the $d$ branches a node in a source-star graph resides on.

We can express out-degree normalized Laplacian of a source-star graph as a block matrix
\begin{equation} \label{eq:laplace:source-star}
\begingroup 
\setlength\arraycolsep{2pt}
\lapnormstar(d, \beta, \ell) = 
    \left(
    \begin{array}{ccccccc}
    1  & 
    \bigzero_{1 \times d} &
    \bigzero_{1 \times \beta d} &
    \bigzero_{1 \times \beta^2 d} &
    \cdots &
    \bigzero_{1 \times \beta^{\ell \text{-} 2}d} &
    \bigzero_{1 \times \beta^{\ell \text{-} 1}d}
    \\
  \text{-}d^{\text{-}1} \eye_{d \times 1} 
  & \eye_{d \times d} 
  & \bigzero_{d \times \beta d}
  & \bigzero_{d \times \beta^2 d}
  & \cdots  
  & \bigzero_{d \times \beta^{\ell \text{-} 2} d}
  & \bigzero_{d \times \beta^{\ell \text{-} 1} d}\\
  \bigzero_{\beta d \times 1} &
  \text{-}\beta^{\text{-}1} \eye_{\beta d \times d} &
  \eye_{\beta d \times \beta  d} &
  \bigzero_{\beta d \times \beta^2 d} &
  &
  \vdots &
  \vdots\\
  \bigzero_{\beta^2 d \times 1} &
  \bigzero_{\beta^2 d \times d} & 
  \text{-} \beta^{\text{-}1} \eye_{\beta^2 d \times \beta d}  &
  \eye_{\beta^2 d \times \beta^2 d} &
  &
  \vdots & 
  \vdots 
  \\
  \vdots & 
  \vdots & 
  & 
  &
  \ddots &
  \vdots &
  \vdots 
  \\
  \bigzero_{\beta^{\ell \text{-} 2} d \times 1} &
  \bigzero_{\beta^{\ell \text{-} 2 } d \times d} &
  \cdots &
  \cdots &
  \cdots &
  \eye_{\beta^{\ell \text{-} 2} d \times \beta^{\ell \text{-} 2} d} &
  \bigzero_{\beta^{\ell \text{-} 2} d \times \beta^{\ell \text{-} 1} d}\\
  \bigzero_{\beta^{\ell \text{-} 1} d \times 1} &
  \bigzero_{\beta^{\ell \text{-} 1} d \times d} &
  \cdots &
  \cdots &
  \cdots &
  \text{-} \beta^{\text{-} 1} \eye_{\beta^{\ell \text{-} 1} d \times \beta^{\ell \text{-} 2} d}&
  \bigzero_{\beta^{\ell \text{-} 1} d \times \beta^{\ell \text{-} 1} d}
\end{array}
\right),
\endgroup
\end{equation}
where $\bigzero_{a \times b} $ contains all zeros entries, $\eye_{a \times a} $ is the identity matrix and  $\eye_{\beta a \times a} $ is a matrix of $\beta$ row stacked $\eye_{a \times a} $ identity matrices.

\subsection{Source Star Graph Minimal Heat} \label{app:proof_least_contained_heat}
\begingroup
\def\thetheorem{\ref{lemma:least_contained_heat}}
\begin{lemma}

\end{lemma}
\addtocounter{theorem}{-1}
\endgroup
\begin{proof}

We first construct the source-star graph $\Gss = (V^{(\text{ss})}, E^{(\text{ss})})$ to have the property,  $|\Ncore^{(\text{ss})}(0, R)| \geq |\Ncore(j, R)|$ for all $R \leq \ell  - 1$, then we show that $u_b (j, R, \tau)  \geq u_b^{(\text{ss})} (0, R, \tau)$ hold for this graph.

Let $n = |V|$ be the number of nodes in $G$, $n_{\infty} = |V| - |\Ncore(j, \ell)|$ be the number of nodes not reachable from $j$.
To fully specify $\Gss$, $\beta$ and $n_{\text{isolated}}$ need to be determined, which we do as follows:
\begin{enumerate}
    \item Chose $\beta \in \{ 1, 2, \dots \}$ such that $|\Ncore(v, r)| \leq |\Ncore^{(ss)}(0, r)| = 1 + d_j \sum_{l=0}^{r} \beta^{l-1} \quad \forall \ r \in \{1, \dots, \ell\}$.
    \item Set  $n_{\text{isolated}} = n_{\infty}$. 
\end{enumerate}
From (1), $|\Ncore^{(\text{ss})}(0, R)| \geq |\Ncore(j, R)|$ is ensured for all $R \geq \ell -1$, and (2) ensures $n^{(\text{ss})} = |V^{(\text{ss})}| \geq n$, which we use next to show that $\Gss$ fulfils the second stipulation.

To do so, we construct an intermediate graph $\tilde{G} = (\tilde{V}, E)$, which share edge set with $G$ and $(j,R)$-balls $\Ncore(j, R)$.
The only difference is that $\tilde{V}$ contains an additional $(n^{(\text{ss})} - n)$ isolated nodes, $\mathcal{N}_{\infty} = \{k \}^{n^{(\text{ss})}-1}_{k=n} $, compared to $V$, i.e., $\tilde{V} = V \cup \mathcal{N}_{\infty} $, meaning that $|\tilde{V}|=n^{(\text{ss})}$.
We use these additional nodes to construct sets $\mathcal{M}(j, R) = \Ncore(j, R) \cup \{k \}_{k=n}^{n +\Delta n_R - 1}$, where $\Delta n_R =|\Ncore^{(\text{ss})}(0, R)| - |\Ncore(j, R)|$.
With this construction, $|\mathcal{M}(j, R)| =|\Ncore^{(\text{ss})}(0, R)|$ for all $R \leq \ell - 1$, and since the added isolated nodes do not participate in the diffusion process,we  have
\begin{align*}
    sum_{i \in\mathcal{M}(j, R)} u_i(\tau) = \sum_{i \in \Ncore(j, R)} u_i(\tau) = u_b (\tau, R) 
\end{align*}
for each $R \leq \ell$ and $\tau$.

Now, since $|\tilde{V}| = |V^{(\text{ss})}|$, and $|\mathcal{M}(j, R)| =|\Ncore^{(\text{ss})}(0, R)|$, there exists an isomorphism between $\tilde{V}$ and $V^{(\text{ss})}$ which maps each node in $\mathcal{M}(j, R)$ to a node in $\Ncore^{(\text{ss})}(0, R)$ for each $R \leq \ell -1$, including mapping $j \in \tilde{V}$ to  $0 \in V^{(\text{ss})}$.
Then, by Lemma \ref{lemma:heat_transport_equivalence} and the fact that the source-star graph is $(0, R)$-heat transport maximal over $\Gss$ for every $R \leq \ell - 1$, we have
\begin{align*}
     \sum_{i \in \mathcal{M}(j, R)} u_i(\tau) \geq \sum_{i \in \Ncore^{(\text{ss})}(0, R)} u_i(\tau) 
    = u_b^{(\text{ss})} (0, R, \tau)
\end{align*}
for all $R \leq \ell - 1$ and $\tau$, which completes the proof.

\end{proof}

\subsection{Source Star Graph Analytical Expression} \label{app:ss_analytical_expression}

\begingroup
\def\thetheorem{\ref{lemma:ss_analytical_expression}}
\begin{lemma}

\end{lemma}
\addtocounter{theorem}{-1}
\endgroup
\begin{proof}
\newcommand{\branchscalar}{d^{-1}}

For the proof, we use $\lapnorm = \lapnormstar(d, \beta, \ell)$.
From the definition of the matrix exponential \eqref{eq:exp_taylor} we have
\begin{align}
    \expmlap \mathbf{b} &= \sum_{k=0}^{\infty}\frac{(- \tau)^k }{k!} \lapnorm^k \mathbf{b},
    &&
    \label{eq:laplace:source-star:taylor2}
    [\expmlap \mathbf{b}]_i 
    = \sum_{k=0}^{\infty}\frac{(- \tau)^k }{k!} b_{\nu_l}(l, k)
\end{align}
where $ b_{\nu_l}(l, k) = [\lapnorm^k \mathbf{b}]_i$.

From the structure of the source-star graph Laplacian in \eqref{eq:laplace:source-star}, the recurrence relation $\lapnorm^k \mathbf{b} = \lapnorm (\lapnorm^{k-1} \mathbf{b})$
can be expressed as a system of recurrence relations using $b_{\nu_l}(l, k)$:
\begin{align}
    b_{0}(0, k) &= 1 && k \geq 0 \label{eq:recurrence1}\\
     b_{\nu_1}(1, k) &=  b_{\nu_1}(1, k - 1)- \branchscalar &&  k \geq 1 \label{eq:recurrence2}\\
    b_{\nu_l}(l , k) &= b_{\nu_l}(l , k - 1) - \beta^{-1}b_{\nu_l}(l -1 , k - 1) && k \geq 1,\ l \in \{2, \dots, \ell - 1\} \label{eq:recurrence3}\\
    b_{\nu_\ell}(\ell , k) &= -  \beta^{-1} b_{\nu_{\ell - 1}}(\ell -1, k - 1) && k \geq 1,\ l = \ell \label{eq:recurrence4}
\end{align}
with $\nu_l \in \{1, \dots, d \beta^{l-1} \}$ and the initial condition $b_{\nu_1}(l, 0)$ for $ l > 0 $.
The solution to the recurrence relation \eqref{eq:recurrence2} is  $b_{\nu_1}(1, k) = -k \branchscalar $.
Using this as initial condition to the two variable recurrence \eqref{eq:recurrence3}, we find the following solution, see Appendix \ref{sec:app:recursion},
\begin{equation}
    b_\nu(l, k) = 
    \begin{cases}
       \displaystyle \branchscalar \beta^{-(l-1)} \binom{k}{l} (-1)^{l} & \text{if $ l \leq k$} \\
        0  & \text{otherwise}, 
    \end{cases}
\end{equation}
for $l \in \{1, \dots, \ell - 1\}$, and consequently \eqref{eq:recurrence4} is solved by
\begin{equation}
    b_\nu(\ell, k) = 
    \begin{cases}
        \displaystyle \branchscalar \beta^{-(\ell-1)} \binom{k - 1}{\ell - 1} (-1)^{\ell} & \text{if $ \ell \leq k$} \\
        0  & \text{otherwise}.
    \end{cases}
\end{equation}

Inserting these solutions back into \eqref{eq:laplace:source-star:taylor2}, we get for $i=0$
\begin{align*}
    [\expmlap \mathbf{b}]_{0} &= \sum_{k=0}^{\infty}\frac{(- \tau)^k }{k!} b_\nu(0, k) 
    = \sum_{k=0}^{\infty}\frac{(- \tau)^k }{k!}
    = \exp(- \tau),
\end{align*}
and for $i \in \{1, \dots, n - d \beta^{\ell -1}  - 1 \}$ we get
\begin{align*}
    [\expmlap \mathbf{b}]_{i} &= \sum_{k=0}^{\infty}\frac{(- \tau)^k }{k!} b_{\nu_l}(l, k) \\
    &= \frac{\branchscalar}{\beta^{l-1}}  (-1)^{l} \sum_{k=l}^{\infty}\frac{(- \tau)^k }{k!}  \binom{k}{l} \\
    &= \frac{\branchscalar}{\beta^{l-1}} (-1)^{l}\sum_{k=0}^{\infty}\frac{(- \tau)^{k+l} }{(k + l)!}  \binom{k + l}{l} \\
    &= \frac{\branchscalar}{\beta^{l-1}} \frac{\tau^l}{l!} \sum_{k=0}^{\infty}\frac{(- \tau)^k }{k!} 
    = \frac{\branchscalar}{\beta^{l-1}} \frac{ \tau^l}{l!} \exp(- \tau),
\end{align*}
and for $i \in \{(\ell - 1 )d  + 1, \dots, \ell d \} $
\begin{align*}
    [\expmlap \mathbf{b}]_{i} &= \sum_{k=0}^{\infty}\frac{(- \tau)^k }{k!} b_{\nu_\ell}(\ell, k) \\
    &= \frac{\branchscalar}{\beta^{\ell-1}}  (-1)^{\ell} \sum_{k=\ell}^{\infty}\frac{(- \tau)^k }{k!}  \binom{k - 1}{\ell -1} \\
    &= \frac{\branchscalar}{\beta^{\ell-1}} (-1)^{\ell} \sum_{k=0}^{\infty}\frac{(- \tau)^{k + \ell} }{(k + \ell)!}  \binom{k + \ell - 1}{\ell -1} \\
    &= \frac{\branchscalar}{\beta^{\ell-1}}  \frac{\tau^{\ell}}{(\ell -1)! } \sum_{k=0}^{\infty}(- \tau)^{k}  \frac{(k + \ell - 1)!}{(k + \ell)! k!} \\
    &=\frac{\branchscalar}{\beta^{\ell-1}}  \frac{ \tau^{\ell}}{(\ell -1)! } \sum_{k=0}^{\infty} \frac{(- \tau)^{k} }{(k + \ell) k!} \\
    &= \frac{\branchscalar}{\beta^{\ell-1}}  \frac{\gamma(\ell, \tau)}{(\ell -1)!} 
    = \frac{\branchscalar}{\beta^{\ell-1}} P(\ell, \tau).
\end{align*}
For the two equalities in the last step, we first used the identity 
\begin{equation*}
    \gamma(\ell, \tau) = \tau^\ell \sum_{k=0}^{\infty} \frac{(- \tau)^{k} }{(k + \ell) k!}
\end{equation*}
from \citep[Eq. 2.3]{boyadzhiev2007polyexponentials}, where $\gamma(\ell, \tau)$ is the lower incomplete gamma function, and then the definition for the normalized lower incomplete gamma function \cite{NIST_DLMF8.2.E4}
\begin{equation*}
    P(\ell, \tau) = \frac{\gamma(\ell, \tau)}{\Gamma(\ell)}.
\end{equation*}

\end{proof}

\begingroup
\def\thetheorem{\ref{col:50percent_heat}}
\begin{corollary}

\end{corollary}
\addtocounter{theorem}{-1}
\endgroup
\begin{proof}
First we observe that
\begin{align*}
    \sum_{i \in \Ncore(0, R)} u_i(\tau) 
    &=
    [\expmlap \mathbf{b}]_0 + \sum_{l=1}^R \sum_{\nu_l = 1}^{d \beta^{l-1}} [\expmlap \mathbf{b}]_{l, \nu_l} \\
    &= \exp(-\tau) + \sum_{l=1}^R \sum_{\nu_l = 1}^{d \beta^{l-1}} \frac{d^{-1}}{\beta^{l-1}} \frac{\tau^l}{l!} \exp(-\tau) \\
    &= \exp(-\tau) + \sum_{l=1}^R\frac{\tau^l}{l!} \exp(-\tau) \\
    &= \sum_{l=0}^R \frac{\tau^l}{l!} \exp(-\tau) 
\end{align*}
Using the Taylor expansion of $\exp(\tau)$ we have
\begin{align*}
    \sum_{l=0}^R \frac{\tau^l}{l!} &= \sum_{l=0}^\infty \frac{\tau^l}{l!}  - \sum_{l={R+1}}^\infty \frac{\tau^l}{l!}  
    = \exp(\tau) - \sum_{l={R+1}}^\infty \frac{\tau^l}{l!}. 
\end{align*}
Thus
\begin{align*}
    \sum_{l=0}^R \frac{\tau^l}{l!} \exp(-\tau)  &= 
    1 - \exp(-\tau) \sum_{l={R+1}}^\infty \frac{\tau^l}{l!}
    = 1 - \exp(-\tau) \sum_{l={0}}^\infty \frac{\tau^{l + R + 1}}{(l + R +1)!}
    = \frac{\Gamma(R+1, \tau)}{\Gamma(R+1)}
\end{align*}
where the last equality follows from \cite{NIST_DLMF8.7.E3}.

\end{proof}

\newpage  

\subsection{Recursion Relation} \label{sec:app:recursion}



\begin{lemma}
The recurrence relation 
\begin{align}
    F(n, m) &= F(n, m - 1) - bF(n - 1, m - 1) && 1 \leq n \leq m \label{eq:rec_relation}\\
    F(0, m) &= - a m  && 1 \leq m \label{eq:rec_rel_minit}\\
    F(n, 0) &= 0 && 1 \leq n  \label{eq:rec_rel_ninit}\\
    F(0, 0) &= 0,  \label{eq:rec_rel_0init}
\end{align}
with $a \in \Rreal_{+}$ and $b \in \Rreal_{+}$, 
is solved by 
\begin{equation} \label{eq:rec_relation_solution}
    F(n, m) = 
    \begin{cases}
        a b^n \binom{m}{n + 1} (-1)^{n +1} & \text{if $n < m$} \\
        0  & \text{otherwise}.
    \end{cases}
\end{equation}
\end{lemma}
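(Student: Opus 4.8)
The plan is to verify directly that the closed form on the right of \eqref{eq:rec_relation_solution} satisfies all the defining relations \eqref{eq:rec_relation}--\eqref{eq:rec_rel_0init}, and then to invoke uniqueness of the solution. Write $\hat F(n,m)$ for the proposed value, adopting the convention $F(n,m)=0$ for $n>m$ (this is exactly the ``otherwise'' branch of \eqref{eq:rec_relation_solution}, and it is the one forced by the application in Lemma~\ref{lemma:ss_analytical_expression}, where $[\lapnorm^{k}\mathbf{b}]_i$ necessarily vanishes when node $i$ lies more than $k$ hops from the source). First I would dispatch the three initial conditions, which are immediate: $\hat F(0,0)$ and $\hat F(n,0)$ with $n\ge 1$ lie in the regime $n\ge m$ and hence equal $0$, while for $m\ge 1$ one has $0<m$ and $\hat F(0,m)=a b^{0}\binom{m}{1}(-1)^{1}=-am$, as required.

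The substance is the recurrence \eqref{eq:rec_relation}. Fixing $1\le n\le m$, I would split into the three cases $n\le m-2$, $n=m-1$, and $n=m$. For $n\le m-2$ both $\hat F(n,m-1)$ and $\hat F(n-1,m-1)$ are in the nonzero regime, so the combination $\hat F(n,m-1)-b\hat F(n-1,m-1)$ collapses, via Pascal's identity $\binom{m-1}{n+1}+\binom{m-1}{n}=\binom{m}{n+1}$, to $ab^{n}(-1)^{n+1}\binom{m}{n+1}=\hat F(n,m)$ (using $n<m$). For $n=m-1$ (so $m\ge 2$) the first term vanishes because $\hat F(m-1,m-1)=0$, and the surviving term $-b\hat F(m-2,m-1)$ evaluates to $ab^{m-1}(-1)^{m}=\hat F(m-1,m)$. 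For $n=m$ both terms on the right vanish, and so does $\hat F(m,m)$. This is entirely elementary once Pascal's rule is in hand, so I would not belabour it.

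Finally, for uniqueness: reading \eqref{eq:rec_relation} together with the boundary data \eqref{eq:rec_rel_minit}--\eqref{eq:rec_rel_0init} and the convention above, a straightforward induction on $m$ shows that each row $\{F(\cdot,m)\}$ is forced by the previous one --- with the diagonal entry pinned down by $F(m,m)=F(m,m-1)-bF(m-1,m-1)=0$ --- so the solution is unique and must coincide with $\hat F$, giving \eqref{eq:rec_relation_solution}. The one delicate point, and the only mild obstacle, is exactly this boundary behaviour at $n=m$: the recurrence as written refers to $F(m,m-1)$, which sits outside the stated region $1\le n\le m$, so I would make the convention $F(n,m)=0$ for $n>m$ explicit at the outset; once that is done, both the verification above and the induction run cleanly.
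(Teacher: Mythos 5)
Your proposal is correct and follows essentially the same route as the paper's proof: both verify the closed form against the boundary data and then check the recurrence by splitting on $n\le m-2$, $n=m-1$, $n=m$ (equivalently $n\ge m$) using Pascal's identity, with your version phrased as direct verification plus induction on $m$ rather than as a single induction. The added uniqueness discussion and the explicit convention $F(n,m)=0$ for $n>m$ are harmless extras beyond what the lemma (which only asserts that the formula solves the recurrence) requires.
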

\begin{proof}
Proof by induction.

\textbf{Base case} for $n=0$, \eqref{eq:rec_relation_solution} gives
\begin{align*}
    F(0, m) &= \begin{cases}
    a b^0 \binom{m}{1} (-1)^{1} & \text{if $m > 0$} \\
    0  & \text{otherwise}.
    \end{cases}
    = \begin{cases}
        - am & \text{if $m > 0$} \\
        0  & \text{otherwise},
    \end{cases}\\
\end{align*}
which agrees with \eqref{eq:rec_rel_minit} and \eqref{eq:rec_rel_0init}.
For $m=0$ \eqref{eq:rec_relation_solution} gives $F(n, 0) = 0$, which agrees with \eqref{eq:rec_rel_ninit} and \eqref{eq:rec_rel_0init}.

\textbf{Induction step}:
The induction assumptions are:
\begin{align*}
    F(n, m-1) &= 
    \begin{cases}
        a b^n \binom{m-1}{n + 1} (-1)^{n +1} & \text{if $n < m - 1$} \\
        0  & \text{otherwise}.
    \end{cases}\\
    F(n-1, m-1) &= 
    \begin{cases}
        a b^{n-1} \binom{m-1}{n } (-1)^{n} & \text{if $n - 1 < m - 1$} \\
        0  & \text{otherwise}.
    \end{cases}
\end{align*}

If $n \geq m$, both \eqref{eq:rec_relation} and \eqref{eq:rec_relation_solution} is zero, so this case it OK.
If $n = m - 1$, we have, starting from \eqref{eq:rec_relation},
\begin{align*}
    F(n, m) 
    &= F(n, m - 1) - bF(n - 1, m - 1)
    = 0 - b  a b^{n-1} \binom{m-1}{n } (-1)^{n}
    = a b^n (-1)^{n+1}\binom{m}{n+1},
\end{align*}
which equals \eqref{eq:rec_relation_solution}.
Note that we in the last equality used that $\binom{m-1}{n } = \binom{n}{n} = \binom{m}{n+1} = 1$ when $n = m - 1$.
The final case is $n < m - 1$, for which
\begin{align*}
    F(n, m) 
    &= F(n, m - 1) - bF(n - 1, m - 1)\\
    &=  b^n \binom{m-1}{n + 1} (-1)^{n +1} - b a b^{n-1} \binom{m-1}{n } (-1)^{n} \\
    &= a b^n (-1)^{n +1} \left( \binom{m-1}{n + 1} + \binom{m-1}{n}   \right) \\
    &= a b^n (-1)^{n +1} \binom{m}{n + 1},
\end{align*}
where the last step follows from Pascal's triangle, and which again equals \eqref{eq:rec_relation_solution}.

\end{proof}




\section{Additional Results} \label{app:additional_results}

\subsection{Scalability} \label{app:scalability}
Here, we show the results from the scalability experiment described in Section \ref{sec:scalability} for $5$ edges per node in the Barabási-Albert graph.

\begin{figure}[htp]
    \centering
    \includegraphics[width=0.5\linewidth]{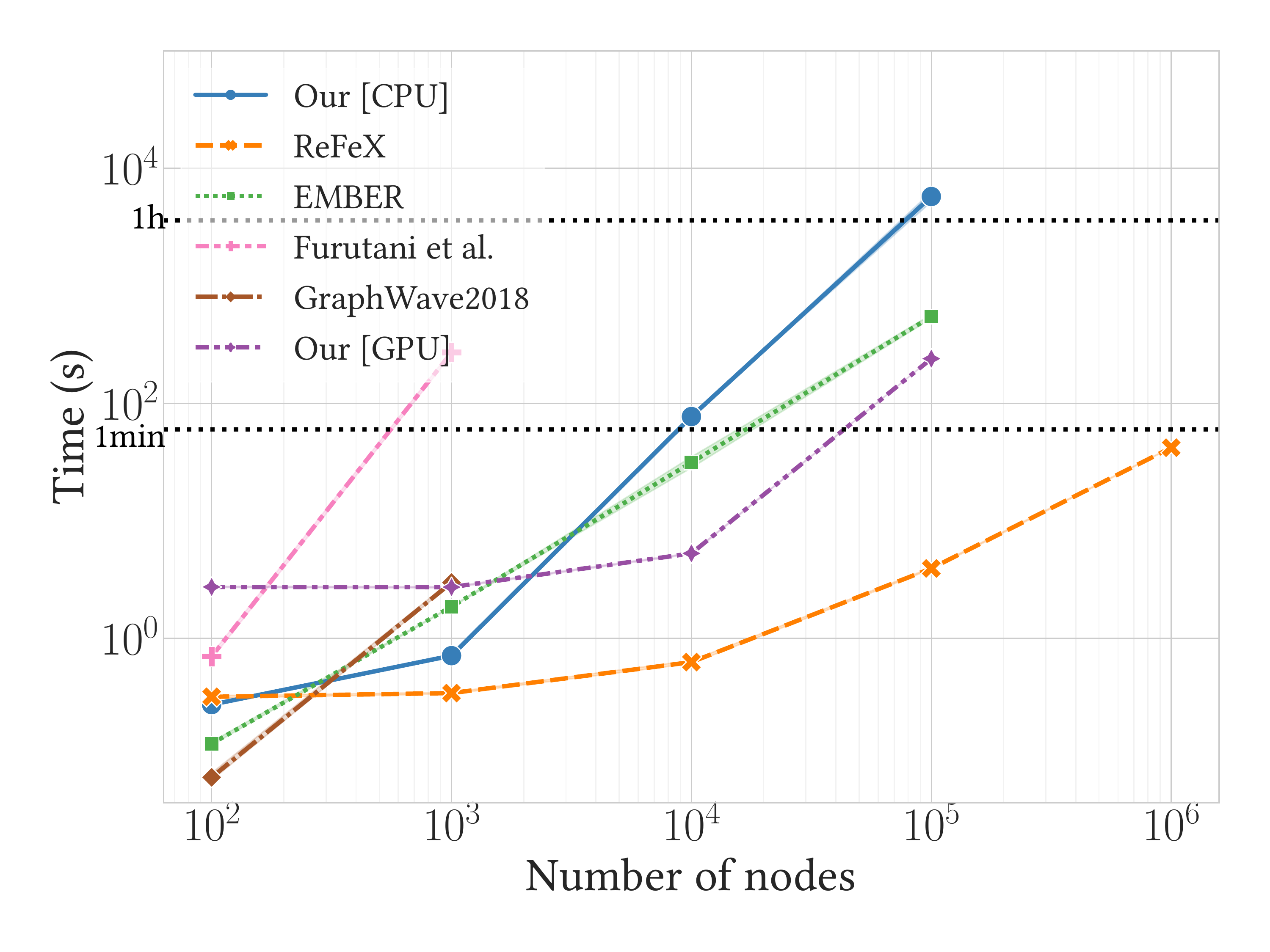}
        \Description{%
    A plot showing the scalability performance of six different embedding methods on a Barabási-Albert graph  with five edge per node. The y axes show the mean computation time in seconds, and the x axes the number of nodes in the graphs.
    }
    \caption{%
    The scalability experiment results using $5$ edges per node in a Barabási-Albert graph. The lines show the mean runtime in second vs number of nodes in the graph. Standard deviations are shown as shaded areas around each line, though these are hard to distinguish due to being insignificantly small.%
    }
    \label{fig:app:scale:results}
\end{figure}

\subsection{Network Alignment} \label{app:alignment}

Here, we show additional results from the network alignment experiment on the Enron dataset.
In Figure \ref{fig:alignment_enron_all_internal_undir}, when the Enron graphs are treated as undirected.
As visible, the accuracies are generally lower than when edge directions are used, as is to be expected.

In Figure \ref{fig:alignment_enron_all_internalk10}, the top-10 accuracies for the same experiment is shown. Qualitatively, the results are similar to the top-1 results.
Again, the increased performance of \textsc{ReFeX} and the in and out degree baseline increases much more in the weighted setting compared to \textsc{Digraphwave} since these methods includes the structure of both the weighted and unweighted graph in their embeddings.

\begin{figure}[htp]
    \begin{subfigure}{0.46\textwidth}
    \centering
    \includegraphics[width=\linewidth]{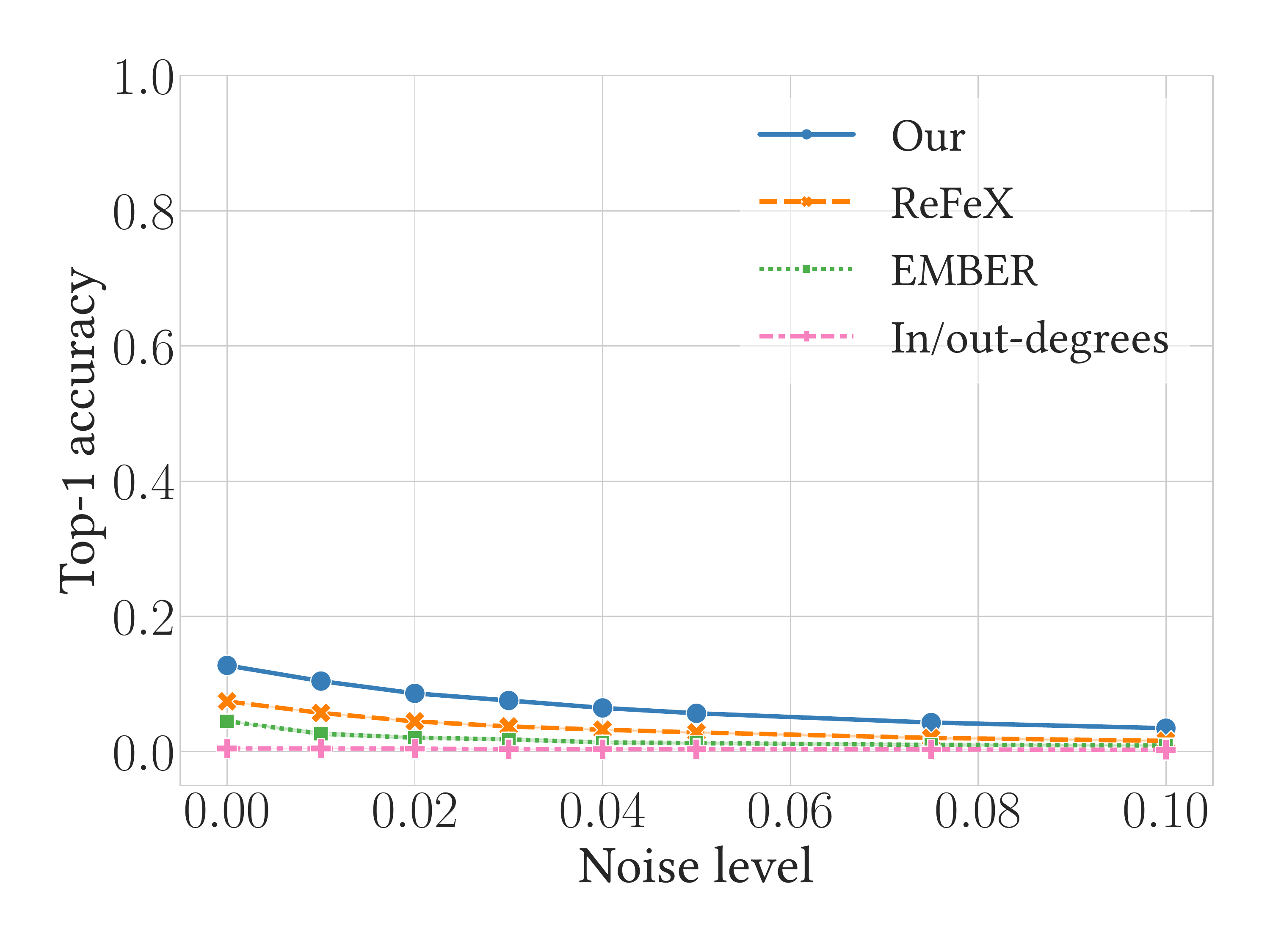}
    \caption{Undirected and unweighted}
    \end{subfigure}
    ~
    \begin{subfigure}{0.46\textwidth}
    \centering
    \includegraphics[width=\linewidth]{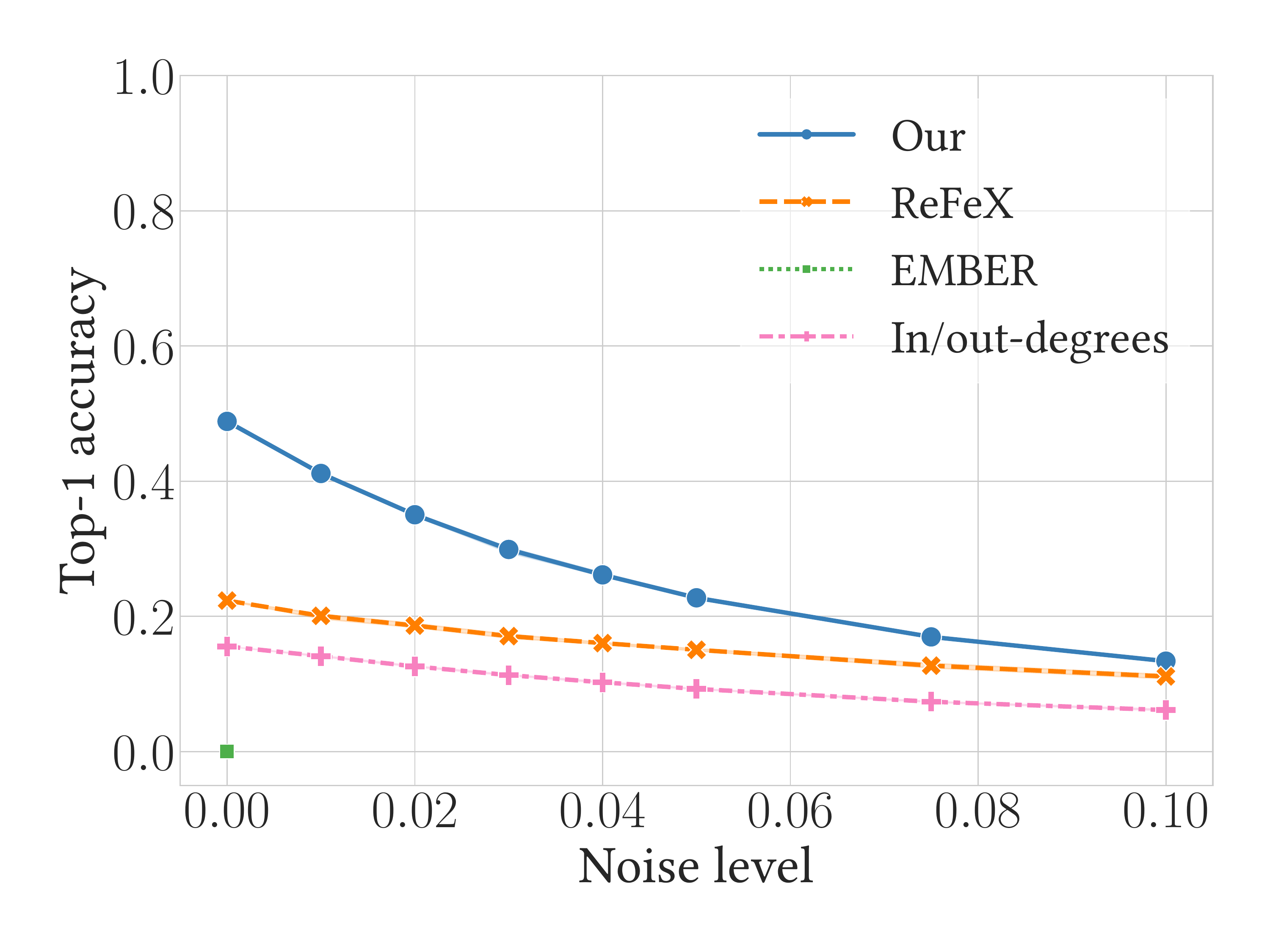}
    \caption{Undirected and weighted}
    \end{subfigure}
    \Description{Two plots placed side-by-side containing the top-1 accuracy results of the network alignment experiments on the Enron dataset. }
    \caption{Top-1 accuracy for the network alignment task on the Enron dataset when edge directions are ignored.}
    \label{fig:alignment_enron_all_internal_undir}
\end{figure}

\begin{figure}[htp]
    \begin{subfigure}{0.46\textwidth}
    \centering
    \includegraphics[width=\linewidth]{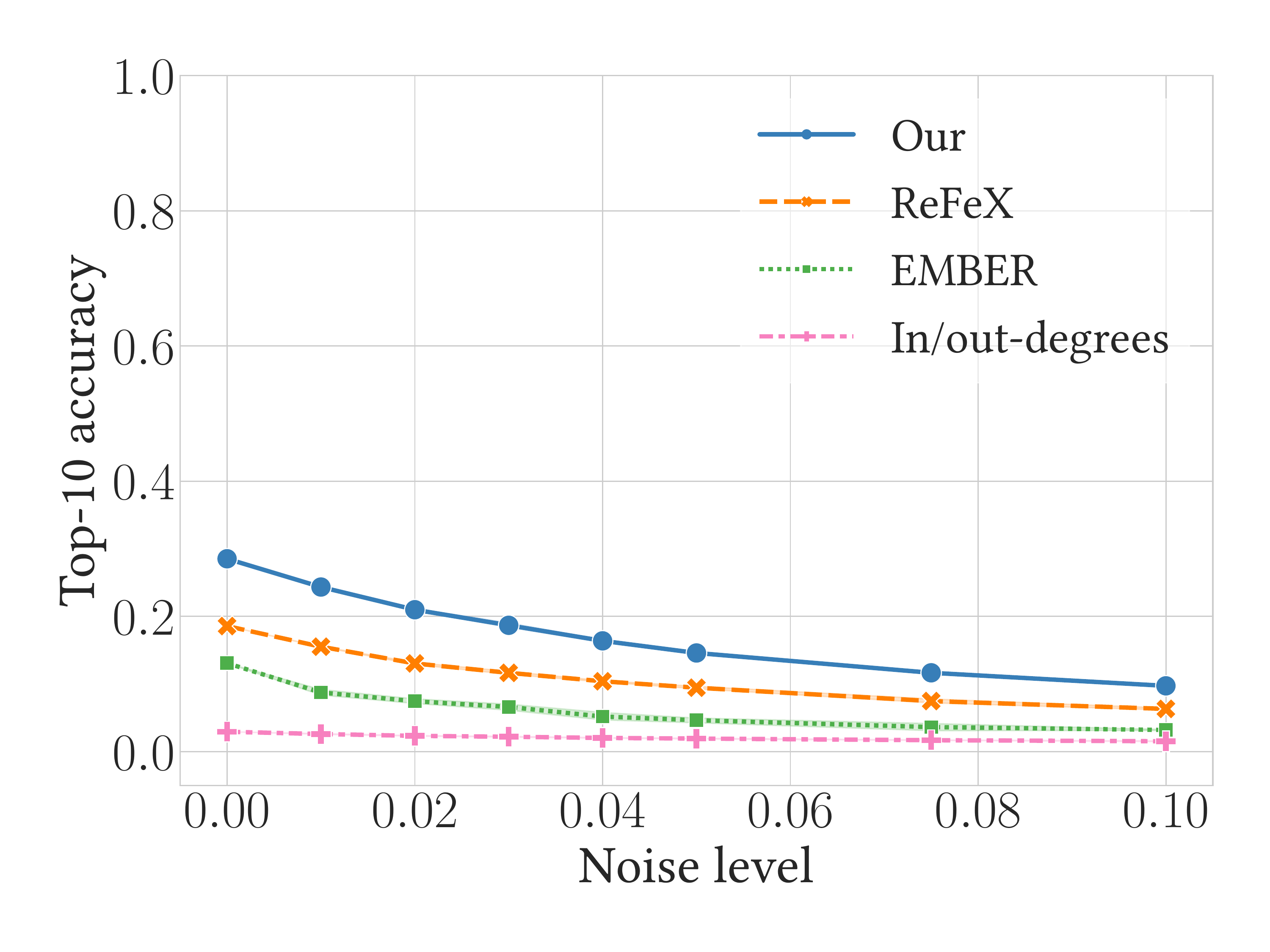}
    \caption{Undirected and unweighted}
    \end{subfigure}
    ~
    \begin{subfigure}{0.46\textwidth}
    \centering
    \includegraphics[width=\linewidth]{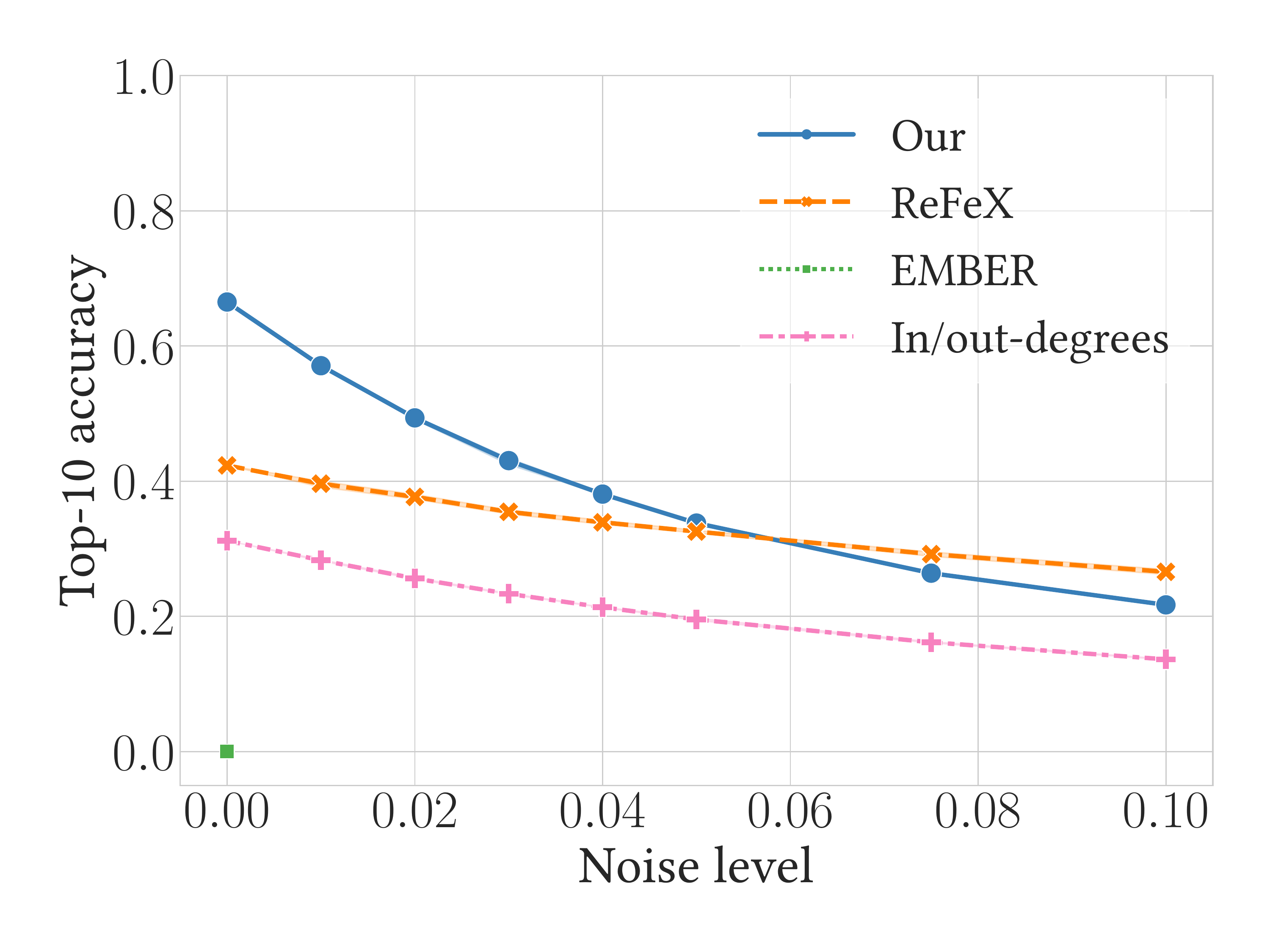}
    \caption{Undirected and weighted}
    \end{subfigure}
    \\
    \begin{subfigure}{0.46\textwidth}
    \centering
    \includegraphics[width=\linewidth]{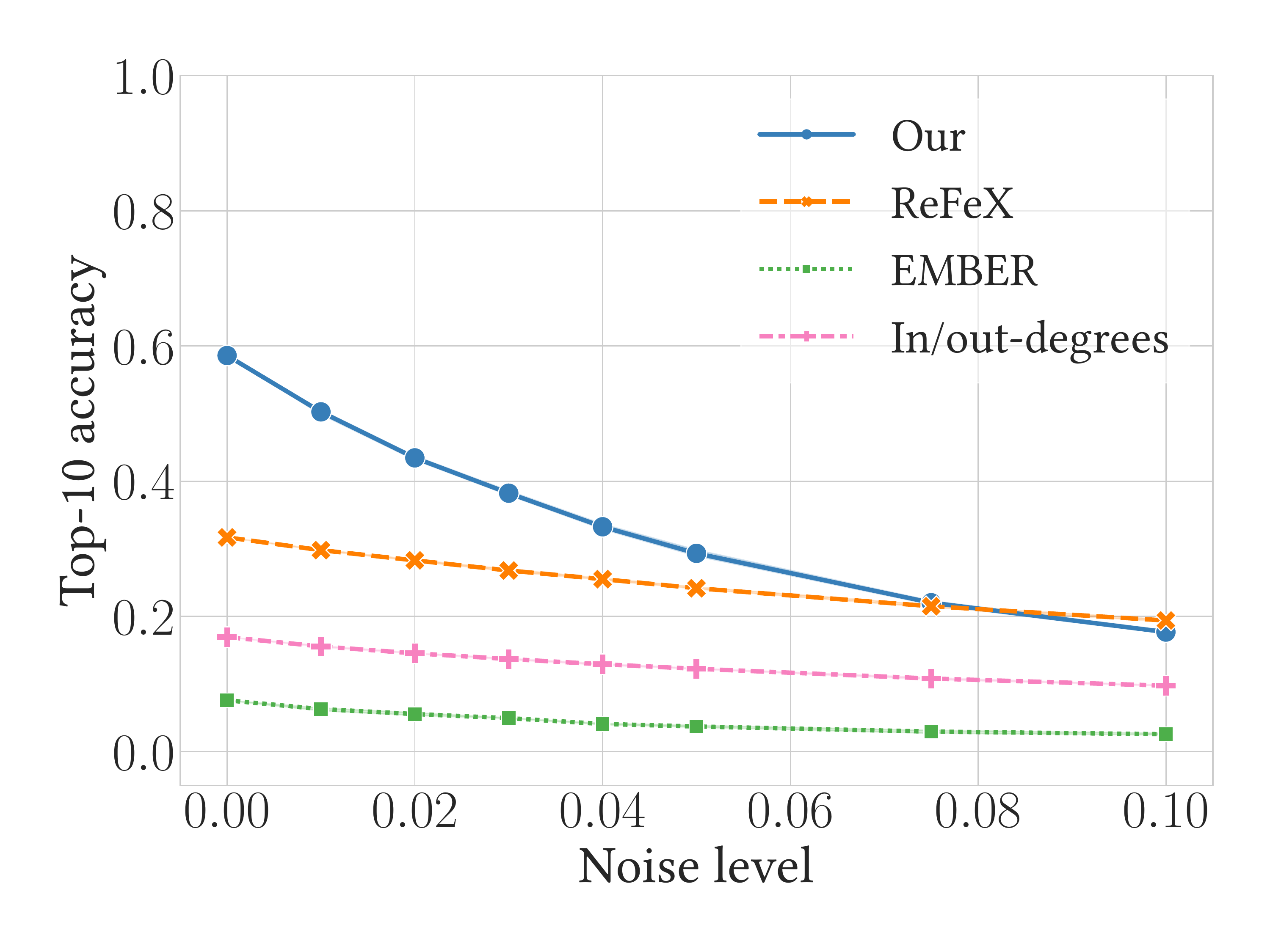}
    \caption{Directed and unweighted}
    \end{subfigure}
    ~
    \begin{subfigure}{0.46\textwidth}
    \centering
    \includegraphics[width=\linewidth]{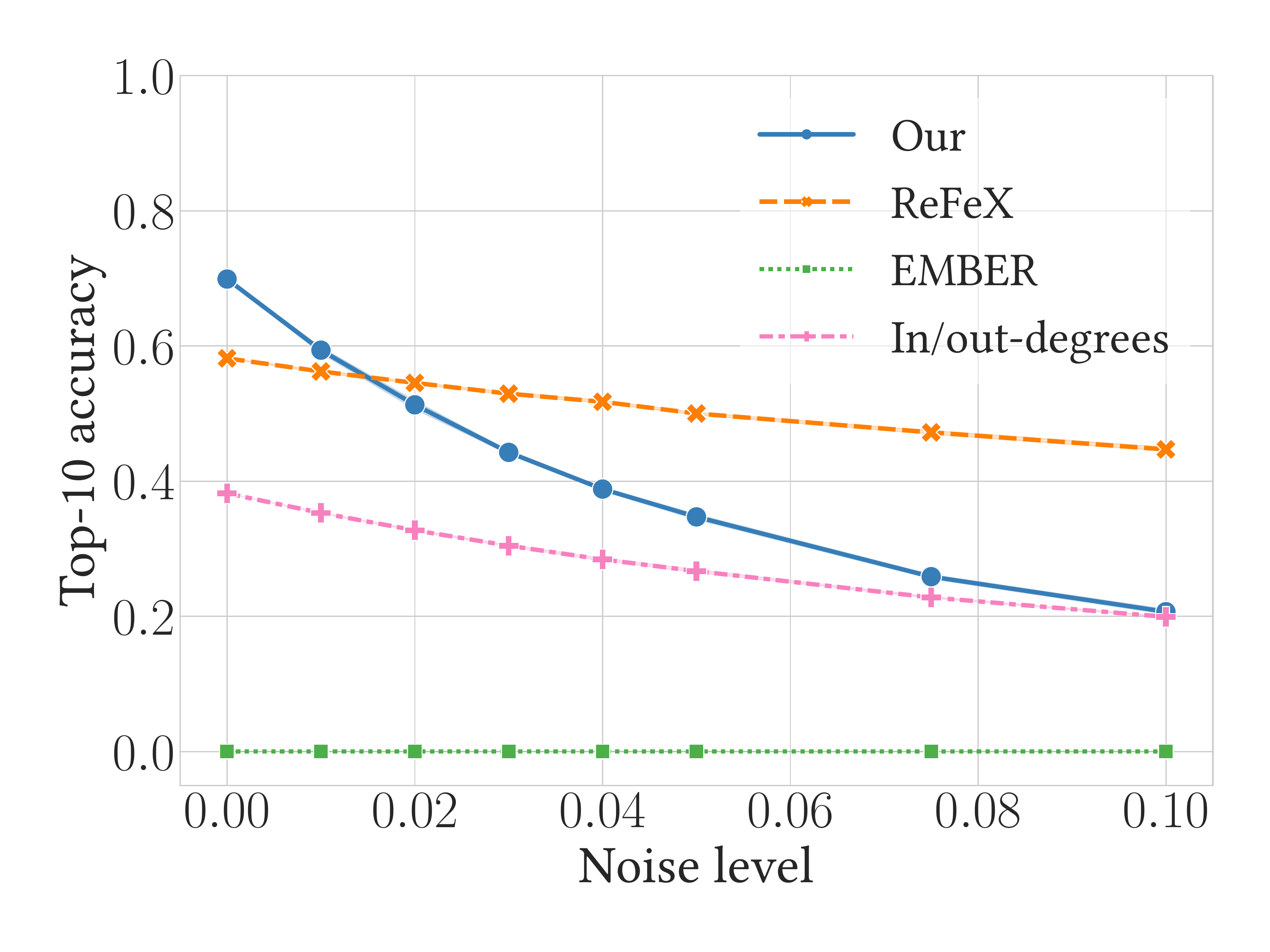}
    \caption{Directed and weighted}
    \end{subfigure}
    \Description{Four plots placed in a two by two grid, containing the top-10 accuracy results of the network alignment experiments on the Enron dataset. }
    \caption{Top-10 accuracies for the network alignment experiment on the Enron dataset.}
    \label{fig:alignment_enron_all_internalk10}
\end{figure}

\end{document}